\newtheorem{theorem}{Theorem}[section]
\newtheorem{lemma}[theorem]{Lemma}
\newtheorem{remark}[theorem]{Remark}
\theoremstyle{definition}
\numberwithin{equation}{section}
\numberwithin{figure}{section}
\numberwithin{table}{section}
\newcommand{\wutilde}[1]{\vrule depth 0pt width 0pt%
{\raise0.8pt\hbox{$\smash{{\mathop{#1} \limits_{\displaystyle\widetilde{}}}}$}}}
\newcommand{\wuhat}[1]{\vrule depth 0pt width 0pt%
{\raise0.6pt\hbox{$\smash{{\mathop{#1} \limits_{\displaystyle\widehat{}}}}$}}}
\newcommand{\al}{\alpha}
\newcommand{\be}{\beta}
\newcommand{\de}{\delta}
\newcommand{\ga}{\gamma}
\newcommand{\si}{\sigma}
\newcommand{\la}{\lambda}
\newcommand{\La}{\Lambda}
\newcommand{\ep}{\epsilon}
\newcommand{\ka}{\kappa}
\newcommand{\PDE}{P$\Delta$E}
\newcommand{\ODE}{O$\Delta$E}
\newcommand{\bbZ}{\mathbb{Z}}
\newcommand{\bbC}{\mathbb{C}}
\long\def\@makecaption#1#2{
 \vskip 10pt
 \setbox\@tempboxa\hbox{#1. #2}
 \ifdim \wd\@tempboxa >\hsize #1. #2\par \else \hbox
to\hsize{\hfil\box\@tempboxa\hfil}
 \fi}
\begin{document}
\allowdisplaybreaks

%%%%%%%%%%%%%%%%%%%%%%%%%%%%%%%%%%%%%%%%%%%%%%%%%%%%%%%
%%% title
%%%%%%%%%%%%%%%%%%%%%%%%%%%%%%%%%%%%%%%%%%%%%%%%%%%%%%%
\title[]{A higher-order generalization of an $A_4^{(1)}$-surface type $q$-Painlev\'e equation with $\widetilde{W}\left((A_{2N}\rtimes A_1)^{(1)}\times A_1^{(1)}\right)$ symmetry}
\author{Nobutaka Nakazono}
\address{Institute of Engineering, Tokyo University of Agriculture and Technology, 2-24-16 Nakacho Koganei, Tokyo 184-8588, Japan.}
\email{nakazono@go.tuat.ac.jp}
%%%%%%%%%%%%%%%%%%%%%%%%%%%%%%%%%%%%%%%%%%%%%%%%%%%%%%%
%% Abstract
%%%%%%%%%%%%%%%%%%%%%%%%%%%%%%%%%%%%%%%%%%%%%%%%%%%%%%%
\begin{abstract}
Recently, a birational representation of an extended affine Weyl group of $(A_{2N}\rtimes A_1)^{(1)}$-type, which gives a higher-order generalization of an $A_4^{(1)}$-surface type $q$-Painlev\'e equation, was obtained. In this paper, we extend it to a birational representation of an extended affine Weyl group of $(A_{2N}\rtimes A_1)^{(1)}\times A_1^{(1)}$-type. Moreover, we provide conjectures on periodic reductions from systems of P$\Delta$Es with the CAC property to Painlev\'e type $q$-O$\Delta$Es.
\end{abstract}
%%%%%%%%%%%%%%%%%%%%%%%%%%%%%%%%%%%%%%%%%%%%%%%%%%%%%%%
%%%%%%%%%%%%%%%%%%%%%%%%%%%%%%%%%%%%%%%%%%%%%%%%%%%%%%%

\subjclass[2020]{
33E17, %Painlev\'e-type functions
%33E30, % Other functions coming from differential, difference and integral equations
%34M55,% Painlev'e and other special ordinary differential equations in the complex domain; classification, hierarchies
35Q53, %KdV equations (Korteweg-de Vries equations)
37K10, % Completely integrable infinite-dimensional Hamiltonian and Lagrangian systems, integration methods, integrability tests, integrable hierarchies (KdV, KP, Toda, etc.)
39A13, % Difference equations, scaling ($q$-differences) 
39A14, % Partial difference equations
39A23, % Periodic solutions of difference equations
39A36, % Integrable difference and lattice equations; integrability tests
39A45% Difference equations in the complex domain
}
\keywords{
discrete integrable systems;
$q$-Painlev\'e system;
integrable partial difference equation;
affine Weyl group;
consistency around a cube property
}
\maketitle

%%%%%%%%%%%%%%%%%%%%%%%%%%%%%%%%%%%%%%%%%%%%%%%%%%%%%%%%%%%
%% 1. Introduction
%%%%%%%%%%%%%%%%%%%%%%%%%%%%%%%%%%%%%%%%%%%%%%%%%%%%%%%%%%%
\section{Introduction}\label{Introduction}
Fix an integer $N>0$.
In this study, we focus on the symmetry of the following $2N$th-order $q$-{\ODE}\cite{nakazono2023higerA4A6}:
\begin{equation}\label{eqn:intro_dP_odd}
 \text{$q$P$^{(2N)}(A_4^{(1)})$ :\hspace{0.5em}}
 \dfrac{{a_i}^{2N+1}\Big(\overline{F}_iF_i-1\Big)}{{a_i}^{2N+1}-c^{2(-1)^i}F_i}
 =\begin{cases}
 ~ \dfrac{\overline{F}_{i+1}F_{i+1}-1}{1-{a_{i+1}}^{2N+1}c^{2(-1)^i}\overline{F}_{i+1}}
 &\text{if } i=1,\dots,2N-1,\\[1.5em]
 ~ \dfrac{\left(\displaystyle\prod_{k=1}^{2N}{a_k}^k\right)bc}{\displaystyle\prod_{k=1}^N F_{2k-1}}&\text{if } i=2N,
 \end{cases}
\end{equation}
where $b\in\bbC$ is an independent variable, $F_i=F_i(b)\in\bbC$, $i=1,\dots,2N$, are dependent variables, and $a_1,\dots,a_{2N},c,p\in \bbC$ are parameters.
The symbol $\overline{\rule{0em}{0.5em}~\,}$ denotes discrete-time evolution:
\begin{equation}
 \overline{b}=pb,\quad
 \overline{F}_i=F_i(pb),~i=1,\dots,2N,\quad
 \overline{a}_j=a_j,~j=1,\dots,2N,\quad
 \overline{c}=c^{-1},
\end{equation}
where $p\in\bbC$ is a constant.
When $N=1$, the system \eqref{eqn:intro_dP_odd} is equivalent to a $q$-Painlev\'e V equation of $A_4^{(1)}$-surface type \cite{TGCR2004:MR2058894,JNS2016:MR3584386,KN2015:MR3340349,NakazonoN2016:MR3503803}.
(See \cite{nakazono2023higerA4A6} for further details.)

%%%%%%%%%%%%%%%%%%%%%%%%%%%%%%
%% Remark
%%%%%%%%%%%%%%%%%%%%%%%%%%%%%%
\begin{remark}
For a $q$-difference equation, the symbol ``$q$" is commonly used for its shift parameter.
However, in this paper, the symbol ``$q$" is used in the later arguments, and the relations between the symbol ``$p$" in the system \eqref{eqn:intro_dP_odd} and the symbol ``$q$" in later arguments will be given.
Therefore, to avoid confusion, we use the symbol ``$p$" for the system \eqref{eqn:intro_dP_odd} instead of the symbol ``$q$".
\end{remark}
%%%%%%%%%%%%%%%%%%%%%%%%%%%%%%
%%%%%%%%%%%%%%%%%%%%%%%%%%%%%%

The following property of the system \eqref{eqn:intro_dP_odd} is known.

%%%%%%%%%%%%%%%%%%%%%%%%%%%%%%
%% Theorem
%%%%%%%%%%%%%%%%%%%%%%%%%%%%%%
\begin{theorem}[\cite{nakazono2023higerA4A6}]\label{theorem:symmetry_A2NA1}
The system \eqref{eqn:intro_dP_odd} has an extended affine Weyl group symmetry of $(A_{2N}\rtimes A_1)^{(1)}$-type, that is,
it can be derived from a birational representation of an extended affine Weyl group of $(A_{2N}\rtimes A_1)^{(1)}$-type, denoted by $\widetilde{W}\left((A_{2N}\rtimes A_1)^{(1)}\right)$.
\end{theorem}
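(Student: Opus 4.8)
The plan is to prove the statement constructively, by exhibiting an explicit birational action of the generators of $\widetilde{W}\left((A_{2N}\rtimes A_1)^{(1)}\right)$ on the field of rational functions in $F_1,\dots,F_{2N}$ and the parameters $a_1,\dots,a_{2N},b,c$, by checking that this action realizes the defining relations of the group, and then by recognizing the evolution $\overline{\phantom{x}}$ of \eqref{eqn:intro_dP_odd} as the action of one distinguished element of that group. This is the Cremona-representation strategy standard for discrete Painlev\'e equations: the symmetry group is not located inside a pre-existing automorphism group but is \emph{built} as a group of birational maps, after which the equation appears as a translation (or glide) along the lattice.

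First I would fix a multiplicative realization of the affine root lattice, attaching to each simple root a monomial in $a_1,\dots,a_{2N},c$ so that the null root is identified with a shift parameter $q$; the relation between this $q$ and the $p$ of \eqref{eqn:intro_dP_odd} promised in the Remark would then be read off at the end. The generators to be represented are the simple reflections $s_0,\dots,s_{2N}$ of the affine $A_{2N}^{(1)}$ part, a diagram automorphism $\pi$ implementing the cyclic rotation of the affine Dynkin cycle (so $\pi^{2N+1}=\mathrm{id}$ and $\pi s_i\pi^{-1}=s_{i+1}$ with indices mod $2N+1$), and the additional reflection $\sigma$ generating the $A_1$ factor, which acts on the $A_{2N}$ data through the order-two flip $\iota:i\mapsto 2N+1-i$ of the diagram. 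On the parameters I would take the usual root-system action ($s_i$ inverts $a_i$ and multiplies the neighbouring $a_{i\pm1}$ by $a_i$), while on the variables each $s_i$ acts as a linear-fractional transformation supported on $F_{i-1},F_i,F_{i+1}$; the reflection $\sigma$ would invert $c$, matching $\overline c=c^{-1}$.

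The core of the argument, and the step I expect to be the main obstacle, is verifying that these birational maps satisfy \emph{all} the defining relations: the Coxeter relations among $s_0,\dots,s_{2N}$ (each an involution, commutation of non-adjacent pairs, and the braid relations $s_is_{i+1}s_i=s_{i+1}s_is_{i+1}$ of adjacent ones), together with the semidirect-product relations $\sigma^2=\mathrm{id}$, $\sigma s_i\sigma=s_{\iota(i)}$, $\pi^{2N+1}=\mathrm{id}$, $\pi s_i\pi^{-1}=s_{i+1}$, and the dihedral compatibility $\sigma\pi\sigma=\pi^{-1}$ that encodes the action of $A_1$ on $A_{2N}$. On parameters these reduce to monomial identities, but on the $F_j$ the braid relation is a genuine identity of rational functions, and the presence of the folding $\iota$ forces these identities to be checked in a form compatible with $\sigma$; this coupling is where sign and index errors are most likely, so I would organize the verification so that the $\sigma$-conjugate of each checked relation is obtained for free.

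Finally, with the representation established, I would isolate the evolution. The map $\overline{\phantom{x}}$ fixes $a_1,\dots,a_{2N}$, sends $b\mapsto pb$, and inverts $c$; since it inverts $c$ it cannot be a pure lattice translation but should be a glide, i.e.\ a product $T$ of the $s_i$ with $\pi$ and one copy of $\sigma$, whose square is a genuine translation (consistent with $\overline{\overline c}=c$). I would produce such a $T$ explicitly, compute its action on the $F_j$ by composing the generator actions, and match the resulting recurrence against the two branches of \eqref{eqn:intro_dP_odd}: the generic relations for $i=1,\dots,2N-1$ and the closing relation at $i=2N$ carrying the factor $\left(\prod_{k=1}^{2N}a_k^{\,k}\right)bc\big/\prod_{k=1}^{N}F_{2k-1}$. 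The nontrivial computations here are the explicit form of $T$ and the check that it reproduces exactly these two cases, which simultaneously pins down $p$ as the monomial in $q$ (and hence in the $a_j,c$) governing the shift of $b$.
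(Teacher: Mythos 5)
Your overall strategy---define birational actions of the generators on the parameters and the $F_j$, verify the defining relations, then realize $\overline{\rule{0em}{0.5em}~\,}$ as a distinguished group element---is indeed the strategy of \cite{nakazono2023higerA4A6}, as reviewed in \S\ref{subsection:A2NA1_review}. However, you have fixed the wrong presentation of the group, and this breaks your decisive final step. You impose $\pi^{2N+1}=\mathrm{id}$ and let a single involution $\sigma$ generate the ``$A_1$ factor'', so your group is the extension of $W(A_{2N}^{(1)})$ by the \emph{finite} dihedral group of the Dynkin diagram. In the theorem, by contrast, the $A_1$ part is an \emph{affine} Weyl group: by Remark~\ref{remark:symmetry_A2NA1} and the fundamental relations~\eqref{eqns:fundamental_A2NA1}, one needs two involutions $w_0,w_1$ with $(w_0w_1)^\infty=1$, and correspondingly the rotation in the actual representation satisfies $\pi^\infty=1$ (see \eqref{eqns:relations_si_w1_pi}), precisely because $\pi(b)=qb$ and $\pi(c)=c^{-1}$. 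The time evolution is then the pure power $T_0=\pi^{-2N-1}$ (Remark~\ref{remark:T_hT}), involving no $s_i$ and no ``glide'' at all.

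The relation $\pi^{2N+1}=\mathrm{id}$ is not a harmless normalization: it makes the element you need nonexistent. The evolution must fix every $a_i$, send $b\mapsto pb$, $c\mapsto c^{-1}$, and have infinite order. In your group $G=W(A_{2N}^{(1)})\rtimes D$, with $D$ the finite dihedral group, the lattice of translations is a normal abelian subgroup of finite index, so every infinite-order $g\in G$ has a power $g^M$ equal to a nontrivial lattice translation $t_\mu$, $\mu\neq 0$. Under the ``usual root-system action'' on the parameters, which you adopt, $t_\mu$ multiplies at least one $a_i$ by a nontrivial power of $q$; hence $g^M$, and therefore $g$, cannot fix all the $a_i$ when $q$ is not a root of unity. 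In particular your intended $T$---a glide that fixes all $a_i$ yet whose square is ``a genuine translation''---is self-contradictory. Moreover, even if some element did reproduce the dynamics, you would have represented a proper quotient of $\widetilde{W}\left((A_{2N}\rtimes A_1)^{(1)}\right)$ (the quotient by the normal subgroup $\langle\pi^{2N+1}\rangle$), not the group named in the statement. The repair is to drop $\pi^{2N+1}=\mathrm{id}$, let $\pi$ act on the parameters by $\pi(a_j)=a_{j+1}$, $\pi(b)=qb$, $\pi(c)=c^{-1}$ so that $\pi$ has infinite order, and set $w_0=\pi^2w_1$, $r=\pi w_1$; then $\langle w_0,w_1\rangle$ is the affine $A_1^{(1)}$ factor, and $T_0=\pi^{-2N-1}$ yields \eqref{eqn:intro_dP_odd} with $F_i=f_i$ and $p=q^{-2N-1}$.
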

%%%%%%%%%%%%%%%%%%%%%%%%%%%%%%
%%%%%%%%%%%%%%%%%%%%%%%%%%%%%%

%%%%%%%%%%%%%%%%%%%%%%%%%%%%%%
%% Remark
%%%%%%%%%%%%%%%%%%%%%%%%%%%%%%
\begin{remark}\label{remark:symmetry_A2NA1}
An affine Weyl group of $(A_{2N}\rtimes A_1)^{(1)}$-type refers to the semi-direct product of an affine Weyl group of $A_{2N}^{(1)}$-type and that of $A_1^{(1)}$-type.
The extended affine Weyl group of $(A_{2N}\rtimes A_1)^{(1)}$-type in Theorem \ref{theorem:symmetry_A2NA1} indicates an affine Weyl group of $(A_{2N}\rtimes A_1)^{(1)}$-type extended by an automorphism of Dynkin diagrams.
For more details, see \S \ref{subsection:A2NA1_review}.
In particular, refer to \eqref{eqns:fundamental_A2NA1} for the fundamental relations of: 
\begin{equation}
 \widetilde{W}\left((A_{2N}\rtimes A_1)^{(1)}\right)
 =\Big(\langle s_0,\dots,s_{2N}\rangle\rtimes\langle w_0,w_1\rangle\Big)\rtimes\langle r\rangle.
\end{equation}
\end{remark}
%%%%%%%%%%%%%%%%%%%%%%%%%%%%%%
%%%%%%%%%%%%%%%%%%%%%%%%%%%%%%

It is known that Painlev\'e-type {\ODE}s are derived from birational actions of the infinite order elements of (extended) affine Weyl groups.
For instance, considering the transformation $T\in\widetilde{W}\left((A_{2N}\rtimes A_1)^{(1)}\right)$ acting on the parameters $\{a_1,\dots,a_{2N},b,c,p\}$ such that 
\begin{equation}\label{eqn:intro_T}
 T(a_i)=a_i,~i=1,\dots,2N,\quad
 T(b)=pb,\quad
 T(c)=c^{-1},\quad
 T(p)=p,
\end{equation}
as a time evolution, we obtain the system \eqref{eqn:intro_dP_odd}. 
(See Remark \ref{remark:T_hT} for transformation $T$.)
In this context, parameter $b$ can be regarded as the independent variable of the system \eqref{eqn:intro_dP_odd}.
The group $\widetilde{W}\left((A_{2N}\rtimes A_1)^{(1)}\right)$ includes transformations that regard not only the parameter $b$ but also the parameter $a_i$ as independent variables.
For instance, there exists a transformation $\hat{T}\in\widetilde{W}\left((A_{2N}\rtimes A_1)^{(1)}\right)$ that satisfies the following action on the parameters:
\begin{equation}\label{eqn:intro_hT}
 \hat{T}(a_1)=pa_1,\quad
 \hat{T}(a_i)=a_i,~i=2,\dots,2N,\quad
 \hat{T}(b)=b,\quad
 \hat{T}(c)=c,\quad
 \hat{T}(p)=p.
\end{equation}
(See Remark \ref{remark:T_hT} for the transformation $\hat{T}$.)
However, focusing on the action on the parameter $c$, it is observed that $\widetilde{W}\left((A_{2N}\rtimes A_1)^{(1)}\right)$ contains only transformations acting as $c\to c$ or $c\to c^{-1}$.
This is attributed to the fact that in the study of \cite{nakazono2023higerA4A6}, the parameter $c$ is not an essential parameter derived from the system of {\PDE}s considering reduction but rather a parameter that can be set to $1$ through a gauge transformation before reduction.

The purpose of this study is to extend the transformation group $\widetilde{W}\left((A_{2N}\rtimes A_1)^{(1)}\right)$ to include a transformation $\widetilde{T}$ that regards the parameter $c$ as an independent variable of an {\ODE}.
To achieve this, we consider the reduction of a different system of {\PDE}s from that treated in \cite{nakazono2023higerA4A6}.
(See Remark \ref{remark:wT} for the transformation $\widetilde{T}$.)

The main theorem of this paper is the following.

%%%%%%%%%%%%%%%%%%%%%%%%%%%%%%
%% Theorem
%%%%%%%%%%%%%%%%%%%%%%%%%%%%%%
\begin{theorem}\label{theorem:main}
The system \eqref{eqn:intro_dP_odd} has an extended affine Weyl group symmetry of $(A_{2N}\rtimes A_1)^{(1)}\times A_1^{(1)}$-type, that is,
it can be derived from a birational representation of an extended affine Weyl group of $(A_{2N}\rtimes A_1)^{(1)}\times A_1^{(1)}$-type, denoted by $\widetilde{W}\left((A_{2N}\rtimes A_1)^{(1)}\times A_1^{(1)}\right)$.
\end{theorem}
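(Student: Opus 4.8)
The plan is to build on Theorem \ref{theorem:symmetry_A2NA1} by adjoining to the known birational representation of $\widetilde{W}\left((A_{2N}\rtimes A_1)^{(1)}\right)$ two further generators that realize the additional $A_1^{(1)}$ factor, and then to verify the defining relations of the enlarged group. Since the target is a \emph{direct} product of $\widetilde{W}\left((A_{2N}\rtimes A_1)^{(1)}\right)$ with a Weyl group of $A_1^{(1)}$-type, the structural requirements are that the two new simple reflections, which I denote by $\si_0,\si_1$, satisfy the $A_1^{(1)}$-relations among themselves, namely $\si_0^2=\si_1^2=1$ with $\si_0\si_1$ of infinite order, and that each $\si_k$ commutes with every generator $s_0,\dots,s_{2N},w_0,w_1,r$ of the original group.

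First I would fix multiplicative root variables $\al_0,\al_1$ attached to the new $A_1^{(1)}$ Dynkin diagram and express the parameter $c$ in terms of them, so that the reflections act by the standard rule dictated by the $A_1^{(1)}$ Cartan matrix,
\begin{equation}
 \si_k(\al_j)=\al_j\,\al_k^{-a_{kj}},\qquad (a_{kj})=\begin{pmatrix}2&-2\\-2&2\end{pmatrix},\quad k,j\in\{0,1\},
\end{equation}
and in particular act nontrivially on $c$, thereby promoting $c$ to an essential direction rather than a reflection-only variable. The translation $\widetilde{T}=\si_1\si_0$ then realizes a genuine multiplicative shift of $c$, which is precisely the transformation $\widetilde{T}$ advertised in the introduction (Remark \ref{remark:wT}); the relation between this shift and the parameter $p$ is where the symbol ``$q$'' enters.

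Next I would specify the birational action of $\si_0,\si_1$ on the dependent variables $F_1,\dots,F_{2N}$. The correct ansatz is dictated by the reduction of the system of {\PDE}s with the CAC property alluded to in the abstract: one reads off the lattice shift in the new direction and transports it to the reduced variables, obtaining rational expressions in the $F_i$ and the parameters. With these formulas in hand, the verification splits into three checks: (i) each $\si_k$ is an involution on the full set of variables; (ii) $\si_0\si_1$ has infinite order, so the pair generates $W\left(A_1^{(1)}\right)$ rather than a finite dihedral group; and (iii) $\si_0$ and $\si_1$ commute with each of the $2N+3$ generators of $\widetilde{W}\left((A_{2N}\rtimes A_1)^{(1)}\right)$, both on the $F_i$ and on the parameters.

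The main obstacle is step (iii): constructing the action on the $F_i$ so that it is simultaneously birational, involutive, and \emph{commuting} with the entire original group is the delicate point, since a naive choice will typically fail commutativity with the non-diagonal generators $s_i$ and $w_j$. I expect the resolution to come from the geometric origin of the representation, namely by realizing both the old and the new generators as symmetries of one and the same multidimensionally consistent lattice, so that commutativity becomes a consequence of the independence of the corresponding lattice directions rather than something to be checked by brute force on every pair. Once all fundamental relations of $\widetilde{W}\left((A_{2N}\rtimes A_1)^{(1)}\times A_1^{(1)}\right)$ are confirmed, it remains only to observe that the translation $T$ of \eqref{eqn:intro_T}, now viewed inside the enlarged group, still reproduces the evolution \eqref{eqn:intro_dP_odd}; this is immediate because $T$ lies in the original factor and its action on $\{a_i,b,c,p\}$ is unchanged.
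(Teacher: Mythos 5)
Your high-level strategy (adjoin two involutions generating an $A_1^{(1)}$ factor that acts nontrivially on $c$, then verify the fundamental relations of the enlarged group) is the same as the paper's, but your step (iii) contains a genuine error: you require the new simple reflections $\si_0,\si_1$ to commute with \emph{every} generator of $\widetilde{W}\left((A_{2N}\rtimes A_1)^{(1)}\right)$, \emph{including} $r$. This is not only different from what the paper proves, it is impossible. Indeed, $r=\pi w_1$ fixes $c$ and inverts $q$ (hence inverts $\ep=q^{\frac{2N+1}{1-2N}}$), so conjugation by $r$ sends any translation of the form $c\mapsto \ep^{2}c$ to its inverse $c\mapsto \ep^{-2}c$. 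If $\si_0$ and $\si_1$ both commuted with $r$, then so would the translation $\si_1\si_0$, contradicting $r(\si_1\si_0)r^{-1}=(\si_1\si_0)^{-1}$ and the infinite order of $\si_1\si_0$. The correct structure, established in Lemma \ref{lemma:fundamental_mu}, is that $r$ acts as a simultaneous diagram reflection on all three diagrams: $r\mu_0=\mu_1 r$, $r\mu_1=\mu_0 r$. The direct product in the theorem is between the non-extended group $W\left((A_{2N}\rtimes A_1)^{(1)}\right)=\langle s_0,\dots,s_{2N}\rangle\rtimes\langle w_0,w_1\rangle$ and $\langle\mu_0,\mu_1\rangle$, and only \emph{afterwards} is $\langle r\rangle$ adjoined semidirectly; your reading of the ``$\times$'' as commutation with the full extended group breaks the construction.

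The second problem is that the analytically hard part of the proof is deferred to a hope that does not hold as stated. You expect commutativity with the $s_i$ to follow from ``independence of lattice directions,'' but the $s_i$ are not realized as shifts of the lattice on which the new involution lives, so no such soft argument is available. What the paper actually does (\S \ref{section:transformations_si_w_mu}--\ref{section:proof_mainThm}) is: construct $\mu$ as a symmetry of the CAC system \eqref{eqns:PDEs_UV}, perform the $(2N+1,1)$-staircase reduction so that $\mu(f_j)=g_j$ with the $g_j$ determined birationally by \eqref{eqns:g_f}, identify $\si=\pi^2$ and $w=w_1$ inside the old group, and obtain from the lattice only the relations \eqref{eqn:relations_sigma_w_mu}. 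The key relation $s_1\mu=\mu s_1$ must then be proved by explicit manipulation of Equations \eqref{eqn:g1_f}--\eqref{eqn:g2l+2_f} (showing $s_1(g_j)=g_j$ for all $j$), after which commutation with the remaining $s_j$ and with $w_0,w_1$ is propagated by conjugation with $\pi$ and $w_1$, and the infinite order $(\mu_0\mu_1)^\infty=1$ is read off from $(\mu_0\mu_1)^k(c)=\ep^{2k}c$. Without this computation (or a substitute for it), and with the $r$-commutation requirement corrected to the swap relations, your outline does not yet constitute a proof.
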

%%%%%%%%%%%%%%%%%%%%%%%%%%%%%%
% proof
%%%%%%%%%%%%%%%%%%%%%%%%%%%%%%
\begin{proof}
This theorem is proven in \S \ref{subsection:proof_mainThm}.
\end{proof}
%%%%%%%%%%%%%%%%%%%%%%%%%%%%%%
%%%%%%%%%%%%%%%%%%%%%%%%%%%%%%

%%%%%%%%%%%%%%%%%%%%%%%%%%%%%%
%% Remark
%%%%%%%%%%%%%%%%%%%%%%%%%%%%%%
\begin{remark}\label{remark:symmetry_A2NA1A1}
An affine Weyl group of $(A_{2N}\rtimes A_1)^{(1)}\times A_1^{(1)}$-type means the direct product of an affine Weyl group of $(A_{2N}\rtimes A_1)^{(1)}$-type and that of $A_1^{(1)}$-type.
The extended affine Weyl group of $(A_{2N}\rtimes A_1)^{(1)}\times A_1^{(1)}$-type in Theorem \ref{theorem:main} indicates an affine Weyl group of $(A_{2N}\rtimes A_1)^{(1)}\times A_1^{(1)}$-type extended by an automorphism of Dynkin diagrams.
For details, see \S \ref{subsection:proof_mainThm}.
In particular, see \eqref{eqns:fundamental_A2NA1} and \eqref{eqn:fundamental_mu} for the fundamental relations of 
\begin{equation}
 \widetilde{W}\left((A_{2N}\rtimes A_1)^{(1)}\times A_1^{(1)}\right)
 =\bigg(\Big(\langle s_0,\dots,s_{2N}\rangle\rtimes\langle w_0,w_1\rangle\Big)\times\langle \mu_0,\mu_1\rangle\bigg)\rtimes\langle r\rangle.
\end{equation}
\end{remark}
%%%%%%%%%%%%%%%%%%%%%%%%%%%%%%
%%%%%%%%%%%%%%%%%%%%%%%%%%%%%%

%%%%%%%%%%%%%%%%%%%%%%%%%%%%%%%%%%%%%%%%%%%%%%%%%%%%%%%%%%%
%% 1.1 Background
%%%%%%%%%%%%%%%%%%%%%%%%%%%%%%%%%%%%%%%%%%%%%%%%%%%%%%%%%%%
\subsection{Background}\label{subsection:Background}
%%
%% パンルヴェ方程式
%%
\subsubsection*{\bf Painlev\'e equations}
In the early 20th century, to find a new class of special functions, Painlev\'e and Gambier classified all the ordinary differential equations of the type 
\begin{equation}
 y''=F(y',y,t),
\end{equation}
where $y=y(t)$, $'=d/dt$ and $F$ is a function meromorphic in $t$ and rational in $y$ and $y'$, 
with the Painlev\'e property (solutions do not have movable singularities other than poles)\cite{PainleveP1900:zbMATH02665472,PainleveP1902:MR1554937,PainleveP1907:zbMATH02647172,GambierB1910:MR1555055}.
As a result, they obtained six new equations that are collectively referred to as Painlev\'e equations.
Note that the Painlev\'e VI equation was obtained by Fuchs \cite{FuchsR1905:quelques} before Painlev\'e and Gambier.

After the discovery, the Painlev\'e equations withdrew from the stage of ``modern mathematics" for a while.
The Painlev\'e equations regained attention after the 1970s because they appeared in mathematical physics research.
For instance, solutions of the Painlev\'e equations (Painlev\'e transcendents) were rediscovered as scaling functions for the two-dimensional Ising model on a square lattice \cite{WMTB1976:PhysRevB.13.316} and as similarity solutions of the soliton equations describing specific shallow water waves (solitons) \cite{AS1977:PhysRevLett.38.1103}.

In the late 20th century, Okamoto described a geometric framework (Okamoto's space of initial values) to study Painlev\'e equations \cite{OkamotoK1979:MR614694,OKSO2006:MR2277519}.
Consequently, the Painlev\'e III equation is classified into three types, which in turn classifies the Painlev\'e equations into eight types.

%%
%% $q$-パンルヴェ方程式
%%
\subsubsection*{\bf $q$-Painlev\'e equations}
$q$-Painlev\'e equations are a family of second-order nonlinear $q$-{\ODE}s.
Historically, they were obtained as $q$-discrete analogues of the Painlev\'e equations (see, for example, \cite{GRP1991:MR1125950,JS1996:MR1403067}).
Similar to the Painlev\'e equations, $q$-Painlev\'e equations are known to describe special solutions of various discrete soliton equations \cite{GRSWC2005:MR2117991,HHJN2007:MR2303490,FJN2008:MR2425981,OrmerodCM2012:MR2997166,
HHNS2015:MR3317164}.
In particular, famous is imposing periodic conditions on the discrete soliton equations\cite{HayM2007:MR2371129,HHJN2007:MR2303490,OVHQ2014:MR3215839,OrmerodCM2012:MR2997166,OVQ2013:MR3030178,KNY2002:MR1958118,TsudaT2010:MR2563787,PNGR1992:MR1162062,JNS2016:MR3584386,JNS2015:MR3403054,JNS2014:MR3291391,nakazono2023higerA4A6,AJT2020:zbMATH07212273}.

There are six (or eight) Painlev\'e equations; however, an infinite number of $q$-Painlev\'e equations are known to exist.
By considering Sakai's space of initial values \cite{SakaiH2001:MR1882403}, which is an extension of Okamoto's space of initial values, $q$-Painlev\'e equations can be classified into nine surface types (see Figure \ref{fig:Sakai_surface}).
In this study, a $q$-Painlev\'e equation of $X$-surface type refers to one $q$-Painlev\'e equation belonging to Sakai's space of initial values of type $X$. 
Note that infinitely many $q$-Painlev\'e equations exist for the same surface type.

%%%%%%%%%%%%%%%%%%%%%%%%%%%%%%
%%% Figure 
%%%%%%%%%%%%%%%%%%%%%%%%%%%%%%
\begin{figure}[htbp]
\[
 A_0^{(1)\ast}
 \to A_1^{(1)}
 \to A_2^{(1)}
 \to A_3^{(1)}
 \to A_4^{(1)}
 \to A_5^{(1)}
 \to A_6^{(1)}~
 \begin{matrix}
 \nearrow\\
 \searrow
 \end{matrix}
 \begin{matrix}
 ~A_7^{(1)}\\[1.6em]
 ~{A_7^{(1)}}'
 \end{matrix}
\]
\caption{
Types of Sakai's spaces of initial values for $q$-Painlev\'e equations.
The surface degenerates in the direction of the arrow due to the specialization and confluence of the base points that characterize the surface types.
}
\label{fig:Sakai_surface}
\end{figure}
%%%%%%%%%%%%%%%%%%%%%%%%%%%%%%
%%%%%%%%%%%%%%%%%%%%%%%%%%%%%%

%%
%% (Extended) affine Weyl group symmetry
%%
\subsubsection*{\bf (Extended) affine Weyl group symmetry}
A B\"acklund transformation is a transformation from an integrable system to another, possibly the same, integrable system. 
In the case of a Painlev\'e-type differential/difference equation, auto-B\"acklund transformations exist, which are transformations that map an equation to itself with varying parameters. 
These collectively form an (extended) affine Weyl group.
(See, for example \cite{KNY2017:MR3609039,HJN2016:MR3587455,SakaiH2001:MR1882403,book_NoumiM2004:MR2044201}.)
In this sense, the equation is said to have (extended) affine Weyl group symmetry.
Note that an (extended) affine Weyl group symmetry of a Painlev\'e-type {\ODE} does not always refer solely to its B\"acklund transformations, but it often refers to a large group of transformations, including its time evolution.
For example, see Theorems \ref{theorem:symmetry_A2NA1} and \ref{theorem:main}.

%%
%% KNY系
%%
\subsubsection*{\bf (Extended) KNY's representation}
In \cite{KNY2002:MR1958118}, Kajiwara-Noumi-Yamada showed a birational representation of an extended affine Weyl group of $(A_{m-1}\times A_{n-1})^{(1)}$-type (KNY's representation), where $m$ and $n$ are integers greater than or equal to $2$, except for $(m,n)=(2,2)$.
Note that KNY's representation is essentially the same even if $m$ and $n$ are interchanged \cite{NY2018:zbMATH06876428}.
Recently, it has been reported that KNY's representation can be extended to a birational representation of an extended affine Weyl group of $(A_{m-1}\times A_{n-1}\times A_{g-1})^{(1)}$-type (extended KNY's representation), where $g$ is the common greatest divisor of $m$ and $n$ \cite{MOT2018:AmAnAg,MOT2023:AmAnAgArxiv}.

It is well known that the $(A_L\times A_1)^{(1)}$-type KNY's representation, where $L\in\bbZ_{\geq 2}$, yields Painlev\'e-type $q$-{\ODE}s, including $q$-Painlev\'e equations as second-order $q$-{\ODE}s.
Indeed, the $(A_2\times A_1)^{(1)}$-type KNY's representation gives $q$-Painlev\'e equations of $A_5^{(1)}$-surface type \cite{KNY2002:MR1917133}, and the $(A_3\times A_1)^{(1)}$-type gives $q$-Painlev\'e equations of $A_3^{(1)}$-surface type \cite{KNY2002:MR1917133,TakenawaT2003:MR1996297}.
As mentioned above, the $(A_{2N+1}\times A_1)^{(1)}$-type KNY's representation, where $N\in\bbZ_{\geq 1}$, can be extended to the $(A_{2N+1}\times A_1\times A_1)^{(1)}$-type extended KNY's representation.
In \cite{OS2020:10.1093/imrn/rnaa283}, the explicit forms of the Painlev\'e-type $q$-{\ODE}s were obtained from the $(A_{2N+1}\times A_1\times A_1)^{(1)}$-type extended KNY's representation.

%%
%% Motivation for this study
%%
\subsubsection*{\bf Motivation for this study}
In \cite{nakazono2023higerA4A6}, a birational representation of an extended affine Weyl group of $(A_{2N}\rtimes A_1)^{(1)}$-type, where $N\in\bbZ_{\geq 1}$,  was constructed, and the system \eqref{eqn:intro_dP_odd} was derived from its action. 
This representation is presumed to be a degenerate version of KNY's representation (see Appendix \ref{section:conjecture}); therefore, similar to KNY's representation, it is believed to be extendable. 
This is the motivation for the present study.
It should be noted that the methods used to extend them differ. 
In this study, the extension was explored using the theory of consistency around the cube (CAC) property (see \cite{BS2002:MR1890049,NijhoffFW2002:MR1912127,WalkerAJ:thesis,NW2001:MR1869690} for the CAC property), whereas the extension of KNY's representation was performed using the theory of cluster algebra (see \cite{BGM2018:zbMATH06868180,HI2014:zbMATH06381200,OkuboN2015:zbMATH06499648,IIKNS2010:zbMATH05704436,NobeA2016:zbMATH06618416} for the theory of cluster algebra for discrete integrable systems).

%%%%%%%%%%%%%%%%%%%%%%%%%%%%%%%%%%%%%%%%%%%%%%%%%%%%%%%%%%%
%% 1.2 Notation and Terminology
%%%%%%%%%%%%%%%%%%%%%%%%%%%%%%%%%%%%%%%%%%%%%%%%%%%%%%%%%%%
\subsection{Notation and Terminology}\label{subsection:notation_definitions}
This paper will use the following notations and terminologies for conciseness.
\begin{itemize}
\item 
An ordinary difference equation is written as {\ODE} and a partial difference equation is written as {\PDE}. 
In particular, an ordinary multiplicative-type difference ($q$-difference) equation is expressed as {$q$-\ODE}.
\item 
For transformations $s$ and $r$, the symbol $sr$ means the composite transformation $s\circ r$.
\item 
In the context of transformations, the ``$1$" signifies the identity transformation.
\item
For transformation $s$, the relation $s^\infty=1$ implies that there is no positive integer $k$ such as
$s^k=1$.
\item 
If the subscript number is greater than the superscript number in the product symbol, $1$ is assumed.
For example,
\begin{equation}
 \prod_{k=1}^0 2^k=1.
\end{equation}
\item 
If the subscript number is greater than the superscript number in the summation symbol, $0$ is assumed.
For example,
\begin{equation}
 \sum_{k=1}^0 2^k=0.
\end{equation}
\item 
When an equation number has a subscript such as "$l=0$", it signifies the equation obtained by substituting $l=0$ into the equation corresponding to that equation number.
For example, with reference to 
\begin{equation}\label{eqn:example1}
 a_l+a_{l+1}+a_{l+2}=0,
\end{equation}
the equation \eqref{eqn:example1}$_{l=0}$ is equivalent to 
\begin{equation}
 a_0+a_1+a_2=0.
\end{equation}
The same applies to the case where ``$l\to l+1$" or other symbols are in the subscripts.
For instance, the equation \eqref{eqn:example1}$_{l\to l+1}$ is equivalent to
\begin{equation}
 a_{l+1}+a_{l+2}+a_{l+3}=0.
\end{equation}
\end{itemize}
%%%%%%%%%%%%%%%%%%%%%%%%%%%%%%%%%%%%%%%%%%%%%%%%%%%%%%%%%%%
%% 1.3 Outline of the paper
%%%%%%%%%%%%%%%%%%%%%%%%%%%%%%%%%%%%%%%%%%%%%%%%%%%%%%%%%%%
\subsection{Outline of the paper}
This paper is organized as follows.
%2
In \S \ref{section:transformations_si_w_mu}, we introduce system \eqref{eqns:PDEs_UV} and the transformations $\{\si,w,\mu\}$ that remain the system invariant.
Subsequently, by imposing periodic conditions on the system \eqref{eqns:PDEs_UV}, we obtain the birational actions of transformations $\{\si,w,\mu\}$.
%3
In \S \ref{section:proof_mainThm}, using the transformation group $\widetilde{W}\left((A_{2N}\rtimes A_1)^{(1)}\right)$ given in \cite{nakazono2023higerA4A6} and transformations $\{\si,w,\mu\}$, we prove Theorem \ref{theorem:main}.
%4
Some concluding remarks are given in \S \ref{ConcludingRemarks}.
%A
In Appendix \ref{section:system_CAC_proof}, we show that system \eqref{eqns:PDEs_UV} has the CAC and tetrahedron properties.
%B
In Appendix \ref{section:conjecture}, we provide conjectures on periodic reductions from systems of {\PDE}s with the CAC property to Painlev\'e type $q$-{\ODE}s.
%%%%%%%%%%%%%%%%%%%%%%%%%%%%%%%%%%%%%%%%%%%%%%%%%%%%%%%%%%%
%% 2. Transformations $\si$, $w$ and $\mu$
%%%%%%%%%%%%%%%%%%%%%%%%%%%%%%%%%%%%%%%%%%%%%%%%%%%%%%%%%%%
\section{Transformations $\si$, $w$ and $\mu$}\label{section:transformations_si_w_mu} 
In this section, we define the transformations $\{\si,w,\mu\}$ and obtain their birational actions necessary for the proof of Theorem \ref{theorem:main} through a staircase reduction of a system of {\PDE}s. 

%%%%%%%%%%%%%%%%%%%%%%%%%%%%%%%%%%%%%%%%%%%%%%%%%%%%%%%%%%%
%% 2.1 A system of {\PDE}s with the CAC property
%%%%%%%%%%%%%%%%%%%%%%%%%%%%%%%%%%%%%%%%%%%%%%%%%%%%%%%%%%%
\subsection{A system of {\PDE}s with the CAC property}\label{subsection:system_CAC}
Let $(l,m)\in\bbZ^2$.
We focus on the following system of {\PDE}s for $U_{l,m},V_{l,m}\in\bbC$:
\begin{subequations}\label{eqns:PDEs_UV}
\begin{align}
%EqnA, EqnA'
 &\dfrac{U_{l+1,m+1}}{U_{l,m}}+{\ga_{l+m}}^4\dfrac{U_{l,m+1}}{U_{l+1,m}}+\al_l\be_m\ga_{l+m}=0,\label{eqn:PDE_UV_A}\\
 &\dfrac{V_{l+1,m+1}}{V_{l,m}}+{\de_{l+m}}^4\dfrac{V_{l,m+1}}{V_{l+1,m}}+\al_l\be_m\de_{l+m}=0,\label{eqn:PDE_UV_AA}\\
%EqnB
 &\dfrac{U_{l,m+1}}{V_{l,m}}
 =\begin{cases}
 \dfrac{V_{l,m+1}}{U_{l,m}}&\text{(if $(l+m)$ is even)},\\[1em]
 \dfrac{\ga^4}{\de^4}\left(\dfrac{V_{l,m+1}}{U_{l,m}}+\dfrac{\be_m\de^4(\ga-\de)}{\ep^{l-2m-1}}\right)&\text{(if $(l+m)$ is odd)},
 \end{cases}\label{eqn:PDE_UV_B}\\
%EqnC
 &\dfrac{U_{l,m}}{V_{l+1,m}}
 =\begin{cases}
 \dfrac{\ga^3}{\de^3}\left(\dfrac{V_{l,m}}{U_{l+1,m}}-\dfrac{\de^3(\ga-\de)}{\al_l\ep^{l-2m}}\right)&\text{(if $(l+m)$ is even)},\\[1em]
 \dfrac{\ga}{\de}\left(\dfrac{U_{l+1,m}}{V_{l,m}}+\dfrac{\ga(\ga-\de)}{\al_l\ep^{l-2m-1}}\right)^{-1}&\text{(if $(l+m)$ is odd)},
 \end{cases}\label{eqn:PDE_UV_C}
\end{align}
\end{subequations}
where $\{\al_k,\be_k,\ga_k,\de_k,\ga,\de,\ep\}_{k\in\bbZ}$ are complex parameters satisfying
\begin{equation}
 \ga_l=\begin{cases}
 \ga&\text{(if $l$ is even)},\\[0.5em]
 \ga^{-1}&\text{(if $l$ is odd)},
 \end{cases}\quad
 \de_m=\begin{cases}
 \de&\text{(if $m$ is even)},\\[0.5em]
 \de^{-1}&\text{(if $m$ is odd)},
 \end{cases}\quad
 \ep=\ga\de.
\end{equation}
The system \eqref{eqns:PDEs_UV} has the CAC and tetrahedron properties.
Further details are provided in Appendix \ref{section:system_CAC_proof}.

The transformations $\{\si,w,\mu\}$ are defined by the actions on the parameters $\{\al_k,\be_k,\ga_k,\de_k,\ga,\de,\ep\}_{k\in\bbZ}$ and the variables $\{U_{l,m},V_{l,m}\}$ as follows.
\begin{subequations}\label{eqns:def_sigma_w_mu}
\begin{align}
 &\si(\al_k)=\ep^2\al_{k+2},\quad
 \si(\be_k)=\ep^{-2}\be_k,\quad
 \si(\ga_k)=\ga_k,\quad
 \si(\de_k)=\de_k,\quad
 \si(\ga)=\ga,\notag\\
 &\si(\de)=\de,\quad
 \si(\ep)=\ep,\quad
 \si(U_{l,m})=U_{l+2,m},\quad
 \si(V_{l,m})=V_{l+2,m},\\
 &w(\al_k)=\ep^{2N-4}\al_{-k+2N},\quad
 w(\be_k)=\ep^{-2N+4}\be_{-k-1},\quad
 w(\ga_k)={\ga_k}^{-1},\quad
 w(\de_k)={\de_k}^{-1},\notag\\
 &w(\ga)={\ga}^{-1},\quad
 w(\de)={\de}^{-1},\quad
 w(\ep)={\ep}^{-1},\quad
 w(U_{l,m})=\dfrac{1}{U_{-l+2N+1,-m}},\notag\\
 &w(V_{l,m})=\dfrac{1}{V_{-l+2N+1,-m}},\\
 &\mu(\al_k)=\al_k,\quad
 \mu(\be_k)=\be_k,\quad
 \mu(\ga_k)=\de_k,\quad
 \mu(\de_k)=\ga_k,\quad
 \mu(\ga)=\de,\quad
 \mu(\de)=\ga,\notag\\
 &\mu(\ep)=\ep,\quad
 \mu(U_{l,m})=V_{l,m},\quad
 \mu(V_{l,m})=U_{l,m}.
\end{align}
\end{subequations}
We can easily verify that system \eqref{eqns:PDEs_UV} is invariant under the actions of transformations $\{\si,w,\mu\}$ and that these transformations satisfy the following relations:
\begin{equation}\label{eqn:relations_sigma_w_mu}
 \si^{\,\infty}=w^2=\mu^2=1,\quad
 \si w=w\si^{-1},\quad
 \si\mu=\mu\si,\quad
 w\mu=\mu w.
\end{equation}

%%%%%%%%%%%%%%%%%%%%%%%%%%%%%%%%%%%%%%%%%%%%%%%%%%%%%%%%%%%
%% 2.2 A staircase reduction of the system \eqref{eqns:PDEs_UV}
%%%%%%%%%%%%%%%%%%%%%%%%%%%%%%%%%%%%%%%%%%%%%%%%%%%%%%%%%%%
\subsection{A staircase reduction of the system \eqref{eqns:PDEs_UV}}\label{subsection:staircase_reduction}
Fix an integer $N>0$.
We consider the $(2N+1,1)$-periodic conditions
\begin{equation}\label{eqn:(2N+1,1)_periodic_cond_1}
 U_{l+2N+1,m+1}=U_{l,m},\quad
 V_{l+2N+1,m+1}=V_{l,m},
\end{equation}
which are equivalent to expressing $U_{l,m}$ and $V_{l,m}$ as
\begin{equation}\label{eqn:(2N+1,1)_periodic_cond_2}
 U_{l,m}=U(l-(2N+1)m),\quad
 V_{l,m}=V(l-(2N+1)m).
\end{equation}
By substituting \eqref{eqn:(2N+1,1)_periodic_cond_2} into \eqref{eqns:PDEs_UV}, we obtain the following system of {\ODE}s:
\begin{subequations}\label{eqns:ODEs_UV}
\begin{align}
%EqnA, EqnA'
 &\dfrac{U(l)}{U(l+2N+2)}+{\ga_l}^4\dfrac{U(l+1)}{U(l+2N+1)}+\dfrac{\al_l \be_0{\ga_l}^3}{\ep^{2N-1}}=0,\label{eqn:ODE_UV_A}\\
 &\dfrac{V(l)}{V(l+2N+2)}+{\de_l}^4\dfrac{V(l+1)}{V(l+2N+1)}+\dfrac{\al_l \be_0{\de_l}^3}{\ep^{2N-1}}=0,\label{eqn:ODE_UV_AA}\\
%EqnB
 &\dfrac{U(2l+1)}{V(2l+2N+2)}-\dfrac{V(2l+1)}{U(2l+2N+2)}=0,\label{eqn:ODE_UV_Be}\\
 &\dfrac{U(2l)}{V(2l+2N+1)}-\dfrac{\ga^4V(2l)}{\de^4U(2l+2N+1)}-\dfrac{\be_0 \ga^4(\ga-\de)}{\ep^{2(l+N)}}=0,\label{eqn:ODE_UV_Bo}\\
%EqnC
 &\dfrac{U(2l)}{V(2l+1)}-\dfrac{\ga^3V(2l)}{\de^3U(2l+1)}+\dfrac{\ga^3(\ga-\de)}{\ep^{2l}\al_{2l}}=0,\label{eqn:ODE_UV_Ce}\\
 &\dfrac{U(2l+2)}{V(2l+1)}-\dfrac{\ga V(2l+2)}{\de U(2l+1)}+\dfrac{\ga(\ga-\de)}{\ep^{2l}\al_{2l+1}}=0,\label{eqn:ODE_UV_Co}
\end{align}
\end{subequations}
together with the conditions for the parameters
\begin{equation}\label{eqn:(2N+1,1)_periodic_cond_3}
 \al_{l+2N+1}=\ep^{-2N+1}\al_l,\quad
 \be_{m+1}=\ep^{2N-1}\be_m.
\end{equation}
Define the parameters $\{a_0,\dots,a_{2N},b,c,q\}$ and the variables $\{f(l),g(l)\}$ by
\begin{subequations}\label{eqns:def_abcqfg}
\begin{align}
 &a_0=\ep^{\frac{1-2N}{2N+1}}\left(\dfrac{\al_0}{\al_{2N}}\right)^{\frac{1}{2N+1}},\quad
 a_i=\left(\dfrac{\al_i}{\al_{i-1}}\right)^{\frac{1}{2N+1}},~ i=1,\dots,2N,\quad
 b=\left(\prod_{k=0}^{2N} \al_{k}\right)^{\frac{1}{2N+1}}\be_0,\\
 &c=\ga,\quad
 q=\ep^{\frac{1-2N}{2N+1}},\quad
 f(l)=\dfrac{U(l-1)}{U(l+1)},\quad
 g(l)=\dfrac{V(l-1)}{V(l+1)},
\end{align}
\end{subequations}
where the following holds:
\begin{equation}\label{eqn:cond_ai_lattice}
 \prod_{i=0}^{2N}a_i=q.
\end{equation}
Then, the following lemma holds.

%%%%%%%%%%%%%%%%%%%%%%%%%%%%%%
%% Lemma
%%%%%%%%%%%%%%%%%%%%%%%%%%%%%%
\begin{lemma}\label{lemma:f(l)_g(l)}
The following $2N$th-order $q$-{\ODE}s for $f(l)$ and $g(l)$ hold.
\begin{subequations}
\begin{align}
 &f(l+2N+1)f(l+1)\left(\prod_{k=1}^{N-1}f(l+2k+1)\right)
 +{\ga_l}^4\left(\prod_{k=1}^{N}f(l+2k)\right)\notag\\
 &=-\dfrac{q^{l+2N}a_l b {\ga_l}^3}{\displaystyle\prod_{k=1}^{2N-1}{a_{l+k}}^{2N-k}},\label{eqn:red_f_ODE}\\
 &g(l+2N+1)g(l+1)\left(\prod_{k=1}^{N-1}g(l+2k+1)\right)
 +{\de_l}^4\left(\prod_{k=1}^{N}g(l+2k)\right)\notag\\
 &=-\dfrac{q^{l+2N}a_l b {\de_l}^3}{\displaystyle\prod_{k=1}^{2N-1}{a_{l+k}}^{2N-k}},\label{eqn:red_g_ODE}
\end{align}
\end{subequations}
where $a_{i+2N+1}=a_i$ for arbitrary $i\in\bbZ$.
Here, $\ga_l$ and $\de_l$ are expressed using $c$ and $q$, respectively, as follows:
\begin{equation}
 \ga_l=\begin{cases}
 c&\text{(if $l$ is even)},\\[1em]
 \dfrac{1}{c}&\text{(if $l$ is odd)},
 \end{cases}\qquad
 \de_l=\begin{cases}
 \dfrac{\ep}{c}&\text{(if $l$ is even)},\\[1em]
 \dfrac{c}{\ep}&\text{(if $l$ is odd)},
 \end{cases}
\end{equation}
where $\ep=q^{\frac{2N+1}{1-2N}}$.
\end{lemma}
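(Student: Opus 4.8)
The plan is to derive \eqref{eqn:red_f_ODE} entirely from the single equation \eqref{eqn:ODE_UV_A}, and \eqref{eqn:red_g_ODE} from \eqref{eqn:ODE_UV_AA}; the coupling equations \eqref{eqn:ODE_UV_Be}--\eqref{eqn:ODE_UV_Co} between $U$ and $V$ are not needed for this lemma. Since the reduced system \eqref{eqns:ODEs_UV} is already in hand, the work splits into two tasks: (i) rewrite the two long-range ratios of $U$ in \eqref{eqn:ODE_UV_A} as telescoping products of $f(l)=U(l-1)/U(l+1)$, and (ii) translate the constant term through the definitions \eqref{eqns:def_abcqfg} of $\{a_0,\dots,a_{2N},b,c,q\}$.

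First I would collapse the two ratios. Writing each step-two ratio as $U(j)/U(j+2)=f(j+1)$ gives
\[
 \frac{U(l)}{U(l+2N+2)}=\prod_{k=0}^{N}f(l+2k+1),\qquad
 \frac{U(l+1)}{U(l+2N+1)}=\prod_{k=1}^{N}f(l+2k).
\]
Splitting off the edge factors $k=0$ and $k=N$ from the first product reproduces exactly $f(l+2N+1)f(l+1)\prod_{k=1}^{N-1}f(l+2k+1)+\ga_l^4\prod_{k=1}^{N}f(l+2k)$, the left-hand side of \eqref{eqn:red_f_ODE}, while \eqref{eqn:ODE_UV_A} pins its value to $-\al_l\be_0\ga_l^3/\ep^{2N-1}$. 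This step is routine index bookkeeping.

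The main obstacle is the parameter identification, namely showing $\al_l\be_0/\ep^{2N-1}=q^{l+2N}a_l b/\prod_{k=1}^{2N-1}a_{l+k}^{\,2N-k}$ (the factor $\ga_l^3$ then cancels). I would introduce $x_j=\al_j^{1/(2N+1)}$, so that the parameter periodicity \eqref{eqn:(2N+1,1)_periodic_cond_3} becomes the clean quasi-periodicity $x_{j+2N+1}=qx_j$, and \eqref{eqns:def_abcqfg} give uniformly $a_i=x_i/x_{i-1}$ for all $i$, together with $b=\bigl(\prod_{j=0}^{2N}x_j\bigr)\be_0$ and $\ep^{2N-1}=q^{-(2N+1)}$. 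With these, the weighted product telescopes: tracking the decreasing exponents $2N-k$ shows $\prod_{k=1}^{2N-1}a_{l+k}^{\,2N-k}=x_l^{-(2N-1)}\prod_{j=1}^{2N-1}x_{l+j}$, so that all intermediate $\al$'s cancel and only a power of $x_l$ survives alongside the block that reassembles $b$. The remaining discrepancy is a pure power of $q$, supplied by the window-shift identity $\prod_{i=s}^{s+2N}x_i=q^{s}\prod_{i=0}^{2N}x_i$, which is immediate from $x_{j+2N+1}=qx_j$; applied with $s=l-1$ it furnishes precisely the factor $q^{l+2N}$ on the right-hand side. I expect this exponent-and-$q$-power bookkeeping to be the only delicate part, and I would guard it by checking the constraint \eqref{eqn:cond_ai_lattice} and the base case $N=1$. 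Finally, \eqref{eqn:red_g_ODE} follows by the identical computation applied to \eqref{eqn:ODE_UV_AA} with $\de$ in place of $\ga$, and the stated closed forms of $\ga_l,\de_l$ come directly from $c=\ga$ and $\de=\ep/\ga=\ep/c$.
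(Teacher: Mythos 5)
Your proposal is correct and follows the same route as the paper, whose proof is simply the one-line observation that \eqref{eqn:red_f_ODE} and \eqref{eqn:red_g_ODE} follow from \eqref{eqn:ODE_UV_A} and \eqref{eqn:ODE_UV_AA}, respectively. Your telescoping of the $U$-ratios into products of $f$, and the parameter bookkeeping via $x_j=\al_j^{1/(2N+1)}$ with the window-shift identity, correctly fill in exactly the computation the paper leaves implicit.
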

%%%%%%%%%%%%%%%%%%%%%%%%%%%%%%
% proof
%%%%%%%%%%%%%%%%%%%%%%%%%%%%%%
\begin{proof}
Equations \eqref{eqn:red_f_ODE} and \eqref{eqn:red_g_ODE} are obtained from Equations \eqref{eqn:ODE_UV_A} and \eqref{eqn:ODE_UV_AA}, respectively.
\end{proof}
%%%%%%%%%%%%%%%%%%%%%%%%%%%%%%
%%%%%%%%%%%%%%%%%%%%%%%%%%%%%%

Define 
\begin{equation}\label{eqn:def_fi_gj}
 f_i=f(i),\quad
 g_i=g(i),\quad i=1,\dots,2N.
\end{equation}
Then, the following lemma holds.

%%%%%%%%%%%%%%%%%%%%%%%%%%%%%%
%% Lemma
%%%%%%%%%%%%%%%%%%%%%%%%%%%%%%
\begin{lemma}\label{lemma:g_f}
The following relations hold:
\begin{subequations}\label{eqns:g_f}
\begin{align}
 &\left(1-\dfrac{f_1}{{a_1}^{2N+1}c^{2}}\right)
 \dfrac{f_1-\ep^{2N-3}c^{6-4N}g_1}{f_1-q^{2N+1}\ep^{2N-3}c^4g_1}\notag\\
 &\hspace{1em}+\sum_{k=1}^{N-1}\dfrac{\displaystyle\prod_{i=1}^k f_{2i-1}}{\left(\displaystyle\prod_{i=1}^{2k}{a_i}^{2N+1}\right)c^{4k} \left(\displaystyle\prod_{i=1}^k f_{2i}\right)}\left(1-\dfrac{f_{2k+1}}{{a_{2k+1}}^{2N+1}c^2}\right)\notag\\
 &\hspace{1em}+\dfrac{{a_0}^{2N+1}}{q^{2N+1}c^{4N}\left(\displaystyle\prod_{i=1}^N f_{2i}\right)}\left(\rule{0em}{2.5em}\right.\dfrac{q^{2N}bc}{\displaystyle\prod_{i=0}^{2N-1}{a_i}^{2N-i}}+\displaystyle\prod_{i=1}^N f_{2i-1}\left.\rule{0em}{2.5em}\right)
 =0,\label{eqn:g1_f}\\
 &\dfrac{g_{2l_1+3}\left(\ep^2f_{2l_1+1}-c^4g_{2l_1+1}\right)\left({a_{2l_1+3}}^{2N+1}c^2-f_{2l_1+3}\right)}{f_{2l_1+2}\left(\ep^2f_{2l_1+3}-c^4g_{2l_1+3}\right)\left({a_{2l_1+1}}^{2N+1}c^2-f_{2l_1+1}\right)}
 =\ep^2{a_{2l_1+2}}^{2N+1}{a_{2l_1+3}}^{2N+1},\label{eqn:g2l+3_f}\\
 &\dfrac{\left(c^2-\ep^2{a_{2l_2+2}}^{2N+1}g_{2l_2+2}\right)\left(\ep^2{a_{2l_2+1}}^{2N+1}-c^2g_{2l_2+1}\right)}{\left(1-{a_{2l_2+2}}^{2N+1}c^2f_{2l_2+2}\right)\left({a_{2l_2+1}}^{2N+1}c^2-f_{2l_2+1}\right)}=\ep^2,\label{eqn:g2l+2_f}
\end{align}
\end{subequations}
where 
\begin{equation}
 l_1=0,\dots,N-2,\quad
 l_2=0,\dots,N-1,\quad
 \ep=q^{\frac{2N+1}{1-2N}}.
\end{equation}
\end{lemma}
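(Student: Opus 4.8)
The plan is to read off the three relations of \eqref{eqns:g_f} from the four cross equations \eqref{eqn:ODE_UV_Be}, \eqref{eqn:ODE_UV_Bo}, \eqref{eqn:ODE_UV_Ce} and \eqref{eqn:ODE_UV_Co}. Since Lemma \ref{lemma:f(l)_g(l)} already exhausts the content of \eqref{eqn:ODE_UV_A} and \eqref{eqn:ODE_UV_AA}, the remaining information of the reduced system \eqref{eqns:ODEs_UV} is precisely these cross equations, so \eqref{eqns:g_f} must be their consequence after everything is rewritten through \eqref{eqns:def_abcqfg}. The essential difficulty is that $f(l)$ and $g(l)$ record only the like-type ratios $U(l-1)/U(l+1)$ and $V(l-1)/V(l+1)$, whereas the cross equations are phrased through the mixed ratios $U(a)/V(b)$ and $V(a)/U(b)$; hence the whole proof amounts to eliminating these mixed ratios.

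To organize the elimination I would introduce cross-ratio unknowns, for instance $\phi(n)=U(n)/V(n+1)$ and $\psi(n)=V(n)/U(n+1)$, and rewrite each cross equation, after the substitutions $\ga=c$, $\de=\ep/c$ and $a_i^{2N+1}=\al_i/\al_{i-1}$, as a single affine relation between one such pair. The key bookkeeping observation is that a quotient of two cross equations taken at neighbouring indices collapses a ratio of like-type mixed ratios into one of the variables $f(l)$, $g(l)$: concretely $f_{2l+1}=U(2l)/U(2l+2)$ is the quotient of the two mixed ratios $U(2l)/V(2l+1)$ and $U(2l+2)/V(2l+1)$ appearing in \eqref{eqn:ODE_UV_Ce} and \eqref{eqn:ODE_UV_Co}, and likewise $g_{2l+1}=V(2l)/V(2l+2)$ is the quotient of $V(2l)/U(2l+1)$ and $V(2l+2)/U(2l+1)$. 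For the two local relations \eqref{eqn:g2l+3_f} and \eqref{eqn:g2l+2_f} I would then combine only a bounded window of cross equations around the indices $2l_1+1,\dots,2l_1+3$ (respectively $2l_2+1,2l_2+2$): solving the relevant C- (and where needed B-) equations for the mixed ratios and forming the combinations that produce $f$ and $g$, the mixed ratios cancel and one is left with exactly the stated identities once the parameter definitions are inserted.

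For the global relation \eqref{eqn:g1_f} I expect a telescoping argument. The prefactor $\prod_{i=1}^k f_{2i-1}/\prod_{i=1}^k f_{2i}$ equals the long-range $U$-ratio $U(0)U(2k+1)/(U(1)U(2k))$, so summing the appropriate cross relation over $k$ produces a telescoping sum, with the boundary contributions furnishing the first bracket (which carries $g_1$) and the final bracket (which carries the constant $q^{2N}bc/\prod_{i=0}^{2N-1}a_i^{2N-i}$ inherited from the $q$-O$\Delta$E \eqref{eqn:red_f_ODE}); the quasi-periodicity \eqref{eqn:(2N+1,1)_periodic_cond_3} then closes the chain around one full period.

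I expect the main obstacle to be twofold. First, the global identity \eqref{eqn:g1_f}: making the telescoping close requires choosing the summand and its boundary normalization so that all mixed ratios and all $V$-variables disappear except in the single surviving factor containing $g_1$, and this is where the interplay with \eqref{eqn:red_f_ODE} and the periodicity \eqref{eqn:(2N+1,1)_periodic_cond_3} is most delicate. Second, the bookkeeping of the parameter substitutions — tracking the powers of $\ep$, the $\al_i$ through $a_i^{2N+1}=\al_i/\al_{i-1}$, and $b=(\prod_k\al_k)^{1/(2N+1)}\be_0$, together with $\de=\ep/c$ — is lengthy and is the most error-prone step, though entirely mechanical once the elimination scheme above is fixed.
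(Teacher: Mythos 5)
Your plan coincides with the paper's proof in all essentials: the paper packages your mixed-ratio elimination into the auxiliary quantity $A(l)=\ep^{2l-3}V(2l)/\big(\ga^{4l-3}(\ga-\de)U(2l)\big)$, whose first difference (obtained by eliminating $V(2l+1)$ from \eqref{eqn:ODE_UV_Ce} and \eqref{eqn:ODE_UV_Co}) is $V$-free, then telescopes that difference and closes the chain with the wrap-around relation coming from \eqref{eqn:ODE_UV_Bo}$_{l=0}$ and \eqref{eqn:ODE_UV_Ce}$_{l=N}$ — exactly your telescoping-plus-periodicity mechanism for \eqref{eqn:g1_f}, with $A(l)/A(l+1)=\ga^4 g(2l+1)/\big(\ep^2 f(2l+1)\big)$ supplying the $g_1$-bracket and the long-range $U$-ratios you identified appearing as the summands. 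The two local identities are likewise handled by bounded windows, as you predict: \eqref{eqn:g2l+3_f} follows from the ratio of consecutive $A$-differences, and \eqref{eqn:g2l+2_f} by solving \eqref{eqn:ODE_UV_Ce} and \eqref{eqn:ODE_UV_Ce}$_{l\to l+1}$ for $V(2l)$ and $V(2l+3)$ and observing that the resulting expression factors, one factor being \eqref{eqn:ODE_UV_Co}.
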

%%%%%%%%%%%%%%%%%%%%%%%%%%%%%%
% proof
%%%%%%%%%%%%%%%%%%%%%%%%%%%%%%
\begin{proof}
For simplicity, let
\begin{equation}
 A(l)=\dfrac{\ep^{2l-3}V(2l)}{\ga^{4l-3}(\ga-\de)U(2l)},
\end{equation}
which satisfies
\begin{equation}\label{eqn:A_fg}
 \dfrac{A(l)}{A(l+1)}=\dfrac{\ga^4g(2l+1)}{\ep^2f(2l+1)}.
\end{equation}
Eliminating $V(2l+1)$ from Equations \eqref{eqn:ODE_UV_Ce} and \eqref{eqn:ODE_UV_Co}, we obtain
\begin{equation}\label{eqn:A_U_1}
 A(l+1)-A(l)=-\dfrac{U(2l+1)}{\al_{2l}\ga^{4l}U(2l)}\left(1-\dfrac{\al_{2l}U(2l)}{\al_{2l+1}\ga^{2}U(2l+2)}\right).
\end{equation}
Then, from Equation \eqref{eqn:A_U_1}, we obtain the following relations:
\begin{align}
 &A(N)=A(l+1)
 -\sum_{k=l+1}^{N-1}\dfrac{U(2k+1)}{\al_{2k}\ga^{4k} U(2k)}\left(1-\dfrac{\al_{2k}U(2k)}{\al_{2k+1}\ga^{2}U(2k+2)}\right),\label{eqn:A_U_1_1}\\
 &A(0)=A(l)
 +\sum_{k=0}^{l-1}\dfrac{U(2k+1)}{\al_{2k}\ga^{4k}U(2k)}\left(1-\dfrac{\al_{2k}U(2k)}{\al_{2k+1}\ga^{2}U(2k+2)}\right),\label{eqn:A_U_1_2}
\end{align}
where $0\leq l\leq N-1$.
Eliminating $V(2N+1)$ from Equations \eqref{eqn:ODE_UV_Bo}$_{l=0}$ and \eqref{eqn:ODE_UV_Ce}$_{l=N}$, we obtain
\begin{equation}\label{eqn:A_U_1_3}
 A(N)-\dfrac{\ep^{2N-1}}{\ga^{4N-2}}A(0)
 =\dfrac{U(2N+1)}{\al_{2N}\ga^{4N}U(0)}\left(\al_{2N}\be_0\ga+\dfrac{U(0)}{U(2N)}\right).
\end{equation}
By eliminating $A(0)$ and $A(N)$ from Equations \eqref{eqn:A_U_1_1}, \eqref{eqn:A_U_1_2}, and \eqref{eqn:A_U_1_3}, we obtain
\begin{align}\label{eqn:A_U_2}
 A(l+1)-\dfrac{\ep^{2N-1}}{\ga^{4N-2}}A(l)
 =&\sum_{k=l+1}^{N-1}\dfrac{U(2k+1)}{\al_{2k}\ga^{4k} U(2k)}\left(1-\dfrac{\al_{2k}U(2k)}{\al_{2k+1}\ga^{2}U(2k+2)}\right)\notag\\
 &+\dfrac{\ep^{2N-1}}{\ga^{4N-2}}\sum_{k=0}^{l-1}\dfrac{U(2k+1)}{\al_{2k}\ga^{4k}U(2k)}\left(1-\dfrac{\al_{2k}U(2k)}{\al_{2k+1}\ga^{2}U(2k+2)}\right)\notag\\
 &+\dfrac{U(2N+1)}{\al_{2N}\ga^{4N}U(0)}\left(\al_{2N}\be_0\ga+\dfrac{U(0)}{U(2N)}\right).
\end{align}
From \eqref{eqn:A_U_1} and \eqref{eqn:A_U_2}, we obtain
\begin{align}\label{eqn:A_U_3}
 &\left(1-\dfrac{\al_{2l}U(2l)}{\al_{2l+1}\ga^{2}U(2l+2)}\right)
 \dfrac{A(l+1)-\dfrac{\ep^{2N-1}}{\ga^{4N-2}}A(l)}{A(l+1)-A(l)}\notag\\
 &=-\sum_{k=l+1}^{N-1}\dfrac{\al_{2l} U(2k+1)U(2l)}{\al_{2k}\ga^{4(k-l)} U(2l+1)U(2k)}\left(1-\dfrac{\al_{2k}U(2k)}{\al_{2k+1}\ga^{2}U(2k+2)}\right)\notag\\
 &\hspace{1em}-\dfrac{\ep^{2N-1}}{\ga^{4N-2}}\sum_{k=0}^{l-1}\dfrac{\al_{2l}\ga^{4(l-k)}U(2k+1)U(2l)}{\al_{2k}U(2l+1)U(2k)}\left(1-\dfrac{\al_{2k}U(2k)}{\al_{2k+1}\ga^{2}U(2k+2)}\right)\notag\\
 &\hspace{1em}-\dfrac{\al_{2l}U(2N+1)U(2l)}{\al_{2N}\ga^{4(N-l)}U(2l+1)U(0)}\left(\al_{2N}\be_0\ga+\dfrac{U(0)}{U(2N)}\right),
\end{align}
where $0\leq l\leq N-1$.
Equation \eqref{eqn:A_U_3}$_{l=0}$ gives \eqref{eqn:g1_f}.

From Equations \eqref{eqn:A_U_3} and \eqref{eqn:A_U_3}$_{l\to l+1}$, we obtain
\begin{equation}
 \dfrac{A(l+1)-A(l)}{A(l+2)-A(l+1)}
 =\dfrac{\al_{2l+2}\ga^4U(2l+2)U(2l+1)}{\al_{2l}U(2l)U(2l+3)}
 \left(\cfrac{1-\dfrac{\al_{2l}U(2l)}{\al_{2l+1}\ga^2U(2l+2)}}{1-\dfrac{\al_{2l+2}U(2l+2)}{\al_{2l+3}\ga^2U(2l+4)}}\right),
\end{equation}
where $0\leq l\leq N-2$, which gives Equation \eqref{eqn:g2l+3_f}.

By solving Equation \eqref{eqn:ODE_UV_Ce} with $V(2l)$ and Equation \eqref{eqn:ODE_UV_Ce}$_{l\to l+1}$ with $V(2l+3)$, and substituting the results into Equation \eqref{eqn:g2l+2_f}, we obtain
\begin{align}
 &\left(\dfrac{U(2l+2)}{V(2l+1)}-\dfrac{\ga V(2l+2)}{\de U(2l+1)}+\dfrac{\ga(\ga-\de)}{\ep^{2l}\al_{2l+1}}\right)\notag\\
 &\times\left(\dfrac{U(2l)}{V(2l+1)}-\dfrac{\al_{2l+2}\ga^5V(2l+2)}{\al_{2l}\de U(2l+3)}+\dfrac{\ga^3(\ga-\de)}{\ep^{2l}\al_{2l}}\right)=0.
\end{align}
The equation above holds because of Equation \eqref{eqn:ODE_UV_Co}.
Therefore, Equation \eqref{eqn:g2l+2_f} holds.
\end{proof}
%%%%%%%%%%%%%%%%%%%%%%%%%%%%%%
%%%%%%%%%%%%%%%%%%%%%%%%%%%%%%

Let $\mathcal{K}$ be a field of rational functions over $\bbC$ defined as
\begin{equation}
 {\mathcal K}=\bbC(a_1,\dots,a_{2N},b,c,q^\frac{1}{2N-1}).
\end{equation}
Subsequently, from Lemmas \ref{lemma:f(l)_g(l)} and \ref{lemma:g_f}, the following lemma holds.

%%%%%%%%%%%%%%%%%%%%%%%%%%%%%%
%% Lemma
%%%%%%%%%%%%%%%%%%%%%%%%%%%%%%
\begin{lemma}
Variables $f(l)$ and $g(l)$ are expressed as rational functions of variables $\{f_1,\dots,f_{2N}\}$ over ${\mathcal K}$.
\end{lemma}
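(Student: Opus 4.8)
The plan is to regard $f_1,\dots,f_{2N}$ as the fundamental variables and to establish the claim in three stages: first that every $f(l)$ lies in $\mathcal{K}(f_1,\dots,f_{2N})$, then that the $2N$ values $g_1,\dots,g_{2N}$ do, and finally that every $g(l)$ does. Throughout one uses that all parameters appearing in the equations — namely $\ep=q^{\frac{2N+1}{1-2N}}$, the values $\ga_l,\de_l\in\{c^{\pm1},(\ep/c)^{\pm1}\}$, and $a_0=q/\prod_{i=1}^{2N}a_i$ together with $a_1,\dots,a_{2N},b,c$ — already belong to $\mathcal{K}$, so that every rational manipulation stays over $\mathcal{K}$.

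First I would read \eqref{eqn:red_f_ODE} as a genuine $(2N)$th-order recurrence. The top variable $f(l+2N+1)$ occurs only in the product $f(l+2N+1)f(l+1)\prod_{k=1}^{N-1}f(l+2k+1)$ and to the first power, with coefficient the monomial $f(l+1)\prod_{k=1}^{N-1}f(l+2k+1)$, which is a nonzero element of the function field; hence one may solve rationally for $f(l+2N+1)$ in terms of $f(l+1),\dots,f(l+2N)$. Symmetrically, $f(l+1)$ also enters linearly with the same nonvanishing coefficient, so one may solve rationally for $f(l+1)$ in terms of $f(l+2),\dots,f(l+2N+1)$. Iterating forward and backward from the initial block $f(1)=f_1,\dots,f(2N)=f_{2N}$, and invoking $a_{i+2N+1}=a_i$, a straightforward induction on $l$ gives $f(l)\in\mathcal{K}(f_1,\dots,f_{2N})$ for every $l\in\bbZ$.

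Next I would use Lemma \ref{lemma:g_f} to pin down $g_1,\dots,g_{2N}$. Relation \eqref{eqn:g1_f} involves $g_1$ only through the M\"obius expression $(f_1-\ep^{2N-3}c^{6-4N}g_1)/(f_1-q^{2N+1}\ep^{2N-3}c^4g_1)$, every other term being rational in $f_1,\dots,f_{2N}$; isolating that fraction and inverting the (nondegenerate) M\"obius transformation yields $g_1\in\mathcal{K}(f_1,\dots,f_{2N})$. Then \eqref{eqn:g2l+3_f}, for $l_1=0,\dots,N-2$, relates $g_{2l_1+3}$ to the already-known $g_{2l_1+1}$ through factors that are otherwise $f$'s and parameters, and $g_{2l_1+3}$ enters only to first order in both numerator and denominator; solving this chain successively produces the odd-indexed values $g_3,g_5,\dots,g_{2N-1}$. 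Finally \eqref{eqn:g2l+2_f}, for $l_2=0,\dots,N-1$, has $g_{2l_2+2}$ appearing linearly, so it expresses each even-indexed $g_{2l_2+2}$ rationally in terms of the corresponding odd-indexed $g_{2l_2+1}$. Thus all of $g_1,\dots,g_{2N}$ lie in $\mathcal{K}(f_1,\dots,f_{2N})$.

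The third stage mirrors the first: \eqref{eqn:red_g_ODE} is the same type of $(2N)$th-order recurrence for $g(l)$, with nonvanishing leading monomial $g(l+1)\prod_{k=1}^{N-1}g(l+2k+1)$, so every $g(l)$, $l\in\bbZ$, is rational in $g_1=g(1),\dots,g_{2N}=g(2N)$, hence rational in $f_1,\dots,f_{2N}$ by the second stage. The only delicate point — the nearest thing to an obstacle — is the bookkeeping ensuring each ``solve for'' step stays inside the rational function field: one must check that every leading coefficient (the monomials in the $f(l)$ and $g(l)$, and the denominators appearing in \eqref{eqns:g_f}) is a nonzero element, which holds because the $f_i$ are nonzero and each relation is at most first order in the variable being eliminated. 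Granting this, the three stages combine to give the claim.
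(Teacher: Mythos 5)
Your proposal is correct and takes essentially the same route as the paper's own proof: solve the recurrences \eqref{eqn:red_f_ODE} and \eqref{eqn:red_g_ODE} to express every $f(l)$ and $g(l)$ rationally in the initial blocks $\{f_1,\dots,f_{2N}\}$ and $\{g_1,\dots,g_{2N}\}$, then use \eqref{eqn:g1_f}, \eqref{eqn:g2l+3_f}, and \eqref{eqn:g2l+2_f} in that order to express $g_1$, the odd-indexed $g_i$, and the even-indexed $g_i$ as rational functions of $\{f_1,\dots,f_{2N}\}$ over $\mathcal{K}$. The only difference is that you make explicit the linearity and M\"obius-inversion bookkeeping that the paper leaves implicit.
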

%%%%%%%%%%%%%%%%%%%%%%%%%%%%%%
% proof
%%%%%%%%%%%%%%%%%%%%%%%%%%%%%%
\begin{proof}
From Equations \eqref{eqn:red_f_ODE} and \eqref{eqn:red_g_ODE}, the variables $f(l)$ and $g(l)$ are expressed as a rational function of $\{f_1,\dots,f_{2N}\}$ and that of $\{g_1,\dots,g_{2N}\}$ over ${\mathcal K}$, respectively.

The variables $g_i$, $i=1,\dots,2N$, are shown to be rational functions of $\{f_1,\dots,f_{2N}\}$ over ${\mathcal K}$, as described below.
From \eqref{eqn:g1_f}, it is evident that $g_1$ is a rational function of the variables $\{f_1,\dots,f_{2N}\}$.
From \eqref{eqn:g2l+3_f}, $g_{2i+3}$, $i=0,\dots,N-2$, are sequentially shown to be rational functions of the variables $\{f_1,\dots,f_{2N}\}$.
Finally, $g_{2j+2}$, $j=0,\dots,N-1$, are shown to be rational functions of variables $\{f_1,\dots,f_{2N}\}$ using \eqref{eqn:g2l+2_f}.
\end{proof}
%%%%%%%%%%%%%%%%%%%%%%%%%%%%%%
%%%%%%%%%%%%%%%%%%%%%%%%%%%%%%

We can easily verify that the conditions of reduction \eqref{eqn:(2N+1,1)_periodic_cond_1} and \eqref{eqn:(2N+1,1)_periodic_cond_3} are also invariant under the actions of the transformations $\{\si,w,\mu\}$.
Therefore, these transformations can be used even after reduction.
The actions of these transformations can be obtained from \eqref{eqns:def_sigma_w_mu}, \eqref{eqn:(2N+1,1)_periodic_cond_2}, \eqref{eqn:(2N+1,1)_periodic_cond_3}, \eqref{eqns:def_abcqfg}, and \eqref{eqn:def_fi_gj}.
These are given in the following lemma.

%%%%%%%%%%%%%%%%%%%%%%%%%%%%%%
%% Lemma
%%%%%%%%%%%%%%%%%%%%%%%%%%%%%%
\begin{lemma}\label{lemma:birational_si_w_mu}
The actions of the transformations $\{\si,w,\mu\}$ on the parameters $\{a_0,\dots,a_{2N},b,c,q\}$ are given by
\begin{subequations}\label{eqns:sigma_w_mu_abcq}
\begin{align}
 &\si(a_i)=a_{i+2},\quad
 \si(b)=q^2b,\quad
 \si(c)=c,\quad
 \si(q)=q,\label{eqn:sigma_abcq}\\
 &w(a_i)=\dfrac{1}{a_{2N+1-i}},\quad
 w(b)=q^{2N+1}b,\quad
 w(c)=c^{-1},\quad
 w(q)=q^{-1},\label{eqn:sigma_w_mu_abcq}\\
 &\mu(a_i)=a_i,\quad
 \mu(b)=b,\quad
 \mu(c)=\dfrac{\ep}{c},\quad
 \mu(q)=q,\label{eqn:mu_abcq}
\end{align}
\end{subequations}
where $i\in\bbZ/(2N+1)\bbZ$ and $\ep=q^{\frac{2N+1}{1-2N}}$,
while their actions on the variables $\{f_1,\dots,f_{2N}\}$ are given by
\begin{subequations}\label{eqns:sigma_w_mu_f}
\begin{align}
 &\si(f_i)=f_{i+2},~i=1,\dots,2N-2,\quad
 \si(f_{2N-1})=f(2N+1),\quad
 \si(f_{2N})=f(2N+2),\label{eqn:sigma_f}\\
 &w(f_j)=f_{2N+1-j},\quad
 \mu(f_j)=g_j,\quad
 j=1,\dots,2N.\label{eqn:w_mu_f}
\end{align}
\end{subequations}
The variables $f(2N+1)$ and $f(2N+2)$ are given by \eqref{eqn:red_f_ODE}, and variables $\{g_1,\dots,g_{2N}\}$ are given by \eqref{eqns:g_f}.
\end{lemma}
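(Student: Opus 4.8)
The plan is to obtain every action by direct substitution: apply the \PDE-level definitions \eqref{eqns:def_sigma_w_mu} to the defining expressions \eqref{eqns:def_abcqfg} for the reduced data, and then rewrite the result in terms of $\{a_0,\dots,a_{2N},b,c,q\}$ and $\{f_1,\dots,f_{2N}\}$ using the quasi-periodicity relations \eqref{eqn:(2N+1,1)_periodic_cond_3} for the parameters and the reduction formula \eqref{eqn:(2N+1,1)_periodic_cond_2} for the variables. Since the invariance of the reduction conditions under $\{\si,w,\mu\}$ has already been noted, the descended transformations are well defined, and only this computation remains.

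First I would treat the parameters. Each of $a_i$, $b$, $c$, $q$ is a (fractional-power) monomial in $\{\al_k,\be_0,\ga,\de,\ep\}$, so $\si$, $w$, $\mu$ act on it termwise. For $c=\ga$ and $q=\ep^{(1-2N)/(2N+1)}$ the claims are immediate, and for $a_i$ with $1\le i\le 2N$ the ratio $\al_i/\al_{i-1}$ transforms in an obvious way. The only subtlety is the index wrap-around: to handle $\si(a_{2N-1})$, $\si(a_{2N})$, $w(a_0)$ and to give the cyclic convention $a_{2N+1}=a_0$ meaning, I would reduce shifted $\al$-indices modulo $2N+1$ through $\al_{l+2N+1}=\ep^{-2N+1}\al_l$; the $\ep$-prefactor this introduces is exactly the one absorbed by the factor $\ep^{(1-2N)/(2N+1)}$ in the definition of $a_0$, so $\si(a_i)=a_{i+2}$ and $w(a_i)=1/a_{2N+1-i}$ close up consistently on $\bbZ/(2N+1)\bbZ$. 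The action of $\mu$ on $a_i$, $b$, $q$ is trivial since $\mu$ fixes each $\al_k$, $\be_0$, $\ep$, while $\mu(c)=\mu(\ga)=\de=\ep/\ga=\ep/c$ follows from $\ep=\ga\de$.

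The step needing genuine care is the action on $b$, where the whole product $\prod_{k=0}^{2N}\al_k$ is permuted and picks up $\ep$-factors. For $\si$ one shifts the product index by two and applies the quasi-periodicity to the two wrapped terms; collecting the $\ep$-powers from $\si(\al_k)=\ep^2\al_{k+2}$, from the two wraps, and from $\si(\be_0)=\ep^{-2}\be_0$ yields the net factor $\ep^{(2-4N)/(2N+1)}=q^2$, so $\si(b)=q^2b$. Similarly, for $w$ one reverses the product index (leaving $\prod\al_k$ invariant) and uses $\be_{-1}=\ep^{-(2N-1)}\be_0$; the $\ep$-powers then combine to $\ep^{-2N+1}=q^{2N+1}$, giving $w(b)=q^{2N+1}b$. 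I expect this $\ep$-power bookkeeping, together with the conversion $\ep=q^{(2N+1)/(1-2N)}$, to be the main (though still routine) obstacle; everything else is a matter of tracking indices.

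Finally I would treat the variables. Putting $m=0$ in \eqref{eqn:(2N+1,1)_periodic_cond_2} identifies $U(n)=U_{n,0}$ and $V(n)=V_{n,0}$, so the \PDE-level rules descend to $\si(U(n))=U(n+2)$, $w(U(n))=1/U(2N+1-n)$, $\mu(U(n))=V(n)$, and the analogues for $V$. Substituting these into $f(l)=U(l-1)/U(l+1)$ gives $\si(f(i))=f(i+2)$, $w(f(j))=f(2N+1-j)$, and $\mu(f(j))=g(j)$, which, reading off $f_i=f(i)$ and $g_i=g(i)$, are precisely the stated formulas. For $i=2N-1,2N$ the outputs $f(2N+1),f(2N+2)$ fall outside $\{1,\dots,2N\}$ and are expressed via \eqref{eqn:red_f_ODE}, while the $g_j=\mu(f_j)$ are expressed via \eqref{eqns:g_f} by Lemma \ref{lemma:g_f}, exactly as asserted. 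This completes the verification.
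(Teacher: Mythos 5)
Your proposal is correct and follows exactly the route the paper intends: the paper gives no separate proof of this lemma, stating only that the actions follow from \eqref{eqns:def_sigma_w_mu}, \eqref{eqn:(2N+1,1)_periodic_cond_2}, \eqref{eqn:(2N+1,1)_periodic_cond_3}, \eqref{eqns:def_abcqfg}, and \eqref{eqn:def_fi_gj}, which is precisely your direct-substitution computation. Your $\ep$-power bookkeeping is also accurate (e.g.\ $\si(b)=\ep^{(2-4N)/(2N+1)}b=q^2b$ and $w(b)=\ep^{1-2N}b=q^{2N+1}b$), so nothing is missing.
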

%%%%%%%%%%%%%%%%%%%%%%%%%%%%%%
%%%%%%%%%%%%%%%%%%%%%%%%%%%%%%

%%%%%%%%%%%%%%%%%%%%%%%%%%%%%%%%%%%%%%%%%%%%%%%%%%%%%%%%%%%
%% 3. Proof of Theorem \ref{theorem:main}
%%%%%%%%%%%%%%%%%%%%%%%%%%%%%%%%%%%%%%%%%%%%%%%%%%%%%%%%%%%
\section{Proof of Theorem \ref{theorem:main}}\label{section:proof_mainThm}
In this section, we first review the transformation group $\widetilde{W}\left((A_{2N}\rtimes A_1)^{(1)}\right)$ given in \cite{nakazono2023higerA4A6}.
Then, using the transformations $\{\si,w,\mu\}$ given in \S \ref{section:transformations_si_w_mu}, we extend it to the transformation group $\widetilde{W}\left((A_{2N}\rtimes A_1)^{(1)}\times A_1^{(1)}\right)$.
%%%%%%%%%%%%%%%%%%%%%%%%%%%%%%%%%%%%%%%%%%%%%%%%%%%%%%%%%%%
%% 3.1 Review of the transformation group $\widetilde{W}\left((A_{2N}\rtimes A_1)^{(1)}\right)$
%%%%%%%%%%%%%%%%%%%%%%%%%%%%%%%%%%%%%%%%%%%%%%%%%%%%%%%%%%%
\subsection{Review of the transformation group $\widetilde{W}\left((A_{2N}\rtimes A_1)^{(1)}\right)$}\label{subsection:A2NA1_review}
This subsection reviews the result in \cite{nakazono2023higerA4A6}.

Let us define the action of the transformations $\{s_1,\dots,s_{2N},w_1,\pi\}$ on the complex parameters $\{a_0,\dots,a_{2N},b,c,q\}$ and complex variables $\{f_1,\dots,f_{2N}\}$.
These parameters satisfy the following relation:
\begin{equation}\label{eqn:cond_ai_A2NA1}
 \prod_{i=0}^{2N}a_i=q.
\end{equation}
The actions of the transformations $\{s_1,\dots,s_{2N},w_1,\pi\}$ on the parameters $\{a_0,\dots,a_{2N},b,c,q\}$ are given by
\begin{subequations}\label{eqns:WA2N_para}
\begin{align}
 &s_i(a_j)
 =\begin{cases}
 {a_i}^{-1}&\text{if } j=i,\\
 a_ia_j&\text{if } j=i\pm 1,\\
 a_j&\text{otherwise},
 \end{cases}\qquad
 i=1,\dots,2N-1,\\
 &s_{2N}(a_j)
 =\begin{cases}
 {a_{2N}}^{-1}&\text{if } j=2N,\\
 a_{2N}a_j&\text{if } j=0,2N-1,\\
 a_j&\text{otherwise},
 \end{cases}\\
 &s_k(b)=b,\quad
 s_k(c)=c,\quad
 s_k(q)=q,\quad
 k=1,\dots,2N,\\
 &w_1(a_j)=\dfrac{1}{a_{2N+1-j}},\quad
 w_1(b)=q^{2N+1}b,\quad
 w_1(c)=c^{-1},\quad
 w_1(q)=q^{-1},\\
 &\pi(a_j)=a_{j+1},\quad
 \pi(b)=qb,\quad
 \pi(c)=c^{-1},\quad
 \pi(q)=q,
\end{align}
\end{subequations}
where $j\in\bbZ/(2N+1)\bbZ$,
while their actions on the variables $\{f_1,\dots,f_{2N}\}$ are given by
\begin{subequations}\label{eqns:WA2N_f}
\begin{align}
 &s_i(f_j)
 =\begin{cases}
 f_{i-1}\dfrac{\la(i-1)^2-{a_i}^{2N+1}f_i}{{a_i}^{2N+1}\la(i-1)^2-f_i}&\text{if } j=i-1,\\
 f_{i+1}\dfrac{{a_i}^{2N+1}\la(i-1)^2-f_i}{\la(i-1)^2-{a_i}^{2N+1}f_i}&\text{if } j=i+1,\\
 f_j&\text{otherwise},
 \end{cases}\qquad
 i,j=1,\dots,2N,\\
 &w_1(f_j)=f_{2N+1-j},\quad
 j=1,\dots,2N,\\
 &\pi(f_j)
 =\begin{cases}
 f_{j+1}&\text{if } j=1,\dots,2N-1,\\
 -\dfrac{c^4}{\displaystyle\prod_{k=1}^N f_{2k-1}}
 \left(
 \displaystyle\prod_{k=1}^N f_{2k}+\dfrac{q^{2N}a_0b}{\left(\displaystyle\prod_{k=1}^{2N-1}{a_k}^{2N-k}\right)c}
 \right)&\text{if } j=2N,
 \end{cases}
\end{align}
\end{subequations}
where
\begin{equation}
 \la(l)
 =\begin{cases}
 c&\text{(if $l$ is even)},\\
 c^{-1}&\text{(if $l$ is odd)}.
 \end{cases}
\end{equation} 
Define the transformation $s_0$ by
\begin{equation}
 s_0=\pi^{-1}s_1\pi.
\end{equation}
Then, the transformations $\{s_0,\dots,s_{2N},w_1,\pi\}$ satisfy the following relations:
\begin{subequations}\label{eqns:relations_si_w1_pi}
\begin{align}
 &{s_i}^2=(s_is_{i\pm 1})^3=(s_is_j)^2=1,~ j\neq i\pm 1,\quad
 {w_1}^2=1,\\
 &w_1 s_k=s_{2N-k+1}w_1,\quad
 \pi^\infty=1,\quad
 \pi s_k=s_{k+1}\pi,\quad
 \pi w_1=w_1\pi^{-1},
\end{align}
\end{subequations}
where $i,j,k\in\bbZ/(2N+1)\bbZ$.
Moreover, by defining the transformations $w_0$ and $r$ by
\begin{equation}\label{eqn:def_w0_r}
 w_0=\pi^2w_1,\quad
 r=\pi w_1,
\end{equation}
the transformations $\{s_0,\dots,s_{2N},w_0,w_1,r\}$ satisfy the following relations:
\begin{subequations}\label{eqns:fundamental_A2NA1}
\begin{align}
 &{s_i}^2=(s_is_{i\pm 1})^3=(s_is_j)^2=1,~ j\neq i\pm 1,\quad
 {w_0}^2={w_1}^2=(w_0w_1)^\infty=1,\label{eqn:fundamental_A2NA1_1}\\
 &w_0s_i=s_{2N-i+3}w_0,\quad
 w_1s_i=s_{2N-i+1}w_1,\label{eqn:fundamental_A2NA1_2}\\
 &r^2=1,\quad
 rs_i=s_{2N-i+2}r,\quad
 rw_0=w_1r,\quad
 rw_1=w_0r,\label{eqn:fundamental_A2NA1_3}
\end{align}
\end{subequations}
where $i,j\in\bbZ/(2N+1)\bbZ$.
From the relation \eqref{eqns:fundamental_A2NA1}, the following hold.
\begin{itemize}
\item 
Transformation groups $\langle s_0,\dots,s_{2N}\rangle$ and $\langle w_0,w_1\rangle$ form affine Weyl groups of type $A_{2N}^{(1)}$ and type $A_1^{(1)}$, respectively.
\item 
The transformation group $\langle s_0,\dots,s_{2N},w_0,w_1\rangle$ is a semidirect product of $\langle s_0,\dots,s_{2N}\rangle$ and $\langle w_0,w_1\rangle$, that is, 
\begin{equation}
 \langle s_0,\dots,s_{2N},w_0,w_1\rangle=\langle s_0,\dots,s_{2N}\rangle\rtimes\langle w_0,w_1\rangle.
\end{equation}
We denote it as $W\left((A_{2N}\rtimes A_1)^{(1)}\right)$.
\item 
The transformation $r$ corresponds to a reflection of the Dynkin diagram of type $A_{2N}^{(1)}$ associated with $\langle s_0,\dots,s_{2N}\rangle$ and to that of type $A_1^{(1)}$ associated with $\langle w_0,w_1\rangle$.
Therefore, we refer to the transformation group $\langle s_0,\dots,s_{2N},w_0,w_1,r\rangle$ as an extended affine Weyl group of type $(A_{2N}\rtimes A_1)^{(1)}$ and denote it as $\widetilde{W}\left((A_{2N}\rtimes A_1)^{(1)}\right)$, that is,
\begin{equation}
 \widetilde{W}\left((A_{2N}\rtimes A_1)^{(1)}\right)
 =\Big(\langle s_0,\dots,s_{2N}\rangle\rtimes\langle w_0,w_1\rangle\Big)\rtimes\langle r\rangle.
\end{equation}
\end{itemize}

%%%%%%%%%%%%%%%%%%%%%%%%%%%%%%
%% Remark
%%%%%%%%%%%%%%%%%%%%%%%%%%%%%%
\begin{remark}\label{remark:T_hT}
Define the transformations $T_0,T_1\in \widetilde{W}\left((A_{2N}\rtimes A_1)^{(1)}\right)$ as
\begin{equation}
 T_0=\pi^{-2N-1},\quad
 T_1=\pi s_{2N}s_{2N-1}\cdots s_1,
\end{equation}
whose actions on the parameters $\{a_0,\dots,a_{2N},b,c,q\}$ are given by
\begin{subequations}
\begin{align}
 &T_0(a_i)=a_i,~i=0,\dots,2N,\quad
 T_0(b)=q^{-2N-1}b,\quad
 T_0(c)=c^{-1},\quad
 T_0(q)=q,\\
 &T_1(a_j)
 =\begin{cases}
 qa_0&\text{if } j=0,\\
 q^{-1}a_1&\text{if } j=1,\\
 a_j&\text{otherwise},
 \end{cases}\quad
 T_1(b)=qb,\quad
 T_1(c)=c^{-1},\quad
 T_1(q)=q.
\end{align}
\end{subequations}
The system \eqref{eqn:intro_dP_odd} can be obtained from the action of $T_0$ with the following correspondence (see \cite{nakazono2023higerA4A6} for details):
\begin{equation}
 \overline{\rule{0em}{0.5em}~\,}=T_0,\quad
 F_i=f_i,\quad
 p=q^{-2N-1}.
\end{equation}
Thus, the transformation $T$ satisfying \eqref{eqn:intro_T} is given by
\begin{equation}
 T=T_0.
\end{equation}
Also, the transformation $\hat{T}$ satisfying \eqref{eqn:intro_hT} is given by
\begin{equation}
 \hat{T}={T_1}^{2N+1}T_0.
\end{equation}
Note that in \S \ref{Introduction}, we omit parameter $a_0$ using relation \eqref{eqn:cond_ai_A2NA1}, and we use $p=q^{-2N-1}$ instead of $q$.
\end{remark}
%%%%%%%%%%%%%%%%%%%%%%%%%%%%%%
%%%%%%%%%%%%%%%%%%%%%%%%%%%%%%

%%%%%%%%%%%%%%%%%%%%%%%%%%%%%%
%% Remark
%%%%%%%%%%%%%%%%%%%%%%%%%%%%%%
\begin{remark}\label{remark:pi_f}
The transformation $\pi\in\widetilde{W}\left((A_{2N}\rtimes A_1)^{(1)}\right)$ gives the following relation:
\begin{equation}
 \pi^{2N}(f_1)f_1\left(\prod_{k=1}^{N-1} \pi^{2k}(f_1)\right)
 +c^4\left(\prod_{k=1}^N \pi^{2k-1}(f_1)\right)
 +\dfrac{q^{2N}a_0b c^3}{\displaystyle\prod_{k=1}^{2N-1}{a_k}^{2N-k}}
 =0.
\end{equation}
Applying $\pi^l$ on the equation above and setting
\begin{equation}
 f(k)=\pi^{k-1}(f_1),
\end{equation}
we obtain the $2N$th-order $q$-{\ODE} \eqref{eqn:red_f_ODE}.
\end{remark}
%%%%%%%%%%%%%%%%%%%%%%%%%%%%%%
%%%%%%%%%%%%%%%%%%%%%%%%%%%%%%

%%%%%%%%%%%%%%%%%%%%%%%%%%%%%%%%%%%%%%%%%%%%%%%%%%%%%%%%%%%
%% 3.2 Proof of Theorem \ref{theorem:main}
%%%%%%%%%%%%%%%%%%%%%%%%%%%%%%%%%%%%%%%%%%%%%%%%%%%%%%%%%%%
\subsection{Proof of Theorem \ref{theorem:main}}\label{subsection:proof_mainThm}
In this subsection, using the transformations $\{\si,w,\mu\}$ given in \S \ref{section:transformations_si_w_mu} and transformation group $\widetilde{W}\left((A_{2N}\rtimes A_1)^{(1)}\right)$ shown in \S \ref{subsection:A2NA1_review}, we prove Theorem \ref{theorem:main}.
For this purpose, in this subsection, we discuss the proof under the following setup:
\begin{itemize}
\item 
The parameters $\{a_0,\dots,a_{2N},b,c,q\}$ and variables $\{f_1,\dots,f_{2N}\}$ defined in \S \ref{subsection:staircase_reduction} are considered identical to those in \S \ref{subsection:A2NA1_review}.
\item
We consider that the transformations $\{\si,w,\mu\}$ have actions defined only on the parameters $\{a_0,\dots,a_{2N},b,c,q\}$ and variables $\{f_1,\dots,f_{2N}\}$.
\end{itemize}

Comparing the actions of the transformations $\{\si,w\}$ \eqref{eqns:sigma_w_mu_abcq} and \eqref{eqns:sigma_w_mu_f} with those of $\{\pi,w_1\}$ \eqref{eqns:WA2N_para} and \eqref{eqns:WA2N_f}, we obtain: 
\begin{equation}\label{eqn:sigma=pi2_w=w1}
 \si=\pi^2,\quad
 w=w_1,
\end{equation}
which implies that $\si,w\in\widetilde{W}\left((A_{2N}\rtimes A_1)^{(1)}\right)$.
We note Remark \ref{remark:pi_f} for the action of $\pi$.
Next, we focus on transformation $\mu$.
It is evident from the actions on parameter $c$ that transformation $\mu$ is not contained within $\widetilde{W}\left((A_{2N}\rtimes A_1)^{(1)}\right)$. 
Therefore, in the following, we consider what type of extended affine Weyl group is formed by extending the transformation group $\widetilde{W}\left((A_{2N}\rtimes A_1)^{(1)}\right)$ through the addition of transformation $\mu$.

%%%%%%%%%%%%%%%%%%%%%%%%%%%%%%
%% Lemma
%%%%%%%%%%%%%%%%%%%%%%%%%%%%%%
\begin{lemma}
The following relations hold:
\begin{equation}\label{eqn:relations_si_w1_pi_mu}
 s_1 \mu=\mu s_1,\quad
 w_1\mu=\mu w_1,\quad
 \pi^2\mu=\mu\pi^2,\quad
 (\mu \pi)^\infty=1.
\end{equation}
\end{lemma}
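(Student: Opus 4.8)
The plan is to dispatch the four relations in \eqref{eqn:relations_si_w1_pi_mu} one at a time, using the already-established relations \eqref{eqn:relations_sigma_w_mu} and the identifications \eqref{eqn:sigma=pi2_w=w1} to reduce three of them to short arguments, and isolating $s_1\mu=\mu s_1$ for last. First I would note that $w_1\mu=\mu w_1$ and $\pi^2\mu=\mu\pi^2$ come essentially for free: since the periodic conditions are preserved by $\{\si,w,\mu\}$, the group relations \eqref{eqn:relations_sigma_w_mu} descend to the reduced birational actions of Lemma \ref{lemma:birational_si_w_mu}, so in particular $w\mu=\mu w$ and $\si\mu=\mu\si$ hold there; substituting $w=w_1$ and $\si=\pi^2$ from \eqref{eqn:sigma=pi2_w=w1} then gives $w_1\mu=\mu w_1$ and $\pi^2\mu=\mu\pi^2$.

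Next, for $(\mu\pi)^\infty=1$ I would track the action on the variable $b$. From \eqref{eqns:WA2N_para} and \eqref{eqn:mu_abcq} one has $\pi(b)=qb$, $\mu(b)=b$ and $\mu(q)=\pi(q)=q$, so $\mu\pi$ fixes $q$ and sends $b\mapsto qb$; by induction $(\mu\pi)^k(b)=q^k b$ for all $k\ge1$. Because $q$ is an indeterminate of $\mathcal K$ we have $q^k\neq1$, whence $(\mu\pi)^k(b)\neq b$ and therefore $(\mu\pi)^k\neq1$ for every $k\ge1$, which is exactly the assertion $(\mu\pi)^\infty=1$.

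Finally, $s_1\mu=\mu s_1$ needs a direct verification, and this is where the real work lies. On the parameters the check is short: $s_1$ fixes $b,c,q$ and sends each $a_j$ to a Laurent monomial in the $a$'s, all fixed by $\mu$, while $\mu(c)=\ep/c$ is fixed by $s_1$; hence the two composites agree on $\{a_0,\dots,a_{2N},b,c,q\}$. On the variables, \eqref{eqns:WA2N_f} shows that $s_1$ fixes $f_1,f_3,\dots,f_{2N}$ and moves only $f_2$, so, writing $g_j=\mu(f_j)$ through \eqref{eqns:g_f}, the required identity on $f_j$ amounts to $s_1(g_j)=g_j$ for $j\neq2$ together with
\[
 s_1(g_2)=\mu\big(s_1(f_2)\big)=g_2\,\frac{{a_1}^{2N+1}(\ep/c)^2-g_1}{(\ep/c)^2-{a_1}^{2N+1}g_1}.
\]
The main obstacle is precisely these variable identities: the $g_j$ are given only implicitly, as rational functions of $f_1,\dots,f_{2N}$ through the coupled system \eqref{eqns:g_f}, so the invariances are not term-by-term (indeed $s_1$ changes $a_0,a_1,a_2$ and $f_2$, on which every $g_j$ depends) but must be extracted by propagating the $s_1$-action through \eqref{eqn:g1_f}, \eqref{eqn:g2l+3_f} and \eqref{eqn:g2l+2_f}. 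I expect the base case $s_1(g_1)=g_1$ to require genuine cancellation in \eqref{eqn:g1_f}, after which the invariance of $g_3,\dots,g_{2N}$ and the displayed transformation of $g_2$ should cascade through \eqref{eqn:g2l+3_f} and \eqref{eqn:g2l+2_f}; the other three relations are comparatively routine.
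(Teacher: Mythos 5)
Your route coincides with the paper's: $w_1\mu=\mu w_1$ and $\pi^2\mu=\mu\pi^2$ are read off from \eqref{eqn:sigma=pi2_w=w1} together with \eqref{eqn:relations_sigma_w_mu}; $(\mu\pi)^\infty=1$ is proved by tracking one parameter (the paper computes $(\mu\pi)^k(c)=\ep^{-k}c$ where you compute $(\mu\pi)^k(b)=q^kb$; both work); and $s_1\mu=\mu s_1$ is reduced to the behaviour of the implicitly defined $g_j$ under $s_1$. One genuine point in your favour: your formulation of what must be proved on the variables is the correct one, and it is in fact \emph{more} careful than the paper's own proof. Since $s_1$ moves $f_2$, commutativity on $f_2$ requires the twisted law $s_1(g_2)=\mu\big(s_1(f_2)\big)=g_2\,\dfrac{{a_1}^{2N+1}(\ep/c)^2-g_1}{(\ep/c)^2-{a_1}^{2N+1}g_1}$, exactly as you display. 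The paper instead asserts $s_1(g_{2l_2+2})=g_{2l_2+2}$ for all $l_2=0,\dots,N-1$ and closes with the chain $s_1\mu(f_l)=s_1(g_l)=g_l=\mu(f_l)=\mu s_1(f_l)$ for all $l$; both statements fail at $l=2$ (indeed, if $s_1(g_2)=g_2$ were true, the lemma itself would be false). Solving \eqref{eqn:g2l+2_f}$_{l_2=0}$ for $g_2$ and applying $s_1$, using $s_1(g_1)=g_1$ and $s_1(a_2)=a_1a_2$, gives precisely $s_1(g_2)=g_2\,\dfrac{{a_1}^{2N+1}\ep^2-c^2g_1}{\ep^2-{a_1}^{2N+1}c^2g_1}$, which is your expression; so the lemma survives, and your version is the one that should appear in the proof.

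The gap is that your argument stops exactly where the work begins: the identities $s_1(g_1)=g_1$, the cascade $s_1(g_{2l_1+3})=g_{2l_1+3}$, the cases $s_1(g_{2l_2+2})=g_{2l_2+2}$ for $l_2\geq1$, and the $g_2$ law are all left as expectations (``I expect\dots'', ``should cascade''), and the base case is the one step that needs an actual idea. Here is what fills it, and what the paper does: apply $s_1$ to \eqref{eqn:g1_f} and observe that the second and third terms each acquire the \emph{same} overall factor $\dfrac{{a_1}^{2N+1}(c^2-{a_1}^{2N+1}f_1)}{{a_1}^{2N+1}c^2-f_1}$ (this uses $s_1(a_0)=a_0a_1$, $s_1(a_1a_2)=a_2$, the $s_1$-invariance of $\prod_{i=0}^{2N-1}{a_i}^{2N-i}$, and the explicit form of $s_1(f_2)$); eliminating those terms between the original and transformed equations, the prefactors of the remaining terms match up and one is left with
\begin{equation*}
 \dfrac{f_1-\ep^{2N-3}c^{6-4N}g_1}{f_1-q^{2N+1}\ep^{2N-3}c^4g_1}
 =\dfrac{f_1-\ep^{2N-3}c^{6-4N}s_1(g_1)}{f_1-q^{2N+1}\ep^{2N-3}c^4s_1(g_1)},
\end{equation*}
whence $s_1(g_1)=g_1$ by injectivity of the M\"obius map $x\mapsto\dfrac{f_1-\ep^{2N-3}c^{6-4N}x}{f_1-q^{2N+1}\ep^{2N-3}c^4x}$. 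With that in hand the cascade does go through as you predict: in \eqref{eqn:g2l+3_f}$_{l_1=0}$ the powers of $a_1^{2N+1}$ produced by $s_1(f_2)$, $s_1(a_1)$ and $s_1(a_2)$ cancel, so the transformed equation is the original with $g_3$ replaced by $s_1(g_3)$, forcing $s_1(g_3)=g_3$ and, inductively, $s_1(g_{2l_1+3})=g_{2l_1+3}$; for $l_2\geq1$, equation \eqref{eqn:g2l+2_f} contains nothing moved by $s_1$; and the case $l_2=0$ yields the twisted $g_2$ law above. Until these verifications are actually carried out, what you have is a correct and well-aimed outline rather than a proof.
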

%%%%%%%%%%%%%%%%%%%%%%%%%%%%%%
% proof
%%%%%%%%%%%%%%%%%%%%%%%%%%%%%%
\begin{proof}
Because of \eqref{eqn:sigma=pi2_w=w1} and \eqref{eqn:relations_sigma_w_mu},
the relations $w_1\mu=\mu w_1$ and $\pi^2\mu=\mu\pi^2$ hold.
For a positive integer $k$, the following holds:
\begin{equation}
 (\mu \pi)^k(c)=\ep^{-k}c,
\end{equation}
which gives $(\mu \pi)^\infty=1$.

Let us prove that the relation $s_1 \mu=\mu s_1$ holds.
Because the relation obviously holds for the action on the parameters, we consider the action on the $f$-variables.
Applying $s_1$ to Equation \eqref{eqn:g1_f}, we obtain
\begin{align}
 &\left(1-\dfrac{{a_1}^{2N+1}f_1}{c^{2}}\right)
 \dfrac{f_1-\ep^{2N-3}c^{6-4N}s_1(g_1)}{f_1-q^{2N+1}\ep^{2N-3}c^4s_1(g_1)}\notag\\
 &+\dfrac{{a_1}^{2N+1}(c^2-{a_1}^{2N+1}f_1)}{{a_1}^{2N+1}c^2-f_1}\left\{\,
 \sum_{k=1}^{N-1}\dfrac{\displaystyle\prod_{i=1}^k f_{2i-1}}{\left(\displaystyle\prod_{i=1}^{2k}{a_i}^{2N+1}\right)c^{4k} \left(\displaystyle\prod_{i=1}^k f_{2i}\right)}\left(1-\dfrac{f_{2k+1}}{{a_{2k+1}}^{2N+1}c^2}\right)\right.\notag\\
 &\left.\hspace{10em}
 +\dfrac{{a_0}^{2N+1}}{q^{2N+1}c^{4N}\left(\displaystyle\prod_{i=1}^N f_{2i}\right)}\left(\rule{0em}{2.5em}\right.\dfrac{q^{2N}bc}{\displaystyle\prod_{i=0}^{2N-1}{a_i}^{2N-i}}+\displaystyle\prod_{i=1}^N f_{2i-1}\left.\rule{0em}{2.5em}\right)\,\right\}
 =0.
\end{align}
By eliminating the second and third terms of Equation \eqref{eqn:g1_f} using the equation above, we obtain
\begin{align}
 &\left(1-\dfrac{f_1}{{a_1}^{2N+1}c^{2}}\right)
 \dfrac{f_1-\ep^{2N-3}c^{6-4N}g_1}{f_1-q^{2N+1}\ep^{2N-3}c^4g_1}\notag\\
 &=\dfrac{{a_1}^{2N+1}c^2-f_1}{{a_1}^{2N+1}(c^2-{a_1}^{2N+1}f_1)}\,
 \left(1-\dfrac{{a_1}^{2N+1}f_1}{c^{2}}\right)
 \dfrac{f_1-\ep^{2N-3}c^{6-4N}s_1(g_1)}{f_1-q^{2N+1}\ep^{2N-3}c^4s_1(g_1)},
\end{align}
which, upon simplification, gives 
\begin{equation}
 s_1(g_1)=g_1.
\end{equation}
Furthermore, by comparing the equation obtained by applying $s_1$ to Equation \eqref{eqn:g2l+3_f}$_{l_1=0}$ with Equation \eqref{eqn:g2l+3_f}$_{l_1=0}$, we obtain
\begin{equation}
 s_1(g_3)=g_3.
\end{equation}
Similarly, we can prove
\begin{equation}
 s_1(g_{2l_1+3})=g_{2l_1+3},\quad l_1=1,\dots,N-2,
\end{equation}
sequentially from Equation \eqref{eqn:g2l+3_f}.
Finally, by comparing the equation obtained by applying $s_1$ to Equation \eqref{eqn:g2l+2_f} with Equation \eqref{eqn:g2l+2_f}, we obtain:
\begin{equation}
 s_1(g_{2l_2+2})=g_{2l_2+2},\quad l_2=0,\dots,N-1.
\end{equation}
Therefore, we obtain
\begin{equation}
 s_1 \mu(f_l)=s_1(g_l)=g_l=\mu(f_l)=\mu s_1(f_l),\quad 
 l=1,\dots,2N.
\end{equation}
Thus, $s_1 \mu=\mu s_1$ holds.
\end{proof}
%%%%%%%%%%%%%%%%%%%%%%%%%%%%%%
%%%%%%%%%%%%%%%%%%%%%%%%%%%%%%

Defining the transformation $\mu_0$ and $\mu_1$ as
\begin{equation}\label{eqn:def_mu0mu1}
\mu_0=\pi^{-1}\mu\pi,\quad
 \mu_1=\mu,
\end{equation}
the following lemma holds.

%%%%%%%%%%%%%%%%%%%%%%%%%%%%%%
%% Lemma
%%%%%%%%%%%%%%%%%%%%%%%%%%%%%%
\begin{lemma}\label{lemma:fundamental_mu}
The following relations hold:
\begin{equation}\label{eqn:fundamental_mu}
 {\mu_i}^2=(\mu_0\mu_1)^\infty=1,\quad
 \mu_i s_j=s_j \mu_i,\quad
 \mu_i w_k=w_k \mu_i,\quad
 r\mu_0=\mu_1 r,\quad
 r\mu_1=\mu_0 r,
\end{equation}
where 
\begin{equation}
 i=0,1,\quad
 j=0,\dots,2N,\quad
 k=0,1.
\end{equation}
\end{lemma}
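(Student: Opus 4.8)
The plan is to obtain every relation in \eqref{eqn:fundamental_mu} purely formally from the four relations \eqref{eqn:relations_si_w1_pi_mu} of the preceding lemma, together with the defining relations \eqref{eqns:relations_si_w1_pi} of $\widetilde{W}\left((A_{2N}\rtimes A_1)^{(1)}\right)$ and the definitions $\mu_0=\pi^{-1}\mu\pi$, $\mu_1=\mu$, $\si=\pi^2$, $w_0=\pi^2w_1$, $w_1=w$ and $r=\pi w_1$; only the single infinite-order relation will require an explicit look at the parameter action. The involutive relations are immediate: $\mu_1{}^2=\mu^2=1$ by \eqref{eqn:relations_sigma_w_mu}, and $\mu_0{}^2=\pi^{-1}\mu^2\pi=1$. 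First I would record the conjugation identity $\pi\mu\pi^{-1}=\pi^{-1}\mu\pi=\mu_0$, which is forced by $\pi^2\mu=\mu\pi^2$; thus conjugation by $\pi$ interchanges $\mu_0$ and $\mu_1$, and each of $\mu_0,\mu_1$ commutes with $\si=\pi^2$.

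The central step is the commutation $\mu_i s_j=s_j\mu_i$ for all $j\in\bbZ/(2N+1)\bbZ$. Here the decisive observation is that $2N+1$ is odd, so $\gcd(2,2N+1)=1$ and the residues $1+2m$ exhaust $\bbZ/(2N+1)\bbZ$ as $m$ varies; since $\pi s_k\pi^{-1}=s_{k+1}$ yields $\si^m s_1\si^{-m}=\pi^{2m}s_1\pi^{-2m}=s_{1+2m}$, every simple reflection $s_j$ is a $\si$-conjugate of $s_1$. As $\mu$ commutes with both $\si$ and $s_1$ by \eqref{eqn:relations_si_w1_pi_mu}, it then commutes with every $s_j$. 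Conjugating this by $\pi$ and using $\pi s_j=s_{j+1}\pi$ gives $\mu_0 s_j=\pi^{-1}\mu s_{j+1}\pi=\pi^{-1}s_{j+1}\mu\pi=s_j\pi^{-1}\mu\pi=s_j\mu_0$, completing the case $i=0$.

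The remaining formal relations are short computations from the same ingredients. For $\mu_1$ the relation $\mu_1 w_1=w_1\mu_1$ is already in \eqref{eqn:relations_si_w1_pi_mu}, and $w_0=\si w_1$ with $\mu\si=\si\mu$ gives $\mu_1 w_0=w_0\mu_1$. For $\mu_0$ I would combine $\pi w_1=w_1\pi^{-1}$ (equivalently $\pi^{-1}w_1=w_1\pi$), $\mu w_1=w_1\mu$ and $\pi\mu\pi^{-1}=\mu_0$ to get $\mu_0 w_1=w_1\mu_0$, and then $\mu_0 w_0=w_0\mu_0$ follows as before. For the diagram automorphism, writing $r=\pi w_1$ and using the commutations with $w_1$ just established together with $\mu\pi^2=\pi^2\mu$, I would verify $r\mu_1=\pi\mu w_1=\mu_0 r$ and $r\mu_0=\mu\pi w_1=\mu_1 r$ directly.

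Finally, $(\mu_0\mu_1)^\infty=1$ is the one relation that does not reduce to rewriting, and for it I would use the explicit action on $c$. From $\mu_1(c)=\ep/c$, $\mu_0(c)=1/(\ep c)$ and $\mu_i(\ep)=\ep$ one finds $(\mu_0\mu_1)(c)=\ep^2c$, so $(\mu_0\mu_1)^k(c)=\ep^{2k}c\neq c$ for all $k\geq1$ and $\mu_0\mu_1$ has infinite order, exactly paralleling the treatment of $(\mu\pi)^\infty=1$ in the preceding lemma. The hard part is really the central step: the genuine content lies in upgrading the single relation $s_1\mu=\mu s_1$---which was itself proved only after a long elimination using \eqref{eqns:g_f}---to commutation with every $s_j$, and it is precisely the coprimality $\gcd(2,2N+1)=1$ that lets this be done by formal $\si$-conjugation rather than by repeating that elimination for each index.
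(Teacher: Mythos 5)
Your proposal is correct, and its overall architecture coincides with the paper's: all relations except $(\mu_0\mu_1)^\infty=1$ are deduced formally from the preceding lemma \eqref{eqn:relations_si_w1_pi_mu} together with \eqref{eqn:relations_sigma_w_mu}, \eqref{eqns:relations_si_w1_pi} and the definitions of $w_0$, $r$, $\mu_0$, $\mu_1$, while the infinite-order relation is read off from the action on $c$, exactly as in the paper (which likewise computes $(\mu_0\mu_1)^k(c)=\ep^{2k}c$). The one genuine difference is how you establish the central commutation $\mu_1 s_j=s_j\mu_1$ for all $j$. The paper does this in three stages: the odd indices via $s_{2i+1}=\pi^{2i}s_1\pi^{-2i}$, then $s_{2N}$ via the $w_1$-conjugation $w_1s_{2N}=s_1w_1$, and finally the even indices $s_{2j}=\pi^{2(j-N)}s_{2N}\pi^{2(N-j)}$ as $\si$-conjugates of $s_{2N}$. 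You instead invoke $\gcd(2,2N+1)=1$: the residues $1+2m$ exhaust $\bbZ/(2N+1)\bbZ$, so every $s_j$ is already a $\si$-conjugate of $s_1$, and commutation with $\si$ and $s_1$ suffices. This is a cleaner, uniform treatment that renders the $w_1$-conjugation step of the paper unnecessary and isolates the reason the argument works for $A_{2N}^{(1)}$ (odd Coxeter label) rather than $A_{2N+1}^{(1)}$. Your remaining verifications ($\mu_0{}^2=1$, commutation with $w_0,w_1$, and $r\mu_0=\mu_1 r$, $r\mu_1=\mu_0 r$) agree with the paper's computations up to trivial rearrangement, e.g.\ the paper derives $r\mu_1=\mu_0 r$ from $r\mu_0=\mu_1 r$ using $r^2=1$ where you verify both directly.
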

%%%%%%%%%%%%%%%%%%%%%%%%%%%%%%
% proof
%%%%%%%%%%%%%%%%%%%%%%%%%%%%%%
\begin{proof}
Relation $(\mu_0\mu_1)^\infty=1$ holds becuse for positive integer $k$, the following holds:
\begin{equation}
 (\mu_0\mu_1)^k(c)=\ep^{2k}c.
\end{equation}
The other relations are shown using Equations \eqref{eqn:relations_sigma_w_mu}, \eqref{eqns:relations_si_w1_pi}, \eqref{eqn:def_w0_r}, \eqref{eqns:fundamental_A2NA1}, and \eqref{eqn:relations_si_w1_pi_mu} as follows: 
\begin{subequations}
\begin{align}
 &{\mu_0}^2
 =(\pi^{-1}\mu\pi)(\pi^{-1}\mu\pi)
 =\pi^{-1}\mu^2\pi
 =1,\\
 &{\mu_1}^2=\mu^2=1,\\
 &r\mu_0
 =(\pi w_1)(\pi^{-1}\mu\pi)
=w_1 \pi^{-2}\mu\pi
=w_1 \mu\pi^{-1}
=\mu w_1 \pi^{-1}
=\mu (\pi w_1)
=\mu_1 r,\\
 &r\mu_1
 =r\mu_1r^2
 =r (\mu_1 r) r
 =r (r \mu_ 0) r
 =r^2 \mu_ 0 r
 =\mu_ 0 r,\\
 &\mu_1 w_1
 =\mu w_1
 =w_1 \mu
 =w_1 \mu_1,\\
 &\mu_1 w_0
 =\mu (\pi^2 w_1)
 =(\pi^2 w_1)\mu
 =w_0 \mu_1,\\
 &\mu_0 w_1
 =\mu_0 r^2 w_1
 =(\mu_0 r) (r w_1)
 =r(\mu_1 w_0)r
 =r(w_0 \mu_1)r
 =(w_1 r)(r \mu_0)
 =w_1 \mu_0,\\
 &\mu_0 w_0
 =\mu_0 r^2 w_0
 =(\mu_0 r)(r w_0)
 =r(\mu_1 w_1)r
 =r(w_1\mu_1)r
 =(w_0 r)(r \mu_0)
 =w_0\mu_0,\\
 &\mu_1 s_1
 =\mu s_1
 =s_1 \mu
 =s_1\mu_1,\\
 &\mu_1 s_{2i+1}
 =\mu (\pi^{2i}s_1\pi^{-2i})
 =(\pi^{2i}s_1\pi^{-2i})\mu
 =s_{2i+1}\mu_1,\\
 &\mu_1 s_{2N}
 =\mu_1 {w_1}^2 s_{2N}
 =(\mu_1 w_1)(w_1 s_{2N})
 =w_1 (\mu_1 s_1) w_1
 =w_1 (s_1 \mu_1) w_1
 =s_{2N}\mu_1,\\
 &\mu_1 s_{2j}
 =\mu_1(\pi^{2(j-N)}s_{2N} \pi^{2(N-j)})
 =(\pi^{2(j-N)}s_{2N} \pi^{2(N-j)})\mu_1
 =s_{2j}\mu_1,\\
 &\mu_0 s_k
 =(\pi^{-1}\mu_1 \pi) s_k
 =\pi^{-1}\mu_1 s_{k+1} \pi
 =\pi^{-1} s_{k+1} \mu_1 \pi 
 =s_k(\pi^{-1}\mu_1 \pi)
 =s_k \mu_0,
\end{align}
\end{subequations}
where
\begin{equation}
 i=1,\dots,N-1,\quad
 j=0,\dots,N-1,\quad
 k\in\bbZ/(2N+1)\bbZ.
\end{equation}
\end{proof}
%%%%%%%%%%%%%%%%%%%%%%%%%%%%%%
%%%%%%%%%%%%%%%%%%%%%%%%%%%%%%

From equation \eqref{eqn:fundamental_mu}, we find that the transformation group $\langle \mu_0,\mu_1\rangle$ forms the affine Weyl group of type $A_1^{(1)}$, and is orthogonal to the transformation group
\begin{equation}
 W\left((A_{2N}\rtimes A_1)^{(1)}\right)=\langle s_0,\dots,s_{2N}\rangle\rtimes\langle w_0,w_1\rangle.
\end{equation}
The transformation $r$ corresponds not only to a reflection of the Dynkin diagram of type $A_{2N}^{(1)}$ associated with $\langle s_0,\dots,s_{2N}\rangle$ and that of type $A_1^{(1)}$ associated with $\langle w_0,w_1\rangle$, but also to a reflection of the Dynkin diagram of type $A_1^{(1)}$ associated with $\langle \mu_0,\mu_1\rangle$.
Therefore, we refer to the transformation group
\begin{equation}
 \langle s_0,\dots,s_{2N},w_0,w_1,\mu_0,\mu_1,r\rangle=\left(W\left((A_{2N}\rtimes A_1)^{(1)}\right)\times \langle \mu_0,\mu_1\rangle\right)\rtimes\langle r\rangle
\end{equation}
as an extended affine Weyl group of type $(A_{2N}\rtimes A_1)^{(1)}\times A_1^{(1)}$ and denote it as 
\begin{equation}
 \widetilde{W}\left((A_{2N}\rtimes A_1)^{(1)}\times A_1^{(1)}\right),
\end{equation}
that is,
\begin{equation}
 \widetilde{W}\left((A_{2N}\rtimes A_1)^{(1)}\times A_1^{(1)}\right)
 =\bigg(\Big(\langle s_0,\dots,s_{2N}\rangle\rtimes\langle w_0,w_1\rangle\Big)\times\langle \mu_0,\mu_1\rangle\bigg)\rtimes\langle r\rangle.
\end{equation}
Thus, we have completed the proof of Theorem \ref{theorem:main}.

%%%%%%%%%%%%%%%%%%%%%%%%%%%%%%
%% Remark
%%%%%%%%%%%%%%%%%%%%%%%%%%%%%%
\begin{remark}\label{remark:A4_A2A1A1}
In the case $N=1$, the group $\widetilde{W}\left((A_2\rtimes A_1)^{(1)}\times A_1^{(1)}\right)$ can be reconstructed as a subgroup of the extended affine Weyl symmetry group of type $A_4^{(1)}$ for $A_4^{(1)}$-surface type $q$-Painlev\'e equations \cite{SakaiH2001:MR1882403,TsudaT2006:MR2247459,JNS2016:MR3584386}.
Indeed, all generators of $\widetilde{W}\left((A_2\rtimes A_1)^{(1)}\times A_1^{(1)}\right)$ are given by the elements of the transformation group in \cite{JNS2016:MR3584386}:
\begin{equation}
 \widetilde{W}\left(A_4^{(1)}\right)=\langle {\bm s_0},{\bm s_1},{\bm s_2},{\bm s_3},{\bm s_4}\rangle \rtimes \langle{\bm \sigma},{\bm \iota}\rangle
\end{equation}
as the following:
\begin{subequations}
\begin{align}
 &s_0={\bm s_0},\quad
 s_1={\bm s_3}{\bm s_4}{\bm s_3},\quad
 s_2={\bm s_1}{\bm s_2}{\bm s_1},\quad
 w_0={\bm \sigma}{\bm \iota}{\bm s_2}{\bm s_4},\quad
 w_1={\bm \iota},\\
 &\mu_0={\bm s_4}{\bm s_0}{\bm s_1}{\bm s_0}{\bm s_4},\quad
 \mu_1={\bm s_2}{\bm s_3}{\bm s_2},\quad
 r={\bm \sigma^3}{\bm \iota}{\bm s_4}. 
\end{align}
\end{subequations}
In this case, the correspondence between the parameters $\{a_0,a_1,a_2,b,c,q\}$ and variables $\{f_1,f_2\}$ on which $\widetilde{W}\left((A_2\rtimes A_1)^{(1)}\times A_1^{(1)}\right)$ acts and the parameters $\{{\bm a_0},{\bm a_1},{\bm a_2},{\bm a_3},{\bm a_4},{\bm q}\}$ and variables $\{{\bm f}_1^{(i)},{\bm f}_2^{(i)}\}_{i=1,\dots,5}$ on which $\widetilde{W}\left(A_4^{(1)}\right)$ acts is given below.
\begin{subequations}
\begin{align}
 &a_0={\bm a_0}^{-1/6},\quad
 a_1={\bm a_3}^{-1/6}{\bm a_4}^{-1/6},\quad
 a_2={\bm a_1}^{-1/6}{\bm a_2}^{-1/6},\\
 &b=-\dfrac{{\bm a_0}^{1/4}{\bm a_1}^{7/12}{\bm a_2}^{1/12}{\bm a_3}^{5/12}}{{\bm a_4}^{1/12}},\quad
 c=\dfrac{{\bm a_0}^{1/4}{\bm a_1}^{1/4}{\bm a_4}^{1/4}}{{\bm a_2}^{1/4}{\bm a_3}^{1/4}},\quad
 q={\bm q}^{-1/6},\\
 &f_1=\dfrac{{\bm a_0}^{1/2}{\bm a_1}^{1/2}}{{\bm a_2}^{1/2}}\left(1-\dfrac{{\bm a_0}{\bm a_4}}{{\bm a_2}{\bm a_3}}{\bm f}_1^{(2)}{\bm f}_1^{(3)}\right),\quad
 f_2=-\dfrac{{\bm a_0}^{1/2}}{{\bm a_3}^{1/2}{\bm a_4}^{1/2}}{\bm f}_1^{(3)}.
\end{align}
\end{subequations}
\end{remark}
%%%%%%%%%%%%%%%%%%%%%%%%%%%%%%
%%%%%%%%%%%%%%%%%%%%%%%%%%%%%%

%%%%%%%%%%%%%%%%%%%%%%%%%%%%%%
%% Remark
%%%%%%%%%%%%%%%%%%%%%%%%%%%%%%
\begin{remark}\label{remark:wT}
The transformation $\widetilde{T}$ in \S \ref{Introduction} is given by, for example, 
\begin{equation}
 \widetilde{T}=\mu_0\mu_1\in\widetilde{W}\left((A_{2N}\rtimes A_1)^{(1)}\times A_1^{(1)}\right)
\end{equation}
whose action on the parameters $\{a_0,\dots,a_{2N},b,c,q\}$ is
\begin{equation}
 \widetilde{T}(a_i)=a_i,~i=0,\dots,2N,\quad
 \widetilde{T}(b)=b,\quad
 \widetilde{T}(c)=\ep^2c,\quad
 \widetilde{T}(q)=q,
\end{equation}
where $\ep=q^{\frac{2N+1}{1-2N}}$.
\end{remark}
%%%%%%%%%%%%%%%%%%%%%%%%%%%%%%
%%%%%%%%%%%%%%%%%%%%%%%%%%%%%%

%%%%%%%%%%%%%%%%%%%%%%%%%%%%%%%%%%%%%%%%%%%%%%%%%%%%%%%%%%%
%% Concluding remarks
%%%%%%%%%%%%%%%%%%%%%%%%%%%%%%%%%%%%%%%%%%%%%%%%%%%%%%%%%%%
\section{Concluding remarks}\label{ConcludingRemarks}
In this study, we extended the birational representation of an extended affine Weyl group of $(A_{2N}\rtimes A_1)^{(1)}$-type, the symmetry of the system \eqref{eqn:intro_dP_odd} obtained in \cite{nakazono2023higerA4A6}, to a birational representation of an extended affine Weyl group of $(A_{2N}\rtimes A_1)^{(1)}\times A_1^{(1)}$-type.
For this purpose, we used the system \eqref{eqns:PDEs_UV} with the CAC property.

Surprisingly, a Painlev\'e-type {\ODE} is related to a system of {\PDE} such as system \eqref{eqns:PDEs_UV}, which consists of an alternation of two types of CAC-cubes (see Appendix \ref{section:system_CAC_proof}).
This finding suggests that the theory of CAC property is deeply connected to the theory of Painlev\'e type {\ODE}s.
We provide conjectures on periodic reductions from systems of {\PDE}s with the CAC property to Painlev\'e type {\ODE}s in Appendix \ref{section:conjecture}.

%%%%%%%%%%%%%%%%%%%%%%%%%%%%%%
%% Acknowledgment
%%%%%%%%%%%%%%%%%%%%%%%%%%%%%%
\subsection*{Acknowledgment}
This work was supported by JSPS KAKENHI, Grant Numbers JP19K14559 and JP23K03145.
%%%%%%%%%%%%%%%%%%%%%%%%%%%%%%
%% Data availability 
%%%%%%%%%%%%%%%%%%%%%%%%%%%%%%
\subsection*{Data availability statement}
Any data that support the findings of this study are included within the article.
%%%%%%%%%%%%%%%%%%%%%%%%%%%%%%%%%%%%%%%%%%%%%%%%%%%%%%%
%%% Appendix
%%%%%%%%%%%%%%%%%%%%%%%%%%%%%%%%%%%%%%%%%%%%%%%%%%%%%%%
\appendix
%%%%%%%%%%%%%%%%%%%%%%%%%%%%%%%%%%%%%%%%%%%%%%%%%%%%%%%%%%%
%% A. CAC and tetrahedron properties of the system \eqref{eqns:PDEs_UV}
%%%%%%%%%%%%%%%%%%%%%%%%%%%%%%%%%%%%%%%%%%%%%%%%%%%%%%%%%%%
\section{CAC and tetrahedron properties of the system \eqref{eqns:PDEs_UV}}\label{section:system_CAC_proof}
This Appendix shows that system \eqref{eqns:PDEs_UV} has the CAC and tetrahedron properties.
(See \cite{BS2002:MR1890049,NijhoffFW2002:MR1912127,WalkerAJ:thesis,NW2001:MR1869690} for the CAC and tetrahedron properties.)

Let us consider system \eqref{eqns:PDEs_UV} as a system of equations around cubes, as follows:
Assign the variables $U_{l,m}$, $V_{l,m}$ to vertices $(l,m,0),(l,m,1)\in\bbZ^3$, respectively.
Subsequently, for fixed $l,m\in\bbZ$, the system \eqref{eqns:PDEs_UV} can be regarded as equations on the faces (face equations) of the cube, whose vertices are given by:
\begin{align}\label{eqn:lattice_lm0}
 &(l,m,0),\quad
 (l+1,m,0),\quad
 (l,m+1,0),\quad
 (l+1,m+1,0),\notag\\
 &(l,m,1),\quad
 (l+1,m,1),\quad
 (l,m+1,1),\quad
 (l+1,m+1,1).
\end{align}
Here, we label the six faces of the cube as ${\mathcal A}$, ${\mathcal A}'$, ${\mathcal B}$, ${\mathcal B}'$, ${\mathcal C}$, ${\mathcal C}'$ as illustrated in Figure \ref{fig:cubeUV}.
Subsequently, from the system \eqref{eqns:PDEs_UV} the six face equations are given by
\begin{subequations}\label{eqns:UV_cube_even}
\begin{align}
%EqnA
 &{\mathcal A}:~\dfrac{~\widetilde{\overline{U}}~}{U}+\dfrac{\ga^4\widetilde{U}}{\overline{U}}+\al_l\,\be_m\ga=0,
%EqnA'
 &&{\mathcal A}':~\dfrac{~\widetilde{\overline{V}}~}{V}+\dfrac{\de^4\widetilde{V}}{\overline{V}}+\al_l\,\be_m\de=0,\\
%EqnB
 &{\mathcal B}:~\dfrac{~\widetilde{U}~}{V}=\dfrac{~\widetilde{V}~}{U},
%EqnB' 
 &&{\mathcal B}':~\dfrac{~\widetilde{\overline{U}}~}{\overline{V}}=\dfrac{\ga^4}{\de^4}\left(\dfrac{~\widetilde{\overline{V}}~}{\overline{U}}+\dfrac{\be_m\de^4(\ga-\de)}{\ep^{l-2m}}\right),\\
%EqnC
 &{\mathcal C}:~\dfrac{~U~}{\overline{V}}=\dfrac{\ga^3}{\de^3}\left(\dfrac{~V~}{\overline{U}}-\dfrac{\de^3(\ga-\de)}{\al_l\ep^{l-2m}}\right),
%EqnC'
 &&{\mathcal C}':~\dfrac{~\widetilde{U}~}{\widetilde{\overline{V}}}=\dfrac{\ga}{\de}\left(\dfrac{~\widetilde{\overline{U}}~}{\widetilde{V}}+\dfrac{\ga(\ga-\de)}{\al_l\ep^{l-2m-3}}\right)^{-1},
\end{align}
\end{subequations}
if $(l+m)$ is even,
and by
\begin{subequations}\label{eqns:UV_cube_odd}
\begin{align}
%EqnA
 &{\mathcal A}:~\dfrac{~\widetilde{\overline{U}}~}{U}+\dfrac{\widetilde{U}}{\ga^4\overline{U}}+\dfrac{\al_l\,\be_m}{\ga}=0,
%EqnA'
 &&{\mathcal A}':~\dfrac{~\widetilde{\overline{V}}~}{V}+\dfrac{\widetilde{V}}{\de^4\overline{V}}+\dfrac{\al_l\,\be_m}{\de}=0,\\
%EqnB
 &{\mathcal B}:~\dfrac{~\widetilde{U}~}{V}=\dfrac{\ga^4}{\de^4}\left(\dfrac{~\widetilde{V}~}{U}+\dfrac{\be_m\de^4(\ga-\de)}{\ep^{l-2m-1}}\right),
%EqnB' 
 &&{\mathcal B}':~\dfrac{~\widetilde{\overline{U}}~}{\overline{V}}=\dfrac{~\widetilde{\overline{V}}~}{\overline{U}},\\
%EqnC
 &{\mathcal C}:~\dfrac{~U~}{\overline{V}}=\dfrac{\ga}{\de}\left(\dfrac{~\overline{U}~}{V}+\dfrac{\ga(\ga-\de)}{\al_l\ep^{l-2m-1}}\right)^{-1},
%EqnC'
 &&{\mathcal C}':~\dfrac{~\widetilde{U}~}{\widetilde{\overline{V}}}=\dfrac{\ga^3}{\de^3}\left(\dfrac{~\widetilde{V}~}{\widetilde{\overline{U}}}-\dfrac{\de^3(\ga-\de)}{\al_l\ep^{l-2m-2}}\right),
\end{align}
\end{subequations}
if $(l+m)$ is odd.
Here,
\begin{subequations}
\begin{align}
 &U=U_{l,m},\quad
 \overline{U}=U_{l+1,m},\quad
 \widetilde{U}=U_{l,m+1},\quad
 \widetilde{\overline{U}}=U_{l+1,m+1},\\
 &V=V_{l,m},\quad
 \overline{V}=V_{l+1,m},\quad
 \widetilde{V}=V_{l,m+1},\quad
 \widetilde{\overline{V}}=V_{l+1,m+1}.
\end{align}
\end{subequations}
Denote the cubes given by \eqref{eqns:UV_cube_even} and \eqref{eqns:UV_cube_odd} as $C^{\rm (e)}_{l,m}$ and $C^{\rm (o)}_{l,m}$, respectively. 
By direct calculation, we can verify that both cubes $C^{\rm (e)}_{l,m}$ and $C^{\rm (o)}_{l,m}$ have the CAC and tetrahedron properties.
Because all cubes composing the system \eqref{eqns:PDEs_UV} have the CAC and tetrahedron properties, system \eqref{eqns:PDEs_UV} is said to have the CAC and tetrahedron properties.
Note that the tetrahedron relations of the cube $C^{\rm (e)}_{l,m}$ are given by
\begin{subequations}
\begin{align}
 &\overline{U}\,\widetilde{U}-\dfrac{\al_l\ep^{l-2m-3}}{\de(\ga-\de)}\left(\overline{U}\,\widetilde{\overline{V}}+\ga^2\ep\,\widetilde{U}\,V+\al_l\,\be_m\,\de\,\overline{U}\,V\right)=0,\\
 &\overline{V}\,\widetilde{V}+\dfrac{\al_l\ep^{l-2m-3}}{\ga(\ga-\de)}\left(\widetilde{\overline{U}}\,\overline{V}+\de^2\ep\,U\,\widetilde{V}+\al_l\,\be_m\,\ga\,U\,\overline{V}\right)=0,
\end{align}
\end{subequations}
while those of the cube $C^{\rm (o)}_{l,m}$ are given by
\begin{subequations}
\begin{align}
 &V\,\widetilde{\overline{V}}+\dfrac{\al_l\ep^{l-2m-2}}{\ga^3(\ga-\de)}\left(\widetilde{U}\,V+\ga^2\ep\,\overline{U}\,\widetilde{\overline{V}}+\al_l\,\be_m\,\ga^3\overline{U}\,V\right)=0,\\
 &U\,\widetilde{\overline{U}}-\dfrac{\al_l\ep^{l-2m-2}}{\de^3(\ga-\de)}\left(U\,\widetilde{V}+\de^2\ep\,\widetilde{\overline{U}}\,\overline{V}+\al_l\,\be_m\,\de^3U\,\overline{V}\right)=0.
\end{align}
\end{subequations}

%%%%%%%%%%%%%%%%%%%%%%%%%%%%%%
%%% Figure 
%%%%%%%%%%%%%%%%%%%%%%%%%%%%%%
\begin{figure}[htbp]
\begin{center}
 \includegraphics[width=0.5\textwidth]{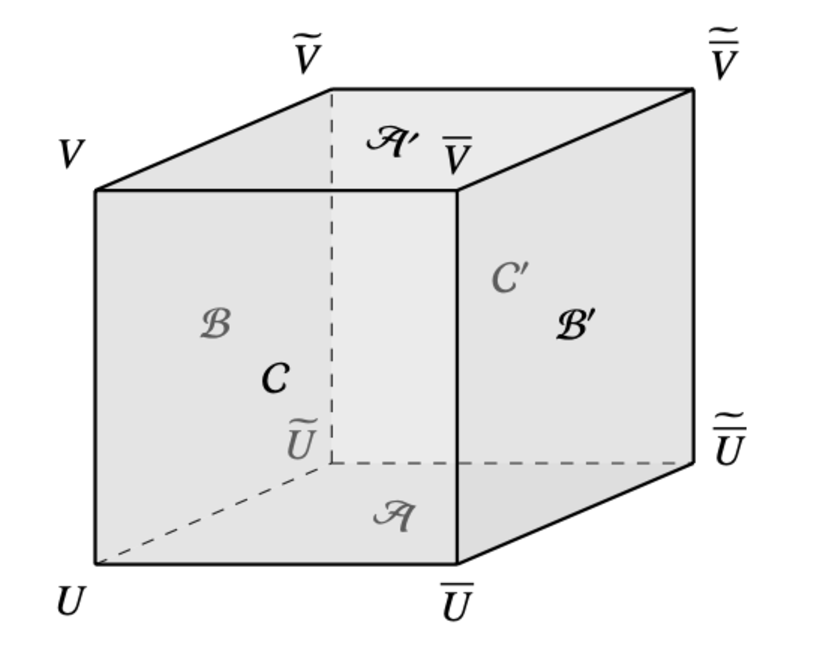}
\end{center}
\caption{A cube with 8 variables labelled by $\{U,\overline{U},\widetilde{U},\widetilde{\overline{U}},V,\overline{V},\widetilde{V},\widetilde{\overline{V}}\}$ and 6 faces labelled by $\{{\mathcal A},{\mathcal A}',{\mathcal B},{\mathcal B}',{\mathcal C},{\mathcal C}'\}$.
Bottom: ${\mathcal A}$,
top: ${\mathcal A}'$,
left: ${\mathcal B}$,
right: ${\mathcal B}'$,
front: ${\mathcal C}$,
back: ${\mathcal C}'$.
}
\label{fig:cubeUV}
\end{figure}
%%%%%%%%%%%%%%%%%%%%%%%%%%%%%%
%%%%%%%%%%%%%%%%%%%%%%%%%%%%%%

%%%%%%%%%%%%%%%%%%%%%%%%%%%%%%
%% Remark
%%%%%%%%%%%%%%%%%%%%%%%%%%%%%%
\begin{remark}
Adler-Bobenko-Suris \cite{ABS2003:MR1962121,ABS2009:MR2503862} and Boll \cite{BollR2012:MR3010833,BollR2011:MR2846098,BollR:thesis} classified six face equations of cubes with the CAC and tetrahedron properties.
The face equations given by systems \eqref{eqns:UV_cube_even} and \eqref{eqns:UV_cube_odd} belong to $D_4$ of the type $H^6$ in Boll's classification.
\end{remark}
%%%%%%%%%%%%%%%%%%%%%%%%%%%%%%
%%%%%%%%%%%%%%%%%%%%%%%%%%%%%%

%%%%%%%%%%%%%%%%%%%%%%%%%%%%%%
%% Remark
%%%%%%%%%%%%%%%%%%%%%%%%%%%%%%
\begin{remark}
Conversely, system \eqref{eqns:PDEs_UV} can be regarded as a system of {\PDE}s obtained by space-filling two types of cubes $C^{\rm (e)}_{l,m}$ and $C^{\rm (o)}_{l,m}$ into the sublattice of lattice $\bbZ^3$ whose vertices are given by \eqref{eqn:lattice_lm0}.
The cubes $C^{\rm (e)}_{l,m}$ and $C^{\rm (o)}_{l,m}$ are arranged in quadrilaterals of lattice $\bbZ^2$ as illustrated in Figure \ref{fig:2cubes}.
\end{remark}
%%%%%%%%%%%%%%%%%%%%%%%%%%%%%%
%%%%%%%%%%%%%%%%%%%%%%%%%%%%%%

%%%%%%%%%%%%%%%%%%%%%%%%%%%%%%
%%% Figure 
%%%%%%%%%%%%%%%%%%%%%%%%%%%%%%
\begin{figure}[htbp]
\begin{center}
 \includegraphics[width=0.55\textwidth]{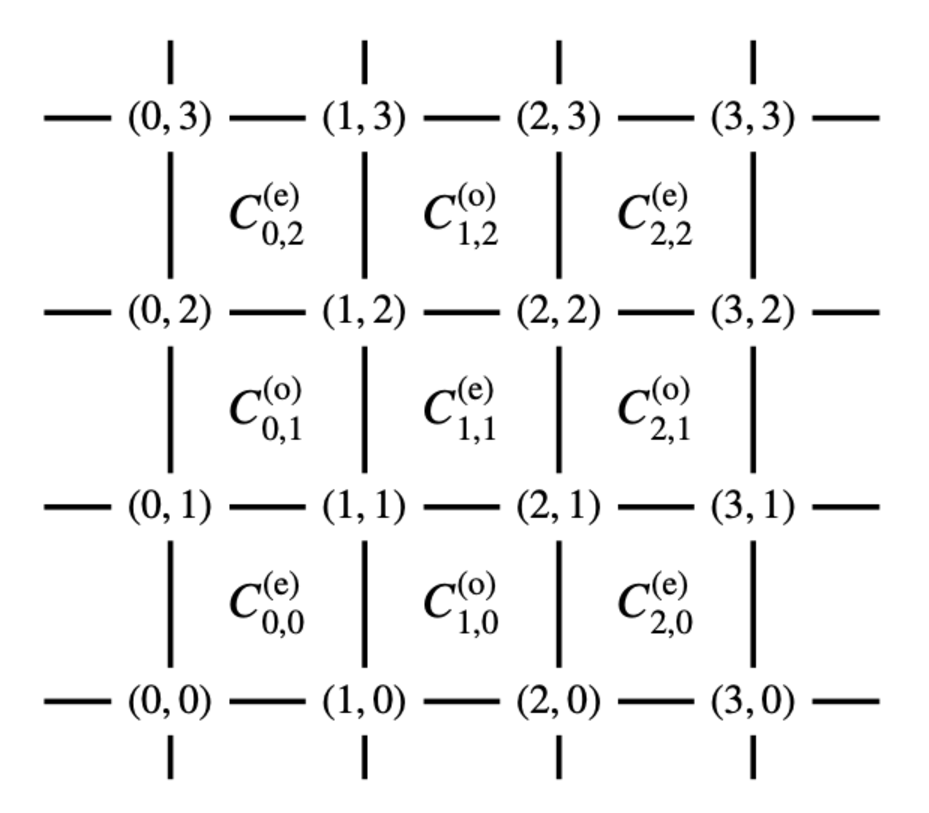}
\end{center}
\caption{Arrangement of the cubes $C^{\rm (e)}_{l,m}$ and $C^{\rm (o)}_{l,m}$.
These cubes are alternately positioned on quadrilaterals of the lattice $\bbZ^2$.}
\label{fig:2cubes}
\end{figure}
%%%%%%%%%%%%%%%%%%%%%%%%%%%%%%
%%%%%%%%%%%%%%%%%%%%%%%%%%%%%%

%%%%%%%%%%%%%%%%%%%%%%%%%%%%%%%%%%%%%%%%%%%%%%%%%%%%%%%%%%%
%% B. Conjectures on periodic reductions from systems of {\PDE}s with the CAC property to Painlev\'e type $q$-{\ODE}s
%%%%%%%%%%%%%%%%%%%%%%%%%%%%%%%%%%%%%%%%%%%%%%%%%%%%%%%%%%%
\section{Conjectures on periodic reductions from systems of {\PDE}s with the CAC property to Painlev\'e type $q$-{\ODE}s}\label{section:conjecture}
Fix the integers $n\geq 1$ and $L\geq 0$.
In this appendix, we label a pair, 
the system of {\PDE}s
\begin{subequations}\label{eqns:ABS_U}
\begin{align}
 &\la(l_{0\cdots n})^2\dfrac{U_{\overline{ij}}}{U}=\dfrac{\al^{(i)}(l_i) U_{\overline{i}}-\al^{(j)}(l_j) U_{\overline{j}}}{\al^{(j)}(l_j) U_{\overline{i}}-\al^{(i)}(l_i) U_{\overline{j}}},
 &&i<j,\quad i,j\in\{1,\dots,n\},\label{eqn:ABS_U_ij}\\
 &\dfrac{U_{\overline{0\,k}}}{U}+\la(l_{0\cdots n})^4\dfrac{U_{\overline{0}}}{U_{\overline{k}}}+\al^{(k)}(l_k) \ka(l_0) \la(l_{0\cdots n})=0,
 &&k=1,\dots,n,\label{eqn:ABS_U_0k}
\end{align}
\end{subequations}
and the periodic condition
\begin{equation}\label{eqn:ABS_U_cond}
 U(l_1+1,\dots,l_n+1,l_0+L)=U(l_1,\dots,l_n,l_0),
\end{equation} 
as an $(n,L)$-pair.
Note that the $(n,0)$- and $(1,L)$-pairs are exceptions and are defined as follows:
The $(n,0)$-pair is defined by a pair of the system \eqref{eqns:ABS_U} and the condition 
\begin{subequations}\label{eqn:ABS_U_cond_L=0}
\begin{align}
 &U(l_1,\dots,l_n,l_0)=G(l_1,\dots,l_n,l_0)\omega(l_1,\dots,l_n,l_0),\\
 &\omega(l_1+1,\dots,l_n+1,l_0)=\omega(l_1,\dots,l_n,l_0),
\end{align}
\end{subequations}
where, depending on the value of $n$, it is necessary to specify an appropriate gauge function $G(l_1,\dots,l_n,l_0)$.
(See \cite{JN2016:MR3597921} for the $(3,0)$-pair.)
On the other hand, the $(1,L)$-pair is defined as follow:
\begin{align}
 &\dfrac{U(l_1+1,l_0+1)}{U(l_1,l_0)}+\la(l_0+l_1)^4\dfrac{U(l_1,l_0+1)}{U(l_1+1,l_0)}+\al^{(1)}(l_1) \ka(l_0) \la(l_0+l_1)=0,\label{eqn:D4}\\
 &U(l_1+1,l_0+L)=U(l_1,l_0).\label{eqn:D4_cond}
\end{align}
Here, $l_0,\dots,l_n\in\bbZ$ are lattice parameters, 
$\{\al^{(1)}(l),\dots,\al^{(n)}(l),\ka(l),\la_0\}_{l\in\bbZ}$ are complex parameters and
\begin{subequations}
\begin{align}
 &U=U(l_1,\dots,l_n,l_0),\quad
 U_{\overline{i}}=U|_{\,l_i\to l_i+1},\quad
 U_{\overline{ij}}=U|_{(l_i,l_j)\to (l_i+1,l_j+1)},\\
 &l_{0\cdots n}:=\sum_{i=0}^nl_i,\quad
 \la(l)=\begin{cases}
 \la_0&l\in2\bbZ,\\[0.5em]
 \dfrac{1}{\la_0}&\text{otherwise}.
 \end{cases} 
\end{align}
\end{subequations}

%%%%%%%%%%%%%%%%%%%%%%%%%%%%%%
%% Remark
%%%%%%%%%%%%%%%%%%%%%%%%%%%%%%
\begin{remark}
The system \eqref{eqns:ABS_U} has the CAC and tetrahedron properties.
Equation \eqref{eqn:ABS_U_ij} with fixed $i$ and $j$ is known as the lattice modified KdV equation \cite{NC1995:MR1329559,NQC1983:MR719638} and belongs to H3 in ABS's classification \cite{ABS2003:MR1962121,ABS2009:MR2503862}, whereas Equation \eqref{eqn:ABS_U_0k} with fixed $k$ belongs to $D_4$ of the type $H^6$ in Boll's classification \cite{BollR2011:MR2846098,BollR2012:MR3010833,BollR:thesis}.
\end{remark}
%%%%%%%%%%%%%%%%%%%%%%%%%%%%%%
%%%%%%%%%%%%%%%%%%%%%%%%%%%%%%

%%%%%%%%%%%%%%%%%%%%%%%%%%%%%%%%%%%%%%%%%%%%%%%%%%%%%%%%%%%
%% B.1 Conjectures on the $(n,L)$-pair
%%%%%%%%%%%%%%%%%%%%%%%%%%%%%%%%%%%%%%%%%%%%%%%%%%%%%%%%%%%
\subsection{Conjectures on the $(n,L)$-pair}
This subsection presents conjectures on the $(n,L)$-pair.

Before providing conjectures on the $(n,L)$-pair, let us explain its results.
\begin{itemize}
\item 
In \cite{nakazono2023higerA4A6}, we obtained a birational representation of an extended affine Weyl group of $(A_{n-1}\rtimes A_1)^{(1)}$-type from the $(n,1)$-pair, where $n\geq 2$.
Note that the $(2,1)$-pair and $(3,1)$-pair were previously studied in \cite{JNS2015:MR3403054} and \cite{JNS2016:MR3584386}, respectively.
In particular, the birational representation for the $(2N+1,1)$-pair, where $N\geq 1$, was extended in this study to a birational representation of an extended affine Weyl group of type $(A_{2N}\rtimes A_1)^{(1)}\times A_1^{(1)}$. 
\item 
From the $(1,L)$-pair, where $L\geq 2$, we obtain the Painlev\'e type $q$-{\ODE}s given in \cite{Okubo2017:arxiv1704.05403,MOT2021:Cluster}.
See \S \ref{subequation:(1,L)} for details.
\item
We can derive $q$-Painlev\'e equations of $A_5^{(1)}$-surface type from the $(3,0)$-pair (see \cite{JN2016:MR3597921,JNS2015:MR3403054}).
Moreover, we can obtain $q$-Painlev\'e equations of $A_6^{(1)}$- and $A_4^{(1)}$-surface types from the $(2,1)$- and $(3,1)$-pairs, respectively (see \cite{JNS2016:MR3584386}) and 
$q$-Painlev\'e equations of $A_7^{(1)}$- and $A_6^{(1)}$-surface types from the $(1,2)$- and $(1,3)$-pairs, respectively (see \S \ref{subequation:(1,L)}).
\end{itemize}
The following are conjectures.
\begin{itemize}
\item 
We expect that the $(A_{n-1}\times A_1)^{(1)}$-type KNY's representation can be derived from the $(n,0)$-pair, where $n\geq 3$, in the same manner as in \cite{nakazono2023higerA4A6}.
(It has been shown in \cite{JN2016:MR3597921,JNS2015:MR3403054,JNS2014:MR3291391} that this conjecture is correct for the $(3,0)$-pair.)
In particular, the birational representation for the $(2N+2,0)$-pair, where $N\geq 1$, can be extended to the $(A_{2N+1}\times A_1\times A_1)^{(1)}$-type extended KNY's representation by using a different pair of a system of {\PDE}s and a periodic condition from the $(2N+2,0)$-pair, as demonstrated in this study.
\item 
We expect to obtain $q$-Painlev\'e equations of $A_5^{(1)}$- and $A_3^{(1)}$-surface types from the $(2,2)$- and $(4,0)$-pairs, respectively.
\item 
Through a limit operation, the $(n, L)$-pair can be degenerated to the $(n-1, L+1)$-pair. 
As illustrations, limit operations from the $(3, L)$-pair, where $L\neq0$, to the $(2, L+1)$-pair and from the $(2, L)$-pair, where $L\neq0$, to the $(1, L+1)$-pair are demonstrated in \S \ref{subsection:(3,L)(2,L)_reductions}.
\end{itemize}
The results and conjectures mentioned above and the predictions for the other pairs are summarized in Table \ref{table:nL}.
Note that everything not mentioned in the above results is a conjecture.
The following items briefly explain Table \ref{table:nL}.
\begin{itemize}
\item
$q$-{\ODE}s appearing at the positions marked with ``$\times$" in the table are linearizable. 
Moreover, there exist $q$-Painlev\'e equations of $A_i^{(1)}$-surface type at the positions marked with ``$A_i^{(1)}$", and there exist $j$\,th-order $q$-{\ODE}s at the positions marked with ``$j$\,th".
\item 
The number $k$ in the symbol ``$[k]$" represents the number of parameters of an appearing $q$-{\ODE} including an independent variable but excluding a shift parameter. 
For example, let us consider the system \eqref{eqn:intro_dP_odd} obtained from the $(2N+1,1)$-pair, where $N\geq 1$. 
The parameters involved in the system are $\{a_1,\dots,a_{2N},b,c,p\}$. 
Thus, the number of parameters, including the independent variable ``$b$" but excluding the shift parameter ``$p$", is $2N+2$.
\item 
Moving to the right in two steps ({\it i.e.}, $(n,L)\to (n+2,L)$) increases the order of appearing $q$-{\ODE}s by two and also increases the number of their parameters by two. 
On the other hand, moving downward in two steps ({\it i.e.}, $(n,L)\to (n,L+2)$) increases the order of appearing $q$-{\ODE}s by two, but the number of their parameters remains unchanged.
\item 
An appropriate limit operation induces degeneration in the bottom left direction of the table (that is, $(n,L)\to (n-1,L+1)$).
\end{itemize}

%%%%%%%%%%%%%%%%%%%%%%%%%%%%%%
%%% Table
%%%%%%%%%%%%%%%%%%%%%%%%%%%%%%
\begin{table}[htp]
\begin{center}
$\begin{array}{c||c|c|c|c|c|c|c}
\text{\diagbox{$L$}{$n$}}&1&2&3&4&5&6&\cdots\\
\hhline{=|tb|=|=|=|=|=|=|=}
0&\times&\times&A_5^{(1)} {}_{[3]}\hspace{-0.6em}&A_3^{(1)} {}_{[5]}\hspace{-0.6em}&\text{$4$th}~ {}_{[5]}\hspace{-0.6em}&\text{$4$th}~ {}_{[7]}\hspace{-0.6em}&\cdots\\
\hline
1&\times&A_6^{(1)} {}_{[2]}\hspace{-0.6em}&A_4^{(1)} {}_{[4]}\hspace{-0.6em}&\text{$4$th}~ {}_{[4]}\hspace{-0.6em}&\text{$4$th}~ {}_{[6]}\hspace{-0.6em}&\text{$6$th}~ {}_{[6]}\hspace{-0.6em}&\cdots\\
\hline
2&A_7^{(1)} {}_{[1]}\hspace{-0.6em}&A_5^{(1)} {}_{[3]}\hspace{-0.6em}&\text{$4$th}~ {}_{[3]}\hspace{-0.6em}&\text{$4$th}~ {}_{[5]}\hspace{-0.6em}&\text{$6$th}~ {}_{[5]}\hspace{-0.6em}&\text{$6$th}~ {}_{[7]}\hspace{-0.6em}&\cdots\\
\hline
3&A_6^{(1)} {}_{[2]}\hspace{-0.6em}&\text{$4$th}~ {}_{[2]}\hspace{-0.6em}&\text{$4$th}~ {}_{[4]}\hspace{-0.6em}&\text{$6$th}~ {}_{[4]}\hspace{-0.6em}&\text{$6$th}~ {}_{[6]}\hspace{-0.6em}&\text{$8$th}~ {}_{[6]}\hspace{-0.6em}&\cdots\\
\hline
4&\text{$4$th}~ {}_{[1]}\hspace{-0.6em}&\text{$4$th}~ {}_{[3]}\hspace{-0.6em}&\text{$6$th}~ {}_{[3]}\hspace{-0.6em}&\text{$6$th}~ {}_{[5]}\hspace{-0.6em}&\text{$8$th}~ {}_{[5]}\hspace{-0.6em}&\text{$8$th}~ {}_{[7]}\hspace{-0.6em}&\cdots\\
\hline
5&\text{$4$th}~ {}_{[2]}\hspace{-0.6em}&\text{$6$th}~ {}_{[2]}\hspace{-0.6em}&\text{$6$th}~ {}_{[4]}\hspace{-0.6em}&\text{$8$th}~ {}_{[4]}\hspace{-0.6em}&\text{$8$th}~ {}_{[6]}\hspace{-0.6em}&\text{$10$th}\, {}_{[6]}\hspace{-0.6em}&\cdots\\
\hline
\vdots&\vdots&\vdots&\vdots&\vdots&\vdots&\vdots&\ddots
\end{array}$
\end{center}
\caption{Table presenting results and conjectures for the $(n, L)$-pair.}
\label{table:nL}
\end{table}
%%%%%%%%%%%%%%%%%%%%%%%%%%%%%%
%%%%%%%%%%%%%%%%%%%%%%%%%%%%%%

%%%%%%%%%%%%%%%%%%%%%%%%%%%%%%%%%%%%%%%%%%%%%%%%%%%%%%%%%%%
%% B.4 A conjecture on linear problems for $q$-{\ODE}s obtained from the $(n,L)$-pair
%%%%%%%%%%%%%%%%%%%%%%%%%%%%%%%%%%%%%%%%%%%%%%%%%%%%%%%%%%%
\subsection{A conjecture on linear problems for $q$-{\ODE}s obtained from the $(n,L)$-pair}
In this subsection, we describe a conjecture concerning a linear problem of Painlev\'e type $q$-{\ODE}s derived from the $(n, L)$-pair, where $L\geq1$. 
We provide an explanation for the pair \eqref{eqns:ABS_U} and \eqref{eqn:ABS_U_cond}, but the same applies to the pair \eqref{eqn:D4} and \eqref{eqn:D4_cond}.

Let $L\geq1$.
In \cite{nakazono2023higerA4A6}, a Lax representation of the system \eqref{eqns:ABS_U} is given by
\begin{subequations}\label{eqns:Weyl_phi}
\begin{align}
 &\phi_{\overline{i}}
 =\begin{pmatrix}
 \dfrac{\mu U_{\overline{i}}}{\al^{(i)}(l_i)U}&-\dfrac{{U_{\overline{i}}}^2}{\la(l_{0\cdots n})}\\[1em]
 \dfrac{\la(l_{0\cdots n})}{U^2}&-\dfrac{\mu U_{\overline{i}}}{\al^{(i)}(l_i)U}
 \end{pmatrix}.\phi,\quad
 i=1,\dots,n,\\
 &\phi_{\overline{0}}
 =\begin{pmatrix}
 -\dfrac{\mu\ka(l_0)U_{\overline{0}}}{U}&-\la(l_{0\cdots n})^2{U_{\overline{0}}}^2\\[1em]
 \dfrac{1}{\la(l_{0\cdots n})^2U^2}&0
 \end{pmatrix}.\phi,
\end{align}
\end{subequations}
where $\mu\in\bbC$ is a spectral variable, $\phi=\phi(l_1,\dots,l_n,l_0)$ is a column vector of length two and 
\begin{equation}
 \phi_{\overline{i}}=\phi|_{\,l_i\to l_i+1}.
\end{equation}
Indeed, we can easily verify that the compatibility conditions
\begin{equation}
 \Big(\phi_{\overline{i}}\Big)_{\overline{j}}=\Big(\phi_{\overline{j}}\Big)_{\overline{i}}~,\quad
 0\leq i<j\leq n,
\end{equation}
provide the system \eqref{eqns:ABS_U}.
Imposing condition \eqref{eqn:ABS_U_cond} into system \eqref{eqns:ABS_U}, we obtain the following conditions:
\begin{equation}
 \dfrac{\al^{(1)}(l_1)}{\al^{(1)}(l_1+1)}
 =\cdots
 =\dfrac{\al^{(n)}(l_{n})}{\al^{(n)}(l_{n}+1)}
 =\dfrac{\ka(l_0+L)}{\ka(l_0)}
 =q,
\end{equation}
where $q\in\bbC$.
Note that if $(n+L)$ is odd, then the following additional condition is required.
\begin{equation}
 \la_0=1.
\end{equation}
Define the shift operators $T_i$, $i=0,\dots,n$, 
by the actions on the parameters $\{\al^{(1)}(l),\dots,\al^{(n)}(l),\ka(l),\la_0\}_{l\in\bbZ}$ and $\mu$, the variable $U=U(l_1,\dots,l_n,l_0)$ and the vector $\phi=\phi(l_1,\dots,l_n,l_0)$ as follows:
\begin{subequations}
\begin{align}
 &T_i(\al^{(j)}(l))
 =\begin{cases}
 \al^{(i)}(l+1)&\text{if } j=i,\\
 \al^{(j)}(l)&\text{otherwise},
 \end{cases}\quad
 T_i(\ka(l))
 =\begin{cases}
 \ka(l+1)&\text{if } i=0,\\
 \ka(l)&\text{otherwise},
 \end{cases}\\
 &T_i(\la_0)=\dfrac{1}{\la_0},\quad
 T_i(\mu)=\mu,\quad
 T_i(U)=U_{\overline{i}},\quad
 T_i(\phi)=\phi_{\overline{i}}.
\end{align}
\end{subequations}
Moreover, define the parameter $x$ and the operator $T_x$ by
\begin{equation}
 x=\dfrac{\mu}{\prod_{k=1}^n\al^{(k)}(0)^{1/n}},\quad
 T_x=T_1T_2\cdots T_n {T_0}^L.
\end{equation}
Then, $x$ and $T_x$ can be regarded as the spectral variable and spectral operator of a Lax pair of a Painlev\'e type $q$-{\ODE} obtained from the $(n,L)$-pair, respectively.
The operator $T_x$ acts on $x$ as follows:
\begin{equation}
T_x(x)=qx.
\end{equation}
On the other hand, the Painlev\'e parameters (an independent variable and parameters of a Painlev\'e type $q$-{\ODE} derived from the $(n,L)$-pair) and the Painlev\'e variable (a dependent variable of the Painlev\'e type $q$-{\ODE}) are invariant under the action of $T_x$.
In general, the Painlev\'e parameters are expressed as rational functions of the parameters $\{\al^{(1)}(l),\dots,\al^{(n)}(l),\ka(l),\la_0\}_{l\in\bbZ}$ (or their rational number powers), and the Painlev\'e variable is expressed as a rational function of the $U$-variable.
A spectral direction of a linear problem for Painlev\'e type $q$-{\ODE}s derived from the $(n,L)$-pair takes the following form:
\begin{equation}
 T_x(\phi)
 =\begin{pmatrix}\ast \,x&\ast\\\ast&0\end{pmatrix}
 \dots
 \begin{pmatrix}\ast \,x&\ast\\\ast&0\end{pmatrix}.
 \begin{pmatrix}\ast \,x&\ast\\\ast&\ast \,x\end{pmatrix}
 \dots
 \begin{pmatrix}\ast \,x&\ast\\\ast&\ast \,x\end{pmatrix}.
 \phi,
\end{equation}
where the coefficient matrices above are given by the products of $L$ matrices of the form:
\begin{equation}
 \begin{pmatrix}\ast \,x&\ast\\\ast&0\end{pmatrix}
\end{equation}
and $n$ matrices of the form
\begin{equation}
 \begin{pmatrix}\ast \,x&\ast\\\ast&\ast \,x\end{pmatrix}.
\end{equation}
Here, the entries denoted by ``$\ast$" are expressed as rational functions of the parameters $\{\al^{(1)}(l),\dots,\al^{(n)}(l),\ka(l),\la_0\}_{l\in\bbZ}$ and the $U$-variables, and the entries remain invariant under the action of $T_x$. 
Expressing the ``$\ast$" entries solely in terms of the Painlev\'e parameters and variables may require careful gauge transformations of vector $\phi$. 
For specific examples, see \cite{nakazono2023higerA4A6} for the $(n,1)$-pair. 
(See also \cite{JNS2016:MR3584386} for the $(3,1)$-pair.)

%%%%%%%%%%%%%%%%%%%%%%%%%%%%%%%%%%%%%%%%%%%%%%%%%%%%%%%%%%%
%% B.3 Painlev\'e type $q$-{\ODE}s obtained from the $(1,L)$-pair
%%%%%%%%%%%%%%%%%%%%%%%%%%%%%%%%%%%%%%%%%%%%%%%%%%%%%%%%%%%
\subsection{Painlev\'e type $q$-{\ODE}s obtained from the $(1,L)$-pair}\label{subequation:(1,L)}
In this subsection, we show that the $(1,L)$-pair gives the Painlev\'e type $q$-{\ODE}s in \cite{Okubo2017:arxiv1704.05403,MOT2021:Cluster}.
We consider the cases separately based on the parity of $L$.

%%
%% The case $L$ is even
%%
\subsubsection{The case $L$ is even}
Consider the $(1,2h)$-pair, where $h\in\bbZ_{\geq1}$.
In this case, condition \eqref{eqn:D4_cond} is equivalent to expressing $U(l_1,l_0)$ as:
\begin{equation}\label{eqn:D4_cond_2}
 U(l_1,l_0)=U(l_0-2hl_1).
\end{equation}
Substituting \eqref{eqn:D4_cond_2} into Equation \eqref{eqn:D4}, we obtain
\begin{equation}\label{eqn:12h_U}
 \dfrac{U(l+2h+1)}{U(l)}+\dfrac{U(l+1)}{U(l+2h)}+p \al^{(1)}(0) \ka(l)=0,
\end{equation}
with the conditions of the parameters
\begin{equation}
 \dfrac{\al(l_1)}{\al(l_1+1)}=\dfrac{\ka(l_0+2h)}{\ka(l_0)}=p,\quad
 \la_0=1,
\end{equation}
where $p\in\bbC$.
Setting
\begin{equation}
 x_l=\dfrac{U(l+1)}{U(l)},
\end{equation}
from \eqref{eqn:12h_U} we obtain
\begin{equation}\label{eqn:2h_x}
 x_{l+2h}x_l
 =-\dfrac{1}{\prod_{k=1}^{2h-1}x_{l+k}}\left(\dfrac{1}{\prod_{k=1}^{2h-1}x_{l+k}}+p \al^{(1)}(0) \ka(l)\right).
\end{equation}
Set
\begin{equation}
 y_l
 =p \al^{(1)}(0) \ka(l)\dfrac{U(l+2h)}{U(l+1)}
 =p \al^{(1)}(0) \ka(l)\prod_{k=1}^{2h-1}x_{l+k}.
\end{equation}
Then, from Equation \eqref{eqn:2h_x} the following relation holds:
\begin{equation}
 x_{l+2h}x_l=-p^2 {\al^{(1)}(0)}^2 {\ka(l)}^2\dfrac{y_l+1}{{y_l}^2}.
\end{equation}
Therefore, we obtain
\begin{align}
 y_{l+2h}y_l
 &=p^3 {\al^{(1)}(0)}^2 {\ka(l)}^2\prod_{k=1}^{2h-1}x_{l+2h+k}x_{l+k}\notag\\
 &=p^3 {\al^{(1)}(0)}^2 {\ka(l)}^2\left(\prod_{k=1}^{2h-1}-p^2 {\al^{(1)}(0)}^2 {\ka(l+k)}^2\dfrac{y_{l+k}+1}{{y_{l+k}}^2}\right)\notag\\
 &=p^{4h+3} {\al^{(1)}(0)}^{4h+2} \left(\prod_{k=0}^{2h-1} {\ka(l+k)}^2\right)\left(\prod_{k=1}^{2h-1}\dfrac{y_{l+k}+1}{{y_{l+k}}^2}\right).
\end{align}
Finally, defining $q$ and $t_l$ by
\begin{equation}
 q=p^2,\quad
 t_l=q^{4h+3} {\al^{(1)}(0)}^{4h+2} \left(\prod_{k=0}^{2h-1} {\ka(l+k)}^2\right),
\end{equation}
we obtain the following one kind of Painlev\'e type $q$-{\ODE}s in \cite{Okubo2017:arxiv1704.05403,MOT2021:Cluster}:
\begin{equation}\label{eqn:Okubo_1}
 y_{l+2h}y_l=t_l\left(\prod_{k=1}^{2h-1}\dfrac{y_{l+k}+1}{{y_{l+k}}^2}\right),
\end{equation}
where $t_l=q^lt_0$.

%%%%%%%%%%%%%%%%%%%%%%%%%%%%%%
%% Remark
%%%%%%%%%%%%%%%%%%%%%%%%%%%%%%
\begin{remark}
Equation \eqref{eqn:Okubo_1} has an extended affine Weyl group symmetry of $A_1^{(1)}$-type (see \cite{MOT2021:Cluster} for details).
\end{remark}
%%%%%%%%%%%%%%%%%%%%%%%%%%%%%%
%%%%%%%%%%%%%%%%%%%%%%%%%%%%%%

%%%%%%%%%%%%%%%%%%%%%%%%%%%%%%
%% Remark
%%%%%%%%%%%%%%%%%%%%%%%%%%%%%%
\begin{remark}
When $h=1$, Equation \eqref{eqn:Okubo_1} is equivalent to a $q$-Painlev\'e I equation of $A_7^{(1)}$-surface type\cite{GR2000:zbMATH01498348,SakaiH2001:MR1882403} (see \cite{Okubo2017:arxiv1704.05403} for details).
\end{remark}
%%%%%%%%%%%%%%%%%%%%%%%%%%%%%%
%%%%%%%%%%%%%%%%%%%%%%%%%%%%%%

%%
%% The case $L$ is odd
%%
\subsubsection{The case $L$ is odd}
Consider the $(1,2h+1)$-pair, where $h\in\bbZ_{\geq1}$.
In this case, condition \eqref{eqn:D4_cond} is equivalent to expressing $U(l_1,l_0)$ as:
\begin{equation}\label{eqn:D4_cond_3}
 U(l_1,l_0)=U(l_0-(2h+1)l_1).
\end{equation}
Substituting \eqref{eqn:D4_cond_3} into Equation \eqref{eqn:D4}, we obtain
\begin{equation}\label{eqn:12h+1_U}
 \dfrac{U(l+2h+2)}{U(l)}+\la(l)^4\dfrac{U(l+1)}{U(l+2h+1)}+q\al^{(1)}(0) \ka(l)\la(l)^3=0,
\end{equation}
with the conditions of the parameters
\begin{equation}
 \dfrac{\al(l_1)}{\al(l_1+1)}=\dfrac{\ka(l_0+2h+1)}{\ka(l_0)}=q,
\end{equation}
where $q\in\bbC$.
Setting
\begin{equation}
 x_l=\dfrac{U(l+2)}{U(l)},
\end{equation}
from \eqref{eqn:12h+1_U} we obtain
\begin{equation}\label{eqn:2h+1_x}
 x_{l+2h}x_l
 =-\dfrac{\la(l)^3}{\prod_{k=1}^{h-1}x_{l+2k}}\left(\dfrac{\la(l)}{\prod_{k=0}^{h-1}x_{l+2k+1}}+q\al^{(1)}(0) \ka(l)\right).
\end{equation}
Set
\begin{equation}
 y_l
 =q\al^{(1)}(0) \ka(l-1)\la(l)\dfrac{U(l+2h)}{U(l)}
 =q\al^{(1)}(0) \ka(l-1)\la(l)\prod_{k=0}^{h-1}x_{l+2k}.
\end{equation}
Then, from Equation \eqref{eqn:2h+1_x} the following relation holds:
\begin{equation}
 x_{l+2h}=-q^2{\al^{(1)}(0)}^2 \ka(l-1)\ka(l)\la(l)^4\dfrac{y_{l+1}+1}{y_ly_{l+1}}.
\end{equation}
Therefore, we obtain
\begin{align}
 y_{l+2h}
 &=q\al^{(1)}(0) \ka(l+2h-1)\la(l)\prod_{k=0}^{h-1}x_{l+2h+2k}\notag\\
 &=(-1)^h q^{2h}{\al^{(1)}(0)}^{2h+1}\la(l)^{4h+1}
 \left(\prod_{k=0}^{2h}\ka(l+k)\right)
 \left(\prod_{k=0}^{h-1}\dfrac{y_{l+2k+1}+1}{y_{l+2k}y_{l+2k+1}}\right)\notag\\
 &=(-1)^h q^{2h}{\al^{(1)}(0)}^{2h+1}\la(l)^{4h+1}
 \left(\prod_{k=0}^{2h}\ka(l+k)\right)
 \dfrac{\prod_{k=0}^{h-1}\left(y_{l+2k+1}+1\right)}{y_l\left(\prod_{k=1}^{2h-1}y_{l+k}\right)}.
\end{align}
Finally, defining $c$ and $t_l$ by
\begin{equation}
 c={\la_0}^{4h+1},\quad
 t_l=(-1)^h q^{2h}{\al^{(1)}(0)}^{2h+1}\left(\prod_{k=0}^{2h}\ka(l+k)\right),
\end{equation}
we obtain the following another kind of Painlev\'e type $q$-{\ODE}s in \cite{Okubo2017:arxiv1704.05403,MOT2021:Cluster}:
\begin{equation}\label{eqn:Okubo_2}
 y_{l+2h}y_l=c^{(-1)^l}t_l\dfrac{\prod_{k=0}^{h-1}\left(y_{l+2k+1}+1\right)}{\prod_{k=1}^{2h-1}y_{l+k}},
\end{equation}
where $t_l=q^lt_0$.

%%%%%%%%%%%%%%%%%%%%%%%%%%%%%%
%% Remark
%%%%%%%%%%%%%%%%%%%%%%%%%%%%%%
\begin{remark}
Equation \eqref{eqn:Okubo_2} has an extended affine Weyl group symmetry of $(A_1\times A_1)^{(1)}$-type (see \cite{MOT2021:Cluster} for details).
\end{remark}
%%%%%%%%%%%%%%%%%%%%%%%%%%%%%%
%%%%%%%%%%%%%%%%%%%%%%%%%%%%%%

%%%%%%%%%%%%%%%%%%%%%%%%%%%%%%
%% Remark
%%%%%%%%%%%%%%%%%%%%%%%%%%%%%%
\begin{remark}
When $h=1$, Equation \eqref{eqn:Okubo_2} is equivalent to a $q$-Painlev\'e II equation of $A_6^{(1)}$-surface type\cite{RG1996:MR1399286,KTGR2000:MR1789477,RGTT2001:MR1838017,SakaiH2001:MR1882403} (see \cite{Okubo2017:arxiv1704.05403} for details).
\end{remark}
%%%%%%%%%%%%%%%%%%%%%%%%%%%%%%
%%%%%%%%%%%%%%%%%%%%%%%%%%%%%%

%%%%%%%%%%%%%%%%%%%%%%%%%%%%%%%%%%%%%%%%%%%%%%%%%%%%%%%%%%%
%% B.4 Degeneration of the $(3, L)$- and $(2, L)$-pair through limit operations
%%%%%%%%%%%%%%%%%%%%%%%%%%%%%%%%%%%%%%%%%%%%%%%%%%%%%%%%%%%
\subsection{Degeneration of the $(3, L)$- and $(2, L)$-pairs through limit operations}\label{subsection:(3,L)(2,L)_reductions}
In this subsection, we demonstrate degeneration from the $(3, L)$-pair, where $L\geq1$, to the $(2, L+1)$-pair and from the $(2, L)$-pair, where $L\geq1$, to the $(1, L+1)$-pair through limit operations.

%%
%% the $(3, L)$-pair
%%
\subsubsection{Degeneration from the $(3, L)$-pair to the $(2, L+1)$-pair}
Let $L\geq1$.
Then, the $(3, L)$-pair is given by the system of {\PDE}s
\begin{subequations}\label{eqns:ABS_U_3L}
\begin{align}
 &\la(l_{0\cdots 3})^2\dfrac{U_{\overline{12}}}{U}=\dfrac{\al^{(1)}(l_1) U_{\overline{1}}-\al^{(2)}(l_2) U_{\overline{2}}}{\al^{(2)}(l_2) U_{\overline{1}}-\al^{(1)}(l_1) U_{\overline{2}}},\label{eqn:ABS_U_12}\\
 &\la(l_{0\cdots 3})^2\dfrac{U_{\overline{13}}}{U}=\dfrac{\al^{(1)}(l_1) U_{\overline{1}}-\al^{(3)}(l_3) U_{\overline{3}}}{\al^{(3)}(l_3) U_{\overline{1}}-\al^{(1)}(l_1) U_{\overline{3}}},\label{eqn:ABS_U_13}\\
 &\la(l_{0\cdots 3})^2\dfrac{U_{\overline{23}}}{U}=\dfrac{\al^{(2)}(l_2) U_{\overline{2}}-\al^{(3)}(l_3) U_{\overline{3}}}{\al^{(3)}(l_3) U_{\overline{2}}-\al^{(2)}(l_2) U_{\overline{3}}},\label{eqn:ABS_U_23}\\
 &\dfrac{U_{\overline{0\,1}}}{U}+\la(l_{0\cdots 3})^4\dfrac{U_{\overline{0}}}{U_{\overline{1}}}+\al^{(1)}(l_1) \ka(l_0) \la(l_{0\cdots 3})=0,\label{eqn:ABS_U_01}\\
 &\dfrac{U_{\overline{0\,2}}}{U}+\la(l_{0\cdots 3})^4\dfrac{U_{\overline{0}}}{U_{\overline{2}}}+\al^{(2)}(l_2) \ka(l_0) \la(l_{0\cdots 3})=0,\label{eqn:ABS_U_02}\\
 &\dfrac{U_{\overline{0\,3}}}{U}+\la(l_{0\cdots 3})^4\dfrac{U_{\overline{0}}}{U_{\overline{3}}}+\al^{(3)}(l_3) \ka(l_0) \la(l_{0\cdots 3})=0,\label{eqn:ABS_U_03}
\end{align}
\end{subequations}
where $U=U(l_1,l_2,l_3,l_0)$, and the periodic condition
 \begin{equation}\label{eqn:ABS_U_cond_3L}
  U(l_1+1,l_2+1,l_3+1,l_0+L)=U(l_1,l_2,l_3,l_0).
\end{equation}
Due to the periodic condition \eqref{eqn:ABS_U_cond_3L}, the parameters satisfy the following conditions:
\begin{equation}
 \dfrac{\al^{(1)}(l_1)}{\al^{(1)}(l_1+1)}
 =\dfrac{\al^{(2)}(l_2)}{\al^{(2)}(l_2+1)}
 =\dfrac{\al^{(3)}(l_3)}{\al^{(3)}(l_3+1)}
 =\dfrac{\ka(l_0+L)}{\ka(l_0)}
 =q,
\end{equation}
where $q\in\bbC$, which give
\begin{equation}
 \al^{(1)}(l_1)=q^{-l_1}\al^{(1)}(0),\quad
 \al^{(2)}(l_2)=q^{-l_2}\al^{(2)}(0),\quad
 \al^{(3)}(l_3)=q^{-l_3}\al^{(3)}(0).
\end{equation}
If $L$ is even, then the following additional condition is needed:
\begin{equation}
 \la_0=1.
\end{equation}
Therefore, regardless of the parity of $L$, the following relation holds:
\begin{equation}
 \la(l+L)=\dfrac{1}{\la(l)}.
\end{equation}

In the following, we show that pair \eqref{eqns:ABS_U_3L} and \eqref{eqn:ABS_U_cond_3L} is reduced to the $(2, L+1)$-pair through a limit operation.
Condition \eqref{eqn:ABS_U_cond_3L} is equivalent to expressing $U(l_1,l_2,l_3,l_0)$ as
\begin{equation}
 U(l_1,l_2,l_3,l_0)=\omega(l_1-l_3,l_2-l_3,l_0-Ll_3).
\end{equation}
Set
\begin{subequations}
\begin{align}
 &\omega(l_1,l_2,l_0)=\xi^{-l_0}A(l_0)\la(l_0+l_1+l_2)^{\frac{3(2l_0-l_1-l_2)}{2L}} u(l_1,l_2,l_0),\label{eqn:limit_Uu}\\
 &\al^{(1)}(0)=\xi^L a^{(1)}(0),\quad
 \al^{(2)}(0)=\xi^L a^{(2)}(0),\quad
 \ka(l_0)=\xi^{-L-1} K(l_0),
\end{align}
\end{subequations}
where $\xi$ is a positive real number and $A(l)\in\bbC$ is a function satisfying
\begin{equation}
 A(l+L+1)=-qK(l) \al^{(3)}(0) A(l).
\end{equation}
Then, taking $\xi\to 0$, we obtain a pair of the system of {\PDE}s
\begin{subequations}\label{eqns:ABS_u_2L+1}
\begin{align}
 &\La(l_0+l_1+l_2)^2\dfrac{u_{\overline{12}}}{u}=\dfrac{a^{(1)}(l_1) u_{\overline{1}}-a^{(2)}(l_2) u_{\overline{2}}}{a^{(2)}(l_2) u_{\overline{1}}-a^{(1)}(l_1) u_{\overline{2}}},\label{eqn:ABS_u_12}\\
 &\dfrac{u_{\overline{0\,1}}}{u}+\La(l_0+l_1+l_2)^4\dfrac{u_{\overline{0}}}{u_{\overline{1}}}+a^{(1)}(l_1) k(l_0) \La(l_0+l_1+l_2)=0,\label{eqn:ABS_u_01}\\
 &\dfrac{u_{\overline{0\,2}}}{u}+\La(l_0+l_1+l_2)^4\dfrac{u_{\overline{0}}}{u_{\overline{2}}}+a^{(2)}(l_2) k(l_0) \La(l_0+l_1+l_2)=0,\label{eqn:ABS_u_02}
\end{align}
\end{subequations}
where 
\begin{equation}
 u=u(l_1,l_2,l_0),\quad
 k(l):=\dfrac{A(l)K(l)}{A(l+1)},\quad
 \La(l):=\la(l)^{\frac{2L-3}{2L}},
\end{equation}
and the periodic condition
 \begin{equation}\label{eqn:ABS_u_cond_2L+1}
  u(l_1+1,l_2+1,l_0+L+1)=u(l_1,l_2,l_0),
\end{equation}
which is equivalent to the $(2,L+1)$-pair.
Indeed, Equation \eqref{eqn:ABS_U_12} is reduced to Equation \eqref{eqn:ABS_u_12},
Equation \eqref{eqn:ABS_U_01} is reduced to Equation \eqref{eqn:ABS_u_01},
Equation \eqref{eqn:ABS_U_02} is reduced to Equation \eqref{eqn:ABS_u_02}, 
and Equation \eqref{eqn:ABS_U_03} is reduced to condition \eqref{eqn:ABS_u_cond_2L+1}.
Moreover, the reduced equations obtained from Equations \eqref{eqn:ABS_U_13} and \eqref{eqn:ABS_U_23} can be derived from 
Equation \eqref{eqn:ABS_u_01} with condition \eqref{eqn:ABS_u_cond_2L+1}
and Equations \eqref{eqn:ABS_u_02} with condition \eqref{eqn:ABS_u_cond_2L+1}, respectively.
Note that the parameters in system \eqref{eqns:ABS_u_2L+1} satisfy the following relations:
\begin{equation}
 \dfrac{a^{(1)}(l_1)}{a^{(1)}(l_1+1)}
 =\dfrac{a^{(2)}(l_2)}{a^{(2)}(l_2+1)}
 =\dfrac{k(l_0+L+1)}{k(l_0)}
 =q,\quad
 \La(l+1)=\dfrac{1}{\La(l)}.
\end{equation}

%%
%% the $(2, L)$-pair
%%
\subsubsection{Degeneration from the $(2, L)$-pair to the $(1, L+1)$-pair}
Let $L\geq1$.
Then, the $(2, L)$-pair is given by the system of {\PDE}s
\begin{subequations}\label{eqns:ABS_U_2L}
\begin{align}
 &\la(l_{0\cdots 2})^2\dfrac{U_{\overline{12}}}{U}=\dfrac{\al^{(1)}(l_1) U_{\overline{1}}-\al^{(2)}(l_2) U_{\overline{2}}}{\al^{(2)}(l_2) U_{\overline{1}}-\al^{(1)}(l_1) U_{\overline{2}}},\label{eqn:ABS_U_12_2L}\\
 &\dfrac{U_{\overline{0\,1}}}{U}+\la(l_{0\cdots 2})^4\dfrac{U_{\overline{0}}}{U_{\overline{1}}}+\al^{(1)}(l_1) \ka(l_0) \la(l_{0\cdots 2})=0,\label{eqn:ABS_U_01_2L}\\
 &\dfrac{U_{\overline{0\,2}}}{U}+\la(l_{0\cdots 2})^4\dfrac{U_{\overline{0}}}{U_{\overline{2}}}+\al^{(2)}(l_2) \ka(l_0) \la(l_{0\cdots 2})=0,\label{eqn:ABS_U_02_2L}
\end{align}
\end{subequations}
where $U=U(l_1,l_2,l_0)$, and the periodic condition
 \begin{equation}\label{eqn:ABS_U_cond_2L}
  U(l_1+1,l_2+1,l_0+L)=U(l_1,l_2,l_0).
\end{equation}
Due to the periodic condition \eqref{eqn:ABS_U_cond_2L}, the parameters satisfy the following conditions:
\begin{equation}
 \dfrac{\al^{(1)}(l_1)}{\al^{(1)}(l_1+1)}
 =\dfrac{\al^{(2)}(l_2)}{\al^{(2)}(l_2+1)}
 =\dfrac{\ka(l_0+L)}{\ka(l_0)}
 =q,
\end{equation}
where $q\in\bbC$, which give
\begin{equation}
 \al^{(1)}(l_1)=q^{-l_1}\al^{(1)}(0),\quad
 \al^{(2)}(l_2)=q^{-l_2}\al^{(2)}(0).
\end{equation}
If $L$ is odd, then the following additional condition is needed:
\begin{equation}
 \la_0=1.
\end{equation}
Therefore, regardless of the parity of $L$, the following relation holds:
\begin{equation}
 \la(l+L)=\la(l).
\end{equation}

In the following, we show that pair \eqref{eqns:ABS_U_2L} and \eqref{eqn:ABS_U_cond_2L} is reduced to the $(1, L+1)$-pair through a limit operation.
Condition \eqref{eqn:ABS_U_cond_2L} is equivalent to expressing $U(l_1,l_2,l_0)$ as:
\begin{equation}
 U(l_1,l_2,l_0)=\omega(l_1-l_2,l_0-Ll_2).
\end{equation}
Set
\begin{subequations}
\begin{align}
 &\omega(l_1,l_0)=\xi^{-l_0}A(l_0)\la(l_0+l_1)^{\frac{3(2l_0-l_1)}{2L+1}} u(l_1,l_0),\\
 &\al^{(1)}(0)=\xi^L a^{(1)}(0),\quad
 \ka(l_0)=\xi^{-L-1} K(l_0),
\end{align}
\end{subequations}
where $\xi$ is a positive real number and $A(l)\in\bbC$ is a function satisfying
\begin{equation}
 A(l+L+1)=-qK(l) \al^{(2)}(0) A(l).
\end{equation}
Then, taking $\xi\to 0$, we obtain the following pair:
\begin{align}
 &\dfrac{u_{\overline{0\,1}}}{u}+\La(l_0+l_1)^4\dfrac{u_{\overline{0}}}{u_{\overline{1}}}+a^{(1)}(l_1) k(l_0) \La(l_0+l_1)=0,\label{eqn:red_2L_eqn}\\
 &u(l_1+1,l_0+L+1)=u(l_1,l_0),\label{eqn:red_2L_cond}
\end{align}
where 
\begin{equation}
 u=u(l_1,l_0),\quad
 k(l):=\dfrac{A(l)K(l)}{A(l+1)},\quad
 \La(l):=\la(l)^{\frac{2(L-1)}{2L+1}},
\end{equation}
which is equivalent to the $(1,L+1)$-pair.
Indeed, Equations \eqref{eqn:ABS_U_01_2L} and  \eqref{eqn:ABS_U_02_2L} are reduced to Equations \eqref{eqn:red_2L_eqn} and \eqref{eqn:red_2L_cond}, respectively.
Moreover, the reduced equation obtained from Equation \eqref{eqn:ABS_U_12_2L} can be derived from 
Equation \eqref{eqn:red_2L_eqn} under condition \eqref{eqn:red_2L_cond}.
Note that the parameters in Equation \eqref{eqn:red_2L_eqn} satisfy the following relations:
\begin{equation}
 \dfrac{a^{(1)}(l_1)}{a^{(1)}(l_1+1)}
 =\dfrac{k(l_0+L+1)}{k(l_0)}
 =q,\quad
 \La(l+1)=\dfrac{1}{\La(l)}.
\end{equation}
%%%%%%%%%%%%%%%%%%%%%%%%%%%%%%%%%%%%%%%%%%%%%%%%%%%%%%%%%%%
%%% References 
%%%%%%%%%%%%%%%%%%%%%%%%%%%%%%%%%%%%%%%%%%%%%%%%%%%%%%%%%%%
\def\cprime{$'$} \def\cprime{$'$}


\begin{thebibliography}{10}

\bibitem{AS1977:PhysRevLett.38.1103}
M.~J. Ablowitz and H.~Segur.
\newblock Exact {L}inearization of a {P}ainlev\'e {T}ranscendent.
\newblock {\em Phys. Rev. Lett.}, 38:1103--1106, May 1977.

\bibitem{ABS2003:MR1962121}
V.~E. Adler, A.~I. Bobenko, and Y.~B. Suris.
\newblock Classification of integrable equations on quad-graphs. {T}he
  consistency approach.
\newblock {\em Comm. Math. Phys.}, 233(3):513--543, 2003.

\bibitem{ABS2009:MR2503862}
V.~E. Adler, A.~I. Bobenko, and Y.~B. Suris.
\newblock Discrete nonlinear hyperbolic equations: classification of integrable
  cases.
\newblock {\em Funktsional. Anal. i Prilozhen.}, 43(1):3--21, 2009.

\bibitem{AJT2020:zbMATH07212273}
H.~Alrashdi, N.~Joshi, and D.~T. Tran.
\newblock Hierarchies of $q$-discrete {P}ainlev\'e equations.
\newblock {\em J. Nonlinear Math. Phys.}, 27(3):453--477, 2020.

\bibitem{BGM2018:zbMATH06868180}
M.~Bershtein, P.~Gavrylenko, and A.~Marshakov.
\newblock Cluster integrable systems, $q$-{P}ainlev\'e equations and their
  quantization.
\newblock {\em J. High Energy Phys.}, 2018(2):34, 2018.
\newblock Id/No 77.

\bibitem{BS2002:MR1890049}
A.~I. Bobenko and Y.~B. Suris.
\newblock Integrable systems on quad-graphs.
\newblock {\em Int. Math. Res. Not. IMRN}, (11):573--611, 2002.

\bibitem{BollR2011:MR2846098}
R.~Boll.
\newblock Classification of 3{D} consistent quad-equations.
\newblock {\em J. Nonlinear Math. Phys.}, 18(3):337--365, 2011.

\bibitem{BollR:thesis}
R.~Boll.
\newblock Classification and {L}agrangian {S}tructure of 3{D} {C}onsistent
  {Q}uad-{E}quations.
\newblock {\em Doctoral Thesis, Technische Universit\"at Berlin}, 2012.

\bibitem{BollR2012:MR3010833}
R.~Boll.
\newblock Corrigendum: {C}lassification of 3{D} consistent quad-equations.
\newblock {\em J. Nonlinear Math. Phys.}, 19(4):1292001, 3, 2012.

\bibitem{FJN2008:MR2425981}
C.~M. Field, N.~Joshi, and F.~W. Nijhoff.
\newblock {$q$}-difference equations of {K}d{V} type and {C}hazy-type
  second-degree difference equations.
\newblock {\em J. Phys. A}, 41(33):332005, 13, 2008.

\bibitem{FuchsR1905:quelques}
R.~Fuchs.
\newblock Sur quelques \'equations diff\'erentielles lin\'eaires du second
  ordre.
\newblock {\em Comptes Rendus de l'Acad\'emie des Sciences Paris},
  141(1):555--558, 1905.

\bibitem{GambierB1910:MR1555055}
B.~Gambier.
\newblock Sur les \'equations diff\'erentielles du second ordre et du premier
  degr\'e dont l'int\'egrale g\'en\'erale est a points critiques fixes.
\newblock {\em Acta Math.}, 33(1):1--55, 1910.

\bibitem{GR2000:zbMATH01498348}
B.~Grammaticos and A.~Ramani.
\newblock The hunting for the discrete {Painlev{\'e}} equations.
\newblock {\em Regul. Chaotic Dyn.}, 5(1):53--66, 2000.

\bibitem{GRP1991:MR1125950}
B.~Grammaticos, A.~Ramani, and V.~Papageorgiou.
\newblock Do integrable mappings have the {P}ainlev\'e property?
\newblock {\em Phys. Rev. Lett.}, 67(14):1825--1828, 1991.

\bibitem{GRSWC2005:MR2117991}
B.~Grammaticos, A.~Ramani, J.~Satsuma, R.~Willox, and A.~S. Carstea.
\newblock Reductions of integrable lattices.
\newblock {\em J. Nonlinear Math. Phys.}, 12(suppl. 1):363--371, 2005.

\bibitem{HayM2007:MR2371129}
M.~Hay.
\newblock Hierarchies of nonlinear integrable {$q$}-difference equations from
  series of {L}ax pairs.
\newblock {\em J. Phys. A}, 40(34):10457--10471, 2007.

\bibitem{HHJN2007:MR2303490}
M.~Hay, J.~Hietarinta, N.~Joshi, and F.~W. Nijhoff.
\newblock A {L}ax pair for a lattice modified {K}d{V} equation, reductions to
  {$q$}-{P}ainlev\'e equations and associated {L}ax pairs.
\newblock {\em J. Phys. A}, 40(2):F61--F73, 2007.

\bibitem{HHNS2015:MR3317164}
M.~Hay, P.~Howes, N.~Nakazono, and Y.~Shi.
\newblock A systematic approach to reductions of type-{Q} {ABS} equations.
\newblock {\em J. Phys. A}, 48(9):095201, 24, 2015.

\bibitem{HJN2016:MR3587455}
J.~Hietarinta, N.~Joshi, and F.~W. Nijhoff.
\newblock {\em Discrete systems and integrability}.
\newblock Cambridge Texts in Applied Mathematics. Cambridge University Press,
  Cambridge, 2016.

\bibitem{HI2014:zbMATH06381200}
A.~N.~W. Hone and R.~Inoue.
\newblock Discrete {Painlev{\'e}} equations from {Y}-systems.
\newblock {\em J. Phys. A, Math. Theor.}, 47(47):26, 2014.
\newblock Id/No 474007.

\bibitem{IIKNS2010:zbMATH05704436}
R.~Inoue, O.~Iyama, A.~Kuniba, T.~Nakanishi, and J.~Suzuki.
\newblock Periodicities of {{\(T\)}}-systems and {{\(Y\)}}-systems.
\newblock {\em Nagoya Math. J.}, 197:59--174, 2010.

\bibitem{JS1996:MR1403067}
M.~Jimbo and H.~Sakai.
\newblock A {$q$}-analog of the sixth {P}ainlev\'e equation.
\newblock {\em Lett. Math. Phys.}, 38(2):145--154, 1996.

\bibitem{JN2016:MR3597921}
N.~Joshi and N.~Nakazono.
\newblock Lax pairs of discrete {P}ainlev\'e equations: {$(A_2+A_1)^{(1)}$}
  case.
\newblock {\em Proc. R. Soc. A.}, 472(2196):20160696, 14, 2016.

\bibitem{JNS2014:MR3291391}
N.~Joshi, N.~Nakazono, and Y.~Shi.
\newblock Geometric reductions of {ABS} equations on an {$n$}-cube to discrete
  {P}ainlev\'e systems.
\newblock {\em J. Phys. A}, 47(50):505201, 16, 2014.

\bibitem{JNS2015:MR3403054}
N.~Joshi, N.~Nakazono, and Y.~Shi.
\newblock Lattice equations arising from discrete {P}ainlev\'e systems. {I}.
  {$(A_2 + A_1)^{(1)}$} and {$(A_1 + A_1^\prime)^{(1)}$} cases.
\newblock {\em J. Math. Phys.}, 56(9):092705, 25, 2015.

\bibitem{JNS2016:MR3584386}
N.~Joshi, N.~Nakazono, and Y.~Shi.
\newblock Lattice equations arising from discrete {P}ainlev\'e systems: {II}.
  {$A^{(1)}_4$} case.
\newblock {\em J. Phys. A}, 49(49):495201, 39, 2016.

\bibitem{KN2015:MR3340349}
K.~Kajiwara and N.~Nakazono.
\newblock Hypergeometric solutions to the symmetric {$q$}-{P}ainlev\'e
  equations.
\newblock {\em Int. Math. Res. Not. IMRN}, (4):1101--1140, 2015.

\bibitem{KNY2002:MR1917133}
K.~Kajiwara, M.~Noumi, and Y.~Yamada.
\newblock Discrete dynamical systems with {$W(A_{m-1}^{(1)}\times
  A_{n-1}^{(1)})$} symmetry.
\newblock {\em Lett. Math. Phys.}, 60(3):211--219, 2002.

\bibitem{KNY2002:MR1958118}
K.~Kajiwara, M.~Noumi, and Y.~Yamada.
\newblock {$q$}-{P}ainlev\'e systems arising from {$q$}-{KP} hierarchy.
\newblock {\em Lett. Math. Phys.}, 62(3):259--268, 2002.

\bibitem{KNY2017:MR3609039}
K.~Kajiwara, M.~Noumi, and Y.~Yamada.
\newblock Geometric aspects of {P}ainlev\'{e} equations.
\newblock {\em J. Phys. A}, 50(7):073001, 164, 2017.

\bibitem{KTGR2000:MR1789477}
M.~D. Kruskal, K.~M. Tamizhmani, B.~Grammaticos, and A.~Ramani.
\newblock Asymmetric discrete {P}ainlev\'e equations.
\newblock {\em Regul. Chaotic Dyn.}, 5(3):273--280, 2000.

\bibitem{MOT2018:AmAnAg}
T.~Masuda, N.~Okubo, and T.~Tsuda.
\newblock Birational {W}eyl group actions via mutation combinatorics in cluster
  algebras.
\newblock In {\em Aspects of {C}ombinatorial {R}epresentaion {T}heory}, RIMS
  K\^oky\^uroku, 2127, pages 20--38 (in Japanese). Res. Inst. Math. Sci.
  (RIMS), Kyoto, 2018.

\bibitem{MOT2021:Cluster}
T.~Masuda, N.~Okubo, and T.~Tsuda.
\newblock Cluster algebras and higher order generalizations of the
  $q$-{P}ainleve equations of type ${A}_7^{(1)}$ and ${A}_6^{(1)}$.
\newblock In {\em Mathematical structures of integrable systems, its deepening
  and expansion}, RIMS K\^oky\^uroku Bessatsu, B87, pages 149--163. Res. Inst.
  Math. Sci. (RIMS), Kyoto, 2021.

\bibitem{MOT2023:AmAnAgArxiv}
T.~Masuda, N.~Okubo, and T.~Tsuda.
\newblock Birational {W}eyl group actions via mutation combinatorics in cluster
  algebras.
\newblock {\em arXiv preprint arXiv:2303.06704}, 2023.

\bibitem{NY2018:zbMATH06876428}
H.~Nagao and Y.~Yamada.
\newblock Variations of the $q$-{G}arnier system.
\newblock {\em J. Phys. A, Math. Theor.}, 51(13):19, 2018.
\newblock Id/No 135204.

\bibitem{NakazonoN2016:MR3503803}
N.~Nakazono.
\newblock Hypergeometric {$\tau$} {F}unctions of the {$q$}-{P}ainlev\'e
  {S}ystems of {T}ypes {$A_4^{(1)}$} and {$(A_1+A_1')^{(1)}$}.
\newblock {\em SIGMA Symmetry Integrability Geom. Methods Appl.}, 12:051, 23
  pages, 2016.

\bibitem{nakazono2023higerA4A6}
N.~Nakazono.
\newblock Higher-order generalizations of the ${A}_6^{(1)}$- and
  ${A}_4^{(1)}$-surface type $q$-{P}ainlev\'e equations.
\newblock {\em Physica Scripta}, 98(11):115204, 2023.

\bibitem{NijhoffFW2002:MR1912127}
F.~W. Nijhoff.
\newblock Lax pair for the {A}dler (lattice {K}richever-{N}ovikov) system.
\newblock {\em Phys. Lett. A}, 297(1-2):49--58, 2002.

\bibitem{NC1995:MR1329559}
F.~W. Nijhoff and H.~W. Capel.
\newblock The discrete {K}orteweg-de {V}ries equation.
\newblock {\em Acta Appl. Math.}, 39(1-3):133--158, 1995.
\newblock KdV '95 (Amsterdam, 1995).

\bibitem{NQC1983:MR719638}
F.~W. Nijhoff, G.~R.~W. Quispel, and H.~W. Capel.
\newblock Direct linearization of nonlinear difference-difference equations.
\newblock {\em Phys. Lett. A}, 97(4):125--128, 1983.

\bibitem{NW2001:MR1869690}
F.~W. Nijhoff and A.~J. Walker.
\newblock The discrete and continuous {P}ainlev\'e {VI} hierarchy and the
  {G}arnier systems.
\newblock {\em Glasg. Math. J.}, 43A:109--123, 2001.
\newblock Integrable systems: linear and nonlinear dynamics (Islay, 1999).

\bibitem{NobeA2016:zbMATH06618416}
A.~Nobe.
\newblock Mutations of the cluster algebra of type ${A}_1^{(1)}$ and the
  periodic discrete {T}oda lattice.
\newblock {\em J. Phys. A, Math. Theor.}, 49(28):18, 2016.
\newblock Id/No 285201.

\bibitem{book_NoumiM2004:MR2044201}
M.~Noumi.
\newblock {\em Painlev\'e equations through symmetry}, volume 223 of {\em
  Translations of Mathematical Monographs}.
\newblock American Mathematical Society, Providence, RI, 2004.
\newblock Translated from the 2000 Japanese original by the author.

\bibitem{OKSO2006:MR2277519}
Y.~Ohyama, H.~Kawamuko, H.~Sakai, and K.~Okamoto.
\newblock Studies on the {P}ainlev\'e equations. {V}. {T}hird {P}ainlev\'e
  equations of special type {$P_{\rm III}(D_7)$} and {$P_{\rm III}(D_8)$}.
\newblock {\em J. Math. Sci. Univ. Tokyo}, 13(2):145--204, 2006.

\bibitem{OkamotoK1979:MR614694}
K.~Okamoto.
\newblock Sur les feuilletages associ\'es aux \'equations du second ordre \`a
  points critiques fixes de {P}. {P}ainlev\'e.
\newblock {\em Japan. J. Math. (N.S.)}, 5(1):1--79, 1979.

\bibitem{OkuboN2015:zbMATH06499648}
N.~Okubo.
\newblock Bilinear equations and $q$-discrete {P}ainlev\'e equations satisfied
  by variables and coefficients in cluster algebras.
\newblock {\em J. Phys. A, Math. Theor.}, 48(35):25, 2015.
\newblock Id/No 355201.

\bibitem{Okubo2017:arxiv1704.05403}
N.~Okubo.
\newblock Co-primeness preserving higher dimensional extension of $q$-discrete
  {P}ainlev\'e {I}, {II} equations.
\newblock {\em arXiv preprint arXiv:1704.05403}, 2017.

\bibitem{OS2020:10.1093/imrn/rnaa283}
N.~Okubo and T.~Suzuki.
\newblock {Generalized $q$-{P}ainlev\'e VI {S}ystems of {T}ype
  $(A_{2n+1}+A_1+A_1)^{(1)}$ {A}rising {F}rom {C}luster {A}lgebra}.
\newblock {\em International Mathematics Research Notices}, 2022(9):6561--6607,
  12 2020.

\bibitem{OrmerodCM2012:MR2997166}
C.~M. Ormerod.
\newblock Reductions of lattice m{K}d{V} to {$q$}-{${\rm P}_{\rm VI}$}.
\newblock {\em Phys. Lett. A}, 376(45):2855--2859, 2012.

\bibitem{OVHQ2014:MR3215839}
C.~M. Ormerod, P.~H. van~der Kamp, J.~Hietarinta, and G.~R.~W. Quispel.
\newblock Twisted reductions of integrable lattice equations, and their {L}ax
  representations.
\newblock {\em Nonlinearity}, 27(6):1367--1390, 2014.

\bibitem{OVQ2013:MR3030178}
C.~M. Ormerod, P.~H. van~der Kamp, and G.~R.~W. Quispel.
\newblock Discrete {P}ainlev\'e equations and their {L}ax pairs as reductions
  of integrable lattice equations.
\newblock {\em J. Phys. A}, 46(9):095204, 22, 2013.

\bibitem{PainleveP1900:zbMATH02665472}
P.~Painlev{\'e}.
\newblock M{\'e}moire sur les {\'e}quations diff{\'e}rentielles dont
  l'int{\'e}grale g{\'e}n{\'e}rale est uniforme.
\newblock {\em Bull. Soc. Math. Fr.}, 28:201--261, 1900.

\bibitem{PainleveP1902:MR1554937}
P.~Painlev{\'e}.
\newblock Sur les \'equations diff\'erentielles du second ordre et d'ordre
  sup\'erieur dont l'int\'egrale g\'en\'erale est uniforme.
\newblock {\em Acta Math.}, 25(1):1--85, 1902.

\bibitem{PainleveP1907:zbMATH02647172}
P.~Painlev{\'e}.
\newblock Sur les {\'e}quations diff{\'e}rentielles du second ordre {\`a}
  points critiques fixes.
\newblock {\em C. R. Acad. Sci., Paris}, 143:1111--1117, 1907.

\bibitem{PNGR1992:MR1162062}
V.~G. Papageorgiou, F.~W. Nijhoff, B.~Grammaticos, and A.~Ramani.
\newblock Isomonodromic deformation problems for discrete analogues of
  {P}ainlev\'e equations.
\newblock {\em Phys. Lett. A}, 164(1):57--64, 1992.

\bibitem{RG1996:MR1399286}
A.~Ramani and B.~Grammaticos.
\newblock Discrete {P}ainlev\'e equations: coalescences, limits and
  degeneracies.
\newblock {\em Phys. A}, 228(1-4):160--171, 1996.

\bibitem{RGTT2001:MR1838017}
A.~Ramani, B.~Grammaticos, T.~Tamizhmani, and K.~M. Tamizhmani.
\newblock Special function solutions of the discrete {P}ainlev\'e equations.
\newblock {\em Comput. Math. Appl.}, 42(3-5):603--614, 2001.
\newblock Advances in difference equations, III.

\bibitem{SakaiH2001:MR1882403}
H.~Sakai.
\newblock Rational surfaces associated with affine root systems and geometry of
  the {P}ainlev\'e equations.
\newblock {\em Comm. Math. Phys.}, 220(1):165--229, 2001.

\bibitem{TakenawaT2003:MR1996297}
T.~Takenawa.
\newblock Weyl group symmetry of type {$D^{(1)}_5$} in the {$q$}-{P}ainlev\'e
  {V} equation.
\newblock {\em Funkcial. Ekvac.}, 46(1):173--186, 2003.

\bibitem{TGCR2004:MR2058894}
K.~M. Tamizhmani, B.~Grammaticos, A.~S. Carstea, and A.~Ramani.
\newblock The {$q$}-discrete {P}ainlev\'e {IV} equations and their properties.
\newblock {\em Regul. Chaotic Dyn.}, 9(1):13--20, 2004.

\bibitem{TsudaT2006:MR2247459}
T.~Tsuda.
\newblock Tropical {W}eyl group action via point configurations and
  {$\tau$}-functions of the {$q$}-{P}ainlev\'e equations.
\newblock {\em Lett. Math. Phys.}, 77(1):21--30, 2006.

\bibitem{TsudaT2010:MR2563787}
T.~Tsuda.
\newblock On an integrable system of {$q$}-difference equations satisfied by
  the universal characters: its {L}ax formalism and an application to
  {$q$}-{P}ainlev\'e equations.
\newblock {\em Comm. Math. Phys.}, 293(2):347--359, 2010.

\bibitem{WalkerAJ:thesis}
A.~Walker.
\newblock Similarity reductions and integrable lattice equations.
\newblock {\em Ph.D. Thesis, University of Leeds}, 2001.

\bibitem{WMTB1976:PhysRevB.13.316}
T.~T. Wu, B.~M. McCoy, C.~A. Tracy, and E.~Barouch.
\newblock Spin-spin correlation functions for the two-dimensional {I}sing
  model: {E}xact theory in the scaling region.
\newblock {\em Phys. Rev. B}, 13:316--374, Jan 1976.

\end{thebibliography}
\end{document}